\documentclass[runningheads]{llncs}

\usepackage{amsfonts,amsmath, amssymb}
\usepackage{empheq}
\usepackage{mathpartir}
\usepackage{semantic}
\usepackage{diagbox}
\usepackage{listings}
\usepackage{bbding}

\newcommand{\drule}[3] {\inferrule*[left={#1}]{#2}{#3}}
\newcommand{\sysstep}{\Rightarrow}
\newcommand{\tsym}[1]{\:\mbox{\texttt{#1}}\:}
\newcommand{\ntsym}[1]{\:\langle\mbox{\emph{#1}}\rangle\:}

\newcommand*\widefbox[1]{\fbox{\hspace{1em}#1\hspace{1em}}}

\newenvironment{bnf}{%
    \scriptsize
    \setkeys{EmphEqEnv}{align*}%
    \setkeys{EmphEqOpt}{box=\widefbox}%
    \EmphEqMainEnv%
}{
    \endEmphEqMainEnv
}

\usepackage[T1]{fontenc}
\usepackage[hyphens]{url}
\begin{document}

\title{From Monolithic to Compositional: A Compositional Operational Semantics for Crystality}

\titlerunning{A Compositional Operational Semantics for Crystality}

\author{Ziyun Xu\inst{1}\orcidID{0009-0003-7824-6502} \and
Hao Wang\inst{2}\orcidID{0000-0003-3557-6301} \and
Meng Sun\inst{1}\orcidID{0000-0001-6550-7396}\inst{ (}\Envelope\inst{)}}

\authorrunning{Z. Xu et al.}

\institute{School of Mathematical Sciences, Peking University, China \and
International Digital Economy Academy (IDEA), China\\
\email{xuziyun@pku.edu.cn, wanghao2020@idea.edu.cn, sunm@pku.edu.cn}}

\maketitle

\begin{abstract}
Parallel execution has become a key approach to improving blockchain scalability, but the lack of formal semantics for smart contract languages in such settings makes rigorous reasoning difficult. Crystality is a smart contract language designed for parallel EVMs, supporting scoped state and asynchronous relay across execution engines. This paper introduces a compositional operational semantics for Crystality. Unlike the original monolithic semantics, the new semantics decomposes the system into engine components and a global component, making the structure of parallel execution explicit. The compositional formulation enables simple proofs of key structural properties, including locality, global isolation, and strong commutativity of independent local steps. Furthermore, we prove that the compositional semantics is semantically equivalent to the original one via a transaction-level bisimulation theorem based on encoding and decoding functions between configurations, and two code-level bisimulation theorems for local and global execution.

\keywords{Operational Semantics \and Compositional Semantics \and Blockchain \and Smart Contracts \and Parallel Execution \and Bisimulation.}

\end{abstract}

\section{Introduction}

Smart contracts enable decentralized applications to operate autonomously on blockchain platforms. As the demand for scalability intensifies \cite{sanka2021systematic}, parallel execution environments, such as parallel Ethereum Virtual Machines (EVMs) \cite{Solana,Aptos,Sui}, have emerged as a promising solution to overcome the inherent throughput limitations of sequential smart contract execution \cite{anjana2019efficient,chen2021forerunner,garamvolgyi2022utilizing,gelashvili2023block,qi2023smart}. However, the lack of formal semantics \cite{HH,WG} for smart contract languages in these parallel settings poses new challenges, particularly in capturing cross-engine interactions and reasoning about distributed state updates.

Crystality \cite{Crystality2025} is a smart contract programming language designed for parallel EVMs. It introduces two key abstractions: {\em Programmable Contract Scope}, which enables fine-grained state partitioning, and {\em Asynchronous Functional Relay}, which supports deferred cross-engine interactions, thereby enabling parallel execution and decentralized control flow.

In \cite{xu2025operational}, we presented an operational semantics \cite{Plotkin} for Crystality that formally defined its behavior and enabled the verification of basic correctness properties via Rocq/Coq \cite{huet1997coq}. While this semantics establishes the first formal foundation for Crystality, it has a key limitation:  the semantics is \emph{monolithic}: the entire system state, including engine states, mempools, and the global state, is encoded in a single configuration. As a result, local and global effects are tightly intertwined in the semantic
rules. The behavior of an individual engine cannot be described independently of the global system configuration. This design makes it difficult to reason modularly about the behavior of individual engines or to establish system-level properties that depend on the separation between local and global computations.

To address this limitation, this paper introduces a \emph{compositional
operational semantics} \cite{abraham2003compositional,cheng1992compositional,mosses2004modular} for Crystality. In the new semantics, the system configuration is decomposed into a collection of \emph{local components}, each representing an engine, together with a distinct \emph{global component} responsible for global transactions. Relay transactions are represented explicitly and propagated through the mempool, while system-level execution composes the behavior of the components. This design exposes the modular structure of Crystality programs and enables local reasoning about engine behavior.

A crucial requirement for such a redesign is that it preserves the meaning of programs. To establish this formally, we define encoding and decoding functions that translate configurations between the original and the compositional semantics. Using these mappings, we prove a \emph{transaction-level bisimulation} theorem which shows that the two semantics are equivalent.

In addition to the equivalence result, we prove several structural properties of the compositional semantics. These include locality of local steps, isolation of global execution, and strong commutativity of independent local steps. These properties are difficult to express directly in the original monolithic semantics but follow naturally from the compositional structure.

The contributions of this paper can be summarized as follows:
\begin{itemize}
\item We introduce a compositional operational semantics for Crystality.
\item We prove a transaction-level bisimulation theorem establishing semantic equivalence between the original non-compositional operational semantics and the proposed compositional operational semantics.
\item We show that the compositional semantics enables clear statements and
proofs of key properties such as locality, global isolation, and strong commutativity of independent local steps.
\end{itemize}
The rest of the paper is organized as follows.
Section~\ref{sec:example} introduces the core features of Crystality. 
Section~\ref{sec:semantics} presents the compositional operational semantics and demonstrates several properties that are 
transparent in the compositional framework.
Section~\ref{sec:correspondence} establishes its correspondence with the
original semantics via bisimulation.
Section~\ref{sec:related} discusses related work, and
Section~\ref{sec:conclusion} concludes.

\section{Crystality}
\label{sec:example}
Crystality is a programming language designed for parallel execution of smart contracts across multiple engines. Instead of assuming a sequential execution context, the language distinguishes different scopes of computation and provides two mechanisms that structure interactions in a concurrent environment. 

\paragraph{Scoped state.}
Crystality programs manipulate state variables that belong to different
\emph{scopes}. The language distinguishes three scopes:
\begin{itemize}
\item \textbf{Address scope}.
State associated with a specific user address. Such variables can be read and updated only by functions executed for that address.
\item \textbf{Engine scope}.
State locally shared in an engine. These variables are visible only within the corresponding engine.
\item \textbf{Global scope}.
State shared by all engines. Global variables represent system-wide information and can be modified only by global functions.
\end{itemize}
The scope discipline restricts how state can be accessed. Local computations may read global variables but cannot modify them directly.

\paragraph{Relay mechanism.}
Crystality provides a primitive \( \texttt{relay@target f(args)} \)
that asynchronously schedules a function invocation at another scope.
Depending on the target, the relay may generate a transaction to be executed by another address, another engine, or at the global level.
The relay mechanism allows a local computation to trigger subsequent actions without blocking the current execution. Relay-generated transactions are placed into mempools associated with the engines and then be selected and executed in subsequent blocks.

Execution in Crystality proceeds in blocks, where each block contains a
sequence of transactions. Since Crystality does not support user-initiated \tsym{global} transactions, any \tsym{global} transaction must be relayed from a previous transaction. Relay transactions needed to be executed first. Therefore, a key convention of Crystality is that \tsym{global} transactions are executed before any \tsym{address} or \tsym{engine} transactions in the same block. The execution of a block therefore follows two phases.
\begin{itemize}
\item \textbf{Global phase}. The global component executes global transactions. The global state may change and the updated global state is propagated to all engines. The mempools may grow due to newly generated relay transactions.
\item \textbf{Local phase}. Once the global transactions of the block have been completed, the global component remains unchanged, while the engines execute their local transactions independently in an interleaving manner.
\end{itemize}

\begin{figure}[t]
\centering
\begin{minipage}{0.9\linewidth}
\begin{verbatim}
contract Bank {
  int @address balance;
  int @global total;
  function transfer(address payee, int amount) @address {
    if (amount <= balance) then {
      balance := balance - amount;
      relay @ payee deposit(amount);
      relay @ global updateTotal(amount);
    } else { skip }
  }
  function deposit(int amount) @address {
    balance := balance + amount;
  }
  function updateTotal(int amount) @global {
    total := total + amount;
  }
}
\end{verbatim}
\end{minipage}
\caption{A simplified banking contract}
\label{fig:bank}
\end{figure}

Figure~\ref{fig:bank} shows a simplified banking contract that maintains account balances and records a global aggregate of transferred funds. Each user account has an instance of variable \texttt{balance} recording the funds associated with the user's address. The global variable \texttt{total} aggregates the amounts transferred across the entire system. The function \texttt{transfer} is executed in the \texttt{address} scope. It deducts tokens from the sender's \texttt{balance} and relays two operations: a \texttt{deposit} to the recipient and an update to the global variable \texttt{total}.

Assume there are two execution engines. Let \texttt{a,d} be addresses managed by engine \texttt{1}, and \texttt{b,c} be addresses managed by engine \texttt{2}. Suppose the current block contains exactly two user-initiated transactions:
\(\texttt{transfer(b,3)} \text{ sent by } \texttt{a}\) and
\(\texttt{transfer(d,2)} \text{ sent by } \texttt{c}\).
When they execute, each \texttt{transfer} deducts tokens from
the sender and produces two relay calls. As a result, relay transactions for \texttt{deposit} and \texttt{updateTotal} are generated and placed into the mempools of the corresponding engines.

In the next block, these relay transactions are executed in two phases. First, the global transactions are executed which update the global variable \texttt{total}. After all global transactions finish, the local transactions are executed which update the balances of the corresponding users.

This example highlights two central aspects of Crystality's execution model: the separation between local computation and global state updates, 
and the deferred execution of relay-generated transactions. These characteristics motivate the design of compositional operational semantics presented in the next section.

\section{Compositional Operational Semantics}
\label{sec:semantics}

We present the compositional operational semantics of Crystality at two levels. The first level presents the semantics of components. Since the execution of transactions within each engine is independent of other engines, each engine can be treated as a relatively autonomous component. Additionally, we treat the global state as a distinct component and refer to it as the 'global component'. Global state variables and global functions are exclusively associated with this global component. The second level encompasses the semantics of the entire system. We have defined compositional rules that integrate the semantics of individual components, thereby deriving the overall system semantics.

In the following, we first introduce the notations used throughout the semantics. We then present the rules for statement execution, expression evaluation, and transaction execution. Finally, we describe how the semantic rules of components are composed to yield the overall semantics of the entire system.

\subsection{Notations}

Let $n$ and $k$ denote the number of execution engines and the number of addresses managed by each engine. The set $\mathbb{A}$ represents all possible memory addresses, while $\mathbb{B}$ denotes the set of byte values. A mapping of the form $\mathbb{A} -> \mathbb{B}$ is used to model storage, associating each memory address with its stored byte contents. $\mathbb{ID}$ refers to the set of all variable identifiers and $\mathbb{IDF}$ denotes the set of function identifiers. Additionally, $\mathbb{T}$ stands for the collection of all variable types.

We define the overall configuration of the system as
$(\Gamma,\Omega,\gamma_G)$, where $\Gamma = (\gamma_1,...,\gamma_n)$. Each $\gamma_i = (\sigma_i,Prog_i)$ represents the configuration of the $i$-th engine, and 
$\gamma_G = (\sigma_G,Prog_G)$ denotes the configuration of the global component. 

$Prog_i$ and $Prog_G$ are the sets of program to be executed. The state of the $i$-th engine is given by $\sigma_i = (\Psi_i,M_i,G_i)$, where $\Psi_i$ captures the storage of state variables (including \tsym{engine} variables and \tsym{address} variables) , $M_i$ is the memory stack used for function calls and temporary variables, and $G_i$ corresponds to the view of the storage of \tsym{global} state variables $G$ from the $i$-th engine's perspective. We introduce $G_i$ To allow each engine to access the global state variables while preserving the relative independence of engines executing in parallel. The state of the global component is defined as $\sigma_G = (G,M_G)$, where $G : \mathbb{A} -> \mathbb{B}$ is the storage of \tsym{global} state variables, and $M_G$ is the memory stack used by the global component. $\Psi_i$ within each engine further decomposes into $\Psi_i = (\Psi_{i,1},\Psi_{i,2},...,\Psi_{i,k},\Psi_{i,s})$, where $\Psi_{i,j}: \mathbb{A} -> \mathbb{B}\, (j=1,...,k)$ holds \tsym{address} state variables specific to the $j$-th address within the $i$-th engine, and $\Psi_{i,s}: \mathbb{A} -> \mathbb{B}$ represents \tsym{engine} state variables within the $i$-th engine.

The memory stack $M_i$ maintains temporary variables of the $i$-th engine, which contains a stack to model new scopes when calling a function. $M_i$ is a stack of functions, each modeled as a mapping $\mathbb{A} -> \mathbb{B}$. The top layer of the stack indicates the memory available in the current function. The stack provides operations such as $top(M_i)$ to access the top layer, $removetop(M_i)$ to remove it, and $addtop(M_i,Y)$ to push a new layer $Y$ to $M_i$. Every layer carries associated contextual information, including a scope attribute denoted by $.scope$ and a return variable identifier denoted by $.rt$, both of which must be defined when a layer is added.

We define the memory pool (mempool) configuration $\Omega = (\Omega_1,...,\Omega_n,\Omega_G)$, where each mempool $\Omega_i$ is associated with the $i$-th engine and contains pending relay transactions, while $\Omega_G$ is the virtual mempool handling global relay transactions.

Finally, we use $\mathcal{E}[\![(\sigma_{\iota},exp_{\iota})]\!]$ for the evaluation of expressions within a given state, where the subscript $\iota$ can refer to any engine index from $1$ to $n$ or to the global component $G$.

For each memory or storage mapping $f: \mathbb{A} -> \mathbb{B}$, we define two auxiliary functions: the name space $N_f: \mathbb{ID} -> \mathbb{A} \cup \mathit{None}$, which maps variable identifiers to their corresponding memory addresses (or indicates that an identifier is undefined), and the type space $T_f: \mathbb{ID} -> \mathbb{T} \cup \mathit{None}$ , which associates each identifier with its declared type. To retrieve the value stored at a particular position, we use the notation $[addr]_{store}^{size}$, which denotes the value of length $size$ located at position $addr$ within the storage $store$. The size of variables of type $t$ is given by $size(t)$.

Additionally, for function identifiers, we introduce several auxiliary functions that retrieve relevant information about the function.
\begin{itemize}
\item $\Lambda_{scope} : \mathbb{IDF} -> \{global,engine,address\}$ determines the declared scope of a function. 
\item $\Lambda_{paraname} : \mathbb{IDF} -> {\mathbb{ID}}^{*}$ returns the ordered list of parameter names associated with each function identifier. 
\item $\Lambda_{paratype} : \mathbb{IDF} -> {\mathbb{T}}^{*}$ specifies the types of those parameters. 
\item $\Lambda_{body} : \mathbb{IDF} -> Prog$ maps each function identifier to its program body.
\item $\Lambda_{rttype} : \mathbb{IDF} -> \mathbb{T} \cup \mathit{None}$  indicates the return type of the function, if any.
\end{itemize}

Given a state $\sigma$, the notation $\sigma'$ represents the state after applying a particular transformation to $\sigma$. In addition, symbols such as $\hat{\sigma}$, $\bar{\sigma}$, or indexed forms like  $\sigma^{(1)}$ and $\sigma^{(2)}$ denote intermediate states arising during a sequence of modifications. The name space and type space associated with a function are auxiliary but integral components of its definition. Consequently, any change to the name space or type space also implies a change in the function itself.

Furthermore, there is an implicit 'zooming' mechanism in the notation. For example, if a component such as $\Psi_{i,s}$ is updated to $\Psi_{i,s}'$, then the enclosing state $\sigma_i$ is implicitly updated to $\sigma_i'$, while all other components remain unchanged. This convention allows the formal definitions to remain concise and more readable without requiring explicit expansion of all nested structures each time a modification is described.

In the semantic rules, the arrows are annotated with superscripts and subscripts. There are two types of subscripts: $l$ and $g$, indicating whether the rule is executed in a single engine or by the global component. The superscripts on the arrows also come in two forms: $\tau$ and $?(\omega_1,...,\omega_n,\omega_G)$ which differentiate whether the corresponding step involves interaction with other components. $\tau$ indicates that the step does not generate any relay calls, while $?(\omega_1,...,\omega_n,\omega_G)$ signifies that the step sends relay transactions to various memory pools.

\subsection{Semantic Rules of Statements}

We now present the first-level rules defining the compositional operational semantics. We begin by describing how state variables are declared during contract deployment.

When a state variable is declared with \tsym{address} scope, the system creates a separate instance of that variable for every valid address managed by each engine. Storage space is allocated for each instance according to the variable's declared type. In rule \textsc{SVDa}, the first premise asserts that, prior to initialization, the \tsym{address} variable named $id$ is undefined in $\sigma_i$. The second premise updates the type space in the resulting state $\sigma_i'$, recording that $id$ has the specified type $Type$. The third step allocates memory for $id$ in the name space of $\sigma_i'$. Finally, the fourth step initializes the allocated memory with the default value corresponding to $Type$. Because there is one such instance per address, these operations must be performed for all $k$ addresses within the engine.
\begin{gather*}
\drule{SVDa}{ N_{\Psi_{i,j}}(id) = \mathit{None}  ,\ 1\le j\le k \\ T_{\Psi_{i,j}'}(id) = Type ,\ 1\le j\le k  \\ N_{\Psi_{i,j}'}(id) = allocate\_new(Type,\Psi_{i,j}) ,\ 1\le j\le k\\ [N_{\Psi_{i,j}'}(id)]_{\Psi_{i,j}'}^{size(Type)} = init(Type) ,\ 1\le j\le k  }{ (\sigma_i, @address\ Type\ id;) ->_l^{\tau} (\sigma_i', \cdot) }
\end{gather*}
The rule \textsc{SVDg} requires special attention. In the compositional semantics, we conceptualize accesses to \tsym{global} state variables and invocations of \tsym{global} functions as operating within an independent, virtual component separate from any specific engine. This abstraction is a modeling convenience for clarity and modularity, and does not necessarily correspond to the system's implementation.
\begin{gather*}
\drule{SVDg}{N_G(id) = \mathit{None}\\ T_{G'}(id) = Type \\\\ N_{G'}(id)=allocate\_new(Type,G)\\ [N_{G'}(id)_{G'}^{size(Type)}] = init(Type)  }{(\sigma_G,@global\ Type\ id;) ->_g^{\tau} (\sigma_G',\cdot)}
\end{gather*}
The rules for declaring temporary variables require a distinction based on the location where they occur. A statement is local if it is within an \tsym{address} or \tsym{engine} function, affecting the memory stack of the corresponding engine. Conversely, if the statement is within a \tsym{global} function, it should affect the memory of the global component.
\begin{gather*}
\drule{TVDl}{top(M_i).scope \ne global \\ N_{top(M_i)}(id) = \mathit{None} \\ T_{top(M_i)'}(id) = Type \\ N_{top(M_i)'}(id) = allocate\_new(Type,top(M_i)) \\ [N_{top(M_i)'}(id)]_{top(M_i)'}^{size(Type)} = init(Type)}{(\sigma_i, Type\ id;) ->^{\tau}_l (\sigma_i',\cdot)}
\\[0.5em]
\drule{TVDg}{top(M_G).scope = global \\ N_{top(M_G)}(id) = \mathit{None} \\ T_{top(M_G)'}(id) = Type \\ N_{top(M_G)'}(id) = allocate\_new(Type,top(M_G)) \\ [N_{top(M_G)'}(id)]_{top(M_G)'}^{size(Type)} = init(Type)}{(\sigma_G, Type\ id;) ->^{\tau}_g (\sigma_G',\cdot)}
\end{gather*}
We illustrate assignment statements using the rule \textsc{SASSIGNaa}. The rule checks the scope of the function performing the assignment and the scope of the state variable, and applies when both scopes are \tsym{address}. The expression $exp$ here may be a function call with a return value. The states $\sigma_i$ and $\hat{\sigma_i}$ denote the configuration of the current engine before and after the evaluation of $exp$, respectively.

\begin{gather*}
\drule{SASSIGNaa}{\mathcal{E}[\![(\sigma_i,  exp)]\!] ->_l^{?(\omega_1,...,\omega_n,\omega_G)} (\hat{\sigma_i},v)   \\\\ top(M_i).scope = (address,j) \\ top(\hat{M_i}).scope = (address,j) \\\\ N_{\hat{\Psi_{i,j}}}(id) \ne \mathit{None} \\ [N_{\hat{\Psi_{i,j}}'}(id)]^{size(T_{\hat{\Psi_{i,j}}'}(id))}_{\hat{\Psi_{i,j}}'}=v }{(\sigma_i,  id = exp;) ->_l^{?(\omega_1,...,\omega_n,\omega_G)} (\hat{\sigma_i}',\cdot)}
\end{gather*}

We use rule \textsc{IFUNgg} as an example for function calls. The key point is to 'record' any relay calls generated during the process, and to ensure that all these steps are marked as global. By treating the global environment as a separate component, we can more easily define the rules for \tsym{global} function calls than in the original semantics. 
All state changes will occur within the global component because \tsym{global} function is not allowed to modify \tsym{engine} or \tsym{address} variables.
\begin{gather*}
\drule{IFUNgg}{\Lambda_{scope}(idf)=global \\ top(M_G).scope = global\\\\
\Lambda_{rettype}(idf) = \mathit{None} \\ \Lambda_{body}(idf) = Block \\\\
\Lambda_{paraname}(idf) =(id_1,...,id_t) \\ \Lambda_{paratype}(idf) = (T_1,...,T_t)\\\\
M_G^{(1)}=addtop(M_G,top(M_G))\\
top(M_G^{(1)}).scope = global \\\\
top(M_G^{(1)}).rt = \mathit{None} \\
M_G^{(3)} = removetop(M_G^{(2)}) \\\\
(\sigma_G^{(1)},T_1\ id_1 = exp_1;...;T_t\ id_t = exp_t;Block) ->_g^{?(\omega_1,...,\omega_n,\omega_G)} (\sigma_G^{(2)},\cdot) }{(\sigma_G,idf(exp_1,...,exp_t)) ->_g^{?(\omega_1,...,\omega_n,\omega_G)} (\sigma_G^{(3)},\cdot)}
\end{gather*}

We separate all memory pools from the state of each engine, treating them as distinct components. Since transactions in the memory pool do not affect the execution of the current transaction, this separation allows for a clearer representation of Crystality's semantics.

The engine or global component generating the relay transactions 'sends' them out. In practice, the relay transactions are propagated through the network, selected by nodes, and stored in memory pools, which is a process that may involve some delay. Therefore, we separate the sending and receiving of relay transactions into different rules.

We use the rule \textsc{RELAYal} as an example for relay calls. It handles the case where the relay target is a specific address. Evaluating the expression $exp$ yields a result $(r, j)$, which identifies the $j$-th address managed by the $r$-th engine. The function $idf$, along with its provided arguments, is then encapsulated into a relay transaction and dispatched to the engine responsible for the target address.
\begin{gather*}
\drule{RELAYal}{top(M_i).scope \ne global \\ \Lambda_{scope}(idf) = address \\\\ \mathcal{E}[\![(\sigma_i,exp)]\!] ->_l^{\tau} (\sigma_i,(r,j)) \\  \mathcal{E}[\![(\sigma_i,exp_l)]\!] ->_l^{\tau} (\sigma_i,v_l), 1 \le l \le t  \\\\
\omega_r= \{ (address,j,idf(v_1,...,v_t)) \} \\
\omega_\iota = \emptyset , \forall \iota \ne r }{(\sigma_i,relay@exp\ idf(exp_1,...,exp_t);) ->_l^{?(\omega_1,...,\omega_n,\omega_G)} (\sigma_i,\cdot)}
\end{gather*}

The sending of relay transactions is denoted by $?(\omega_1,...,\omega_n,\omega_G)$, while the receiving is denoted by $!(\omega_1,...,\omega_n,\omega_G)$.

\begin{gather*}
\drule{RELAYo}{\quad}{(\Omega_1,...,\Omega_n,\Omega_G) ->^{!(\omega_1,...,\omega_n,\omega_G)} (\Omega_1 \cup \omega_1,...,\Omega_n \cup \omega_n,\Omega_G \cup \omega_G)}
\end{gather*}

Due to space limitation, the rules for other statements are omitted here and provided in the appendix.

\subsection{Semantic Rules of Evaluations} 

The rules for evaluating expressions are also provided compositionally.
Due to space limitations, we do not present all the evaluation rules here. Instead, we focus on two representative ones, \textsc{ESIDag} and \textsc{ESIDgg}. In Crystality, the \tsym{address} and \tsym{engine} function can read \tsym{global} state variables. Therefore, when modeling the behavior of each engine, it is necessary to bind a space corresponding to all \tsym{global} variables within the state of each engine. However, we model \tsym{global} variables and \tsym{global} functions within a separate, virtual component. As a result, the corresponding entity within each engine's state is essentially what that engine perceives as '$G$', namely $G_i$ for the $i$-th engine. In the first level of the compositional semantics, we do not guarantee that the perceived $G_i$ across different engines, as well as the actual $G$, are identical. However, in the second level of the semantics, which concerns the overall system semantics, we define a corresponding composition operation that ensures they are identical.
\begin{gather*}
\drule{ESIDag}{top(M_i).scope = (address,j) \\ N_{G_i}(id) \ne \mathit{None} \\ [N_{G_i}(id)]^{size(T_{G_i}(id))}_{G_i}=v }{\mathcal{E}[\![(\sigma_i,id) ]\!]->_l^{\tau}(\sigma_i,v)}
\end{gather*}
\textsc{ESIDgg} corresponds to the case where a \tsym{global} function reads a \tsym{global} state variable. This can be directly accessed from $G$. In contrast to \textsc{ESIDag}, there is no need to go through the $G_i$ perceived by some engine.
\begin{gather*}
\drule{ESIDgg}{top(M_G).scope = global \\ N_G(id) \ne \mathit{None}\\
[N_G(id)]^{size(T_G(id))}_G=v }{\mathcal{E}[\![(\sigma_G,id) ]\!]->_g^{\tau}(\sigma_G,v)}
\end{gather*}

\subsection{Semantic Rules of Transactions}

Defining transaction semantics is central to smart contract languages. We model transactions as function calls (T-functions), abstracting away gas and signatures.

Since we first present the semantics at the component level, the rules for a transaction only involve the engine where the transaction occurs (or the global component if it is a \tsym{global} transaction). We introduce a subscript $t$ on the arrow to indicate that this step represents the complete execution of a transaction, specifically a T-function.

Unlike the operational semantics in \cite{xu2025operational}, the compositional semantics allows for a more precise characterization of transaction properties. This is achieved through system-level semantics, which is presented in the next subsection. 

We focus on the atomicity and inherent parallelism of transactions. A transaction is an atomic operation. That is, all (complete done) or nothing (zero effect). Therefore, we explicitly distinguish whether a computation is a complete transaction by the subscript $t$. On the other hand, transactions in different engines are executed in parallel. Each engine operates independently with its own state. The only interaction between engines is through message passing (relay transactions), which does not affect the execution of the current transaction. This parallelism is captured in the system-level semantics.

The rules for transactions are provided in the appendix.

\subsection{Semantic Rules of the System} 

We first observe that any rule annotated with the superscript $\tau$
can be regarded as a rule annotated with $?(\omega_1,...,\omega_n,\omega_G)$, by taking each $\omega_\iota$ to be the empty set.

We define a function $\mu(\Gamma,\Omega,\gamma_G) =_{df} G_1=G_2=...=G_n=G$ that ensures the corresponding components $G_i$ and $G$ in $\Gamma$ and $ \gamma_G$ are all equal, and another function $\rho(\Gamma_1,\Gamma_2)$ that ensures $\Gamma_1$ and $\Gamma_2$ are identical except for the components $G_i$ in each of them.

Due to the atomicity of transactions, we only permit the configuration before and after a T-function (a transaction) to constitute the system's configuration. This ensures that at the system level all transactions maintain atomicity.

The execution of a \tsym{global} T-function will only modify $\gamma_G$ with the resulting relay calls reflected in the update of $\Omega$. Before and after execution, the state of each engine remains unchanged, except that the $G_i$ perceived by each engine must be consistent with $G$.
\begin{gather*}
\drule{G}{\gamma_G ->_{g,t}^{?(\omega_1,...,\omega_n,\omega_G)} \gamma_G' \\  \Omega ->^{!(\omega_1,...,\omega_n,\omega_G)} \Omega' \\\\ \rho(\Gamma,\Gamma') \\ \mu(\Gamma,\Omega,\gamma_G) \\ \mu(\Gamma',\Omega',\gamma_G')}{(\Gamma,\Omega,\gamma_G) \sysstep_g (\Gamma',\Omega',\gamma_G') }
\end{gather*}
The execution of an \tsym{address} or \tsym{engine} T-function will only modify the configuration of the corresponding engine. The configurations of other engines and the configuration $\gamma_G$ remain unchanged.
According to the execution model, such an execution can only occur at the local phase, that is after all global transactions have been completed. Therefore, we require that $Prog_G = \cdot$.
\begin{gather*}
\drule{L}{\gamma_i ->_{l,t}^{?(\omega_1,...,\omega_n,\omega_G)} \gamma_i' \\  \Omega ->^{!(\omega_1,...,\omega_n,\omega_G)} \Omega'\\
\Gamma=(\gamma_1,...,\gamma_i,...,\gamma_n)\\ \Gamma'=(\gamma_1,...,\gamma_i',...,\gamma_n) \\ \gamma_G = (\sigma_G,\cdot) \\ \mu(\Gamma,\Omega,\gamma_G) \\ \mu(\Gamma',\Omega',\gamma_G)}{(\Gamma,\Omega,\gamma_G) \sysstep_l (\Gamma',\Omega',\gamma_G)}
\end{gather*}
The rules \textsc{Ga}, \textsc{Gs}, \textsc{La}, \textsc{Ls}, and \textsc{GL} describe the changes before and after several transactions. We express by $ \sysstep_l^{*} $ that there is a chain of local steps, by $ \sysstep_g^{*} $ that there is a chain of global steps, and by $ \sysstep_{gl}^{*} $ that there is a chain of global steps followed by local steps (See the execution convention). We use \textsc{Ga}, \textsc{Gs} and \textsc{GL} as examples.
\begin{gather*}
\drule{Ga}{ \mu(\Gamma,\Omega,\gamma_G)  }{(\Gamma,\Omega,\gamma_G) \sysstep_g^{*} (\Gamma,\Omega,\gamma_G) }
\\[0.5em]
\drule{Gs}{(\Gamma,\Omega,\gamma_G) \sysstep_g (\Gamma',\Omega',\gamma_G')  \\ (\Gamma',\Omega',\gamma_G') \sysstep_g^{*} (\Gamma'',\Omega'',\gamma_G'') }{(\Gamma,\Omega,\gamma_G) \sysstep_g^{*} (\Gamma'',\Omega'',\gamma_G'')}
\\[0.5em]
\drule{GL}{(\Gamma,\Omega,\gamma_G) \sysstep_g^{*} (\Gamma',\Omega',\gamma_G')  \\ (\Gamma',\Omega',\gamma_G') \sysstep_l^{*} (\Gamma'',\Omega'',\gamma_G') }{(\Gamma,\Omega,\gamma_G) \sysstep_{gl}^{*} (\Gamma'',\Omega'',\gamma_G')}
\end{gather*}

We present several properties that can be naturally stated and easily proved under the compositional semantics. The proofs are provided in the appendix.

\begin{theorem}[Locality]
\label{thm:locality}
For any system-level local step
\(
(\Gamma,\Omega,\gamma_G)\sysstep_l(\Gamma',\Omega',\gamma_G'),
\)
there exists a unique \(i \in \{1,\ldots,n\}\) such that \(\Gamma\) and
\(\Gamma'\) differ only at the \(i\)-th component. Moreover, \( \gamma_G'=\gamma_G, \) and \(\Omega'\) is obtained from \(\Omega\) by union with the relay sets produced in the step.
\end{theorem}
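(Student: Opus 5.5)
The proof proceeds by inversion on the system-level transition $\sysstep_l$. The only rule whose conclusion has the form $(\Gamma,\Omega,\gamma_G)\sysstep_l(\Gamma',\Omega',\gamma_G')$ is rule \textsc{L}. The closure rules \textsc{La}, \textsc{Ls}, \textsc{GL} conclude $\sysstep_l^{*}$ or $\sysstep_{gl}^{*}$, and rule \textsc{G} concludes $\sysstep_g$, so none of them can produce a single local step. Any derivation of the step therefore ends with an application of \textsc{L}. Fixing that last rule instance gives us an index $i$, a component transition $\gamma_i ->_{l,t}^{?(\omega_1,\ldots,\omega_n,\omega_G)} \gamma_i'$ and a mempool transition $\Omega ->^{!(\omega_1,\ldots,\omega_n,\omega_G)} \Omega'$. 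It also gives the syntactic constraints $\Gamma=(\gamma_1,\ldots,\gamma_i,\ldots,\gamma_n)$, $\Gamma'=(\gamma_1,\ldots,\gamma_i',\ldots,\gamma_n)$, and the fact that the global component on the right of the conclusion is literally the same $\gamma_G$ as on the left.

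From this the three claims are read off in order.

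\emph{Global component.} The conclusion of \textsc{L} uses the same metavariable $\gamma_G$ on both sides, so matching it against $(\Gamma',\Omega',\gamma_G')$ forces $\gamma_G'=\gamma_G$.

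\emph{Engines.} Comparing $\Gamma$ and $\Gamma'$ componentwise, every $\gamma_m$ with $m\neq i$ occurs unchanged. Hence $\Gamma$ and $\Gamma'$ can differ only at position $i$.

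\emph{Mempools.} The premise $\Omega ->^{!(\omega_1,\ldots,\omega_n,\omega_G)} \Omega'$ can only be derived by \textsc{RELAYo}, which has no premises. So $\Omega'=(\Omega_1\cup\omega_1,\ldots,\Omega_n\cup\omega_n,\Omega_G\cup\omega_G)$, and the $\omega$'s are exactly the relay sets labelling the component step of engine $i$.

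The main obstacle is the word \emph{unique}. As stated, uniqueness requires that $\Gamma$ and $\Gamma'$ actually differ at position $i$, that is, $\gamma_i'\neq\gamma_i$. Otherwise $\Gamma=\Gamma'$ and every index would trivially qualify. I would discharge this from the transaction rules in the appendix. A $->_{l,t}$ step consumes the T-function being executed, so the program part $Prog_i$ strictly shrinks (the executed transaction is removed from it). Hence $\gamma_i'=(\sigma_i',Prog_i')$ with $Prog_i'\neq Prog_i$.

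Consequently $i$ is the unique position at which $\Gamma$ and $\Gamma'$ differ. Any other index $i''$ with the same property would have to coincide with $i$, because position $i$ does differ. If the appendix transaction rules did not guarantee that $Prog_i$ changes, I would fall back to reading uniqueness as uniqueness of the index determined by the last \textsc{L}-instance. That index is fixed by the derivation, since \textsc{L} designates exactly one $\gamma_i$ that takes a $->_{l,t}$ step.

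The side conditions $\mu(\Gamma,\Omega,\gamma_G)$, $\mu(\Gamma',\Omega',\gamma_G)$ and $\gamma_G=(\sigma_G,\cdot)$ are not needed for the argument. They only confirm that the $G_i$ components stay consistent with the unchanged $G$.
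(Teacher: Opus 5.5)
Your proof is correct and is essentially the paper's argument. Inversion on the system rule yields \textsc{L}, from which $\gamma_G'=\gamma_G$, the single changed engine and the \textsc{RELAYo} union are read off directly; your extra observation that every $\rightarrow_{l,t}$ rule (\textsc{IFUNta}, \textsc{IFUNts}, \textsc{SEQtl}) strictly changes $Prog_i$, hence $\gamma_i'\neq\gamma_i$, usefully justifies the uniqueness claim that the paper's one-line proof leaves implicit.

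One small fix, which the paper's proof also needs: the appendix has rule \textsc{L1}, whose conclusion is likewise a $\sysstep_l$ step, with a $\tau$-labelled $\rightarrow_{l,t}$ premise and $\Omega'=\Omega$. Your inversion must list it as a second case, though it is handled identically with all relay sets empty.
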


\begin{theorem}[Global isolation]
\label{thm:global-isolation}
For any system-level global step
\(
(\Gamma,\Omega,\gamma_G)\sysstep_g(\Gamma',\Omega',\gamma_G'),
\)
the engine components may change only in their copies of the global
state. And \(\Omega'\) is obtained from \(\Omega\) by union with the relay
sets produced in the step.
\end{theorem}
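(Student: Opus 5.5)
The proof will be a direct inversion on the derivation of the system-level global step. The relation $\sysstep_g$ is introduced by exactly one rule, \textsc{G}. So any derivation of $(\Gamma,\Omega,\gamma_G)\sysstep_g(\Gamma',\Omega',\gamma_G')$ must end with an application of \textsc{G}. I will read off its premises: a component step $\gamma_G ->_{g,t}^{?(\omega_1,\ldots,\omega_n,\omega_G)} \gamma_G'$, a mempool step $\Omega ->^{!(\omega_1,\ldots,\omega_n,\omega_G)} \Omega'$ carrying the \emph{same} label, the side condition $\rho(\Gamma,\Gamma')$, and the consistency conditions $\mu(\Gamma,\Omega,\gamma_G)$ and $\mu(\Gamma',\Omega',\gamma_G')$. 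Both claims of the theorem should follow from these premises alone.

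For the first claim, I unfold the definition of $\rho$. By definition, $\rho(\Gamma,\Gamma')$ states that $\Gamma$ and $\Gamma'$ are identical except possibly at the components $G_i$. So for every $i$, writing $\gamma_i=((\Psi_i,M_i,G_i),Prog_i)$ and $\gamma_i'=((\Psi_i',M_i',G_i'),Prog_i')$, we get $\Psi_i'=\Psi_i$, $M_i'=M_i$ and $Prog_i'=Prog_i$. Only $G_i$ may differ. I can additionally pin down what the copies become: by $\mu(\Gamma',\Omega',\gamma_G')$, each $G_i'=G'$, where $G'$ is the global storage in $\sigma_G'$. This refinement is worth stating, since it shows that the only permitted change is resynchronisation with the new global state.

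For the second claim, I invert the mempool premise. The only rule deriving a transition labelled $!(\ldots)$ on mempools is \textsc{RELAYo}, which has no premises. Hence $\Omega'=(\Omega_1\cup\omega_1,\ldots,\Omega_n\cup\omega_n,\Omega_G\cup\omega_G)$. The label shared with the component step $\gamma_G ->_{g,t}^{?(\omega_1,\ldots,\omega_n,\omega_G)}\gamma_G'$ identifies the $\omega_\iota$ as exactly the relay sets produced by the global transaction. If that step was $\tau$-labelled, I use the convention stated before rule \textsc{G}, which reads $\tau$ as $?(\emptyset,\ldots,\emptyset)$; then $\Omega'=\Omega$, which is the degenerate case of the claim.

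I expect the main obstacle to be justifying that the two inversions are sound, namely that no other rule can conclude $\sysstep_g$ or a mempool $!$-transition. This requires checking the full rule set, including the appendix rules, to confirm that \textsc{G} and \textsc{RELAYo} are the unique rules with these conclusion forms. The starred relations \textsc{Ga} and \textsc{Gs} conclude $\sysstep_g^{*}$, not $\sysstep_g$, so they do not interfere. I also need to make sure the $\tau$-to-empty-label convention is applied uniformly, so that the labels on the two premises of \textsc{G} really coincide. Apart from this bookkeeping, the argument is immediate from the structure of rule \textsc{G}, in the same way that Theorem~\ref{thm:locality} follows from rule \textsc{L}.
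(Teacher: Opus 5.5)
Your proposal is correct and follows the paper's own argument: invert the step on rule \textsc{G}, read the claim about the engine components off $\rho(\Gamma,\Gamma')$ (with $\mu$ giving the resynchronised copies), and read the claim about the mempools off \textsc{RELAYo}. The rule-set check you planned will turn up one more rule in the appendix, \textsc{G1}, which concludes $\sysstep_g$ from a $\tau$-labelled global transaction with $\Omega$ unchanged and the same premise $\rho(\Gamma,\Gamma')$, so that case holds trivially as a union with empty relay sets (the paper's proof likewise names only \textsc{G}).
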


\begin{theorem}[Strong commutativity of independent local steps]
\label{thm:strong-comm-local}
Suppose
\(
C \sysstep_l C_i
\) and \(
C \sysstep_l C_j
\)
are two system-level local steps such that the first step updates
engine \(i\) and the second step updates engine \(j\), where \(i\neq j\).
Then there exists a unique configuration \(C'\) such that
\(
C_i \sysstep_l C'
\) and \(
C_j \sysstep_l C'.
\)
Equivalently, the two independent local steps commute.
\end{theorem}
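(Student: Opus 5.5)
We prove the statement by inverting rule \textsc{L} on both given steps and then building $C'$ by hand. Write $C=(\Gamma,\Omega,\gamma_G)$ with $\Gamma=(\gamma_1,\dots,\gamma_n)$. By Theorem~\ref{thm:locality}, or directly by inverting \textsc{L}, the step $C\sysstep_l C_i$ arises from a component step $\gamma_i ->_{l,t}^{?(\omega^i_1,\dots,\omega^i_n,\omega^i_G)} \gamma_i'$. Hence $C_i=(\Gamma[i\mapsto\gamma_i'],\Omega\cup\omega^i,\gamma_G)$, where $\Omega\cup\omega$ denotes the componentwise union from \textsc{RELAYo}. Symmetrically, $C\sysstep_l C_j$ arises from $\gamma_j ->_{l,t}^{?(\omega^j)}\gamma_j'$, and $C_j=(\Gamma[j\mapsto\gamma_j'],\Omega\cup\omega^j,\gamma_G)$. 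Both derivations also give $\gamma_G=(\sigma_G,\cdot)$ and $\mu(C)$. We set
\[
C' = (\Gamma[i\mapsto\gamma_i'][j\mapsto\gamma_j'],\ \Omega\cup\omega^i\cup\omega^j,\ \gamma_G).
\]

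To show $C_i\sysstep_l C'$, we apply rule \textsc{L} at engine $j$, reusing the same component derivation $\gamma_j ->_{l,t}^{?(\omega^j)} \gamma_j'$. This is legitimate because $i\neq j$: the $j$-th component of $\Gamma[i\mapsto\gamma_i']$ is still $\gamma_j$, and component-level derivations mention only $\gamma_j$, never $\Omega$ or other engines. The mempool premise is an instance of \textsc{RELAYo}, which always fires, giving $(\Omega\cup\omega^i)\cup\omega^j$. The condition $\gamma_G=(\sigma_G,\cdot)$ holds unchanged.

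The consistency side conditions are the part that needs care. We have $\mu(C_i)$ as a premise of the first derivation. The condition $\mu(C')$ requires $G_j'=G$. This holds because $\mu(C_j)$, a premise of the second derivation, already states that the $G_j$-component of $\gamma_j'$ equals $G$, and $\mu$ constrains only the $G$-copies, which are identical in $C_j$ and $C'$ at index $j$. The case $C_j\sysstep_l C'$ is symmetric, using commutativity and associativity of set union so that both routes yield $\Omega\cup\omega^i\cup\omega^j$.

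Uniqueness is the main obstacle. Any $C''$ with $C_i\sysstep_l C''$ and $C_j\sysstep_l C''$ need not, a priori, come from the same pair of transactions. I would argue as follows, reading the intended statement as uniqueness of the commuting square, i.e.\ the completion obtained by performing the other given step. Once the second step is fixed to be the same component step $\gamma_j->\gamma_j'$ (respectively $\gamma_i->\gamma_i'$), the target of rule \textsc{L} is a function of the premises. $\Gamma'$ is determined by the replaced component. $\Omega'$ is determined by \textsc{RELAYo}, which is a total function of $\Omega$ and the label. $\gamma_G$ is unchanged. Hence $C'$ is uniquely determined.

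If literal uniqueness over all local steps were intended, it would fail in general, since several transactions may be enabled. So I would state explicitly that uniqueness refers to the residual steps, and, if needed, appeal to determinism of the component relation $->_{l,t}$ for a fixed transaction in $Prog_j$.
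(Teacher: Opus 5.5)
Your existence argument is the paper's argument: invert \textsc{L} (equivalently, use Theorem~\ref{thm:locality}), note that the two updates touch different engines, replay the other component derivation, and use commutativity and associativity of union for the mempool. Your explicit check of $\gamma_G=(\sigma_G,\cdot)$ and of $\mu(C')$ via $\mu(C_i)$ and $\mu(C_j)$ is correct, and more careful than the paper.

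The problem is in the uniqueness part. The paper's own one-line justification only shows that both orders give the same configuration, which is your weakened reading. But your claim that the literal statement ``would fail in general, since several transactions may be enabled'' is false, and you do not need to weaken the theorem. Suppose $C_i\sysstep_l C''$ updates engine $a$ and $C_j\sysstep_l C''$ updates engine $b$.

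Every component transaction step strictly shortens the program: \textsc{IFUNta} and \textsc{IFUNts} consume the call, and \textsc{SEQtl} drops the first T-function. So neither $\gamma_i'$ nor any $\to_{l,t}$-successor of $\gamma_i'$ equals $\gamma_i$. Read off the $i$-th component of $C''$ along both routes. If $b\neq i$, it is $\gamma_i$ via $C_j$, but $\gamma_i'$ or a successor of $\gamma_i'$ via $C_i$, which is impossible. Hence $b=i$, and symmetrically $a=j$.

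Since $i\neq j$, $C''$ has engine part $\Gamma[i\mapsto\gamma_i'][j\mapsto\gamma_j']$ and global component $\gamma_G$. The residual steps are forced to be $\gamma_j\to_{l,t}\gamma_j'$ and $\gamma_i\to_{l,t}\gamma_i'$. The other transactions enabled at $C_i$ are therefore irrelevant, because their results cannot also be reached in one step from $C_j$.

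The only remaining freedom is the relay label on these residual steps. So literal uniqueness reduces exactly to the label of a $\to_{l,t}$ step being determined by its source and target. That label-determinism is the real hypothesis behind ``unique'', rather than a restriction to residual steps. It is not automatic in the rules as written: \textsc{ETIDl} and \textsc{ESIDaa} can both fire on a name bound in two scopes. That yields different relay arguments, hence different labels, with the same target state.
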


Theorem~\ref{thm:strong-comm-local} is the well-known diamond (commutativity) property. It can be proved easily using the compositional structure together with Theorem~\ref{thm:locality}.

\section{Correspondence with the Original Semantics}
\label{sec:correspondence}

The compositional semantics introduced in the previous section restructures
the execution model of Crystality by separating local and global component. A natural question is whether this new semantics faithfully represents the behavior of the original one. The main result of this section is a \emph{bisimulation theorem} at transaction-level, which proves that the compositional semantics is \emph{semantically equivalent} to the previous one \cite{xu2025operational}. The proof is organized around two ingredients. First, we define \emph{encoding} and \emph{decoding} functions between the configurations of the two semantics and show that they are mutual inverse. Second, to lift the correspondence from transaction-level to the interior of a transaction, we introduce a code-level correspondence and prove bisimulation theorems for local and global code-level execution. For brevity, we present only the essential definitions and lemmas. Detailed definitions and full proofs are given in the appendix.

Throughout this section, symbols from the original
(non-compositional) semantics are annotated with the superscript
\(\mathsf{o}\), while unannotated symbols refer to the compositional semantics. In particular, we write \(C^{\mathsf{o}}\), \(\sigma_i^{\mathsf{o}}\), \(\Omega_i^{\mathsf{o}}\),
\(Prog_i^{\mathsf{o}}\), \(G^{\mathsf{o}}\) and \(\to^{\mathsf{o}}\) for the notions in the original semantics.

In the following, we use several auxiliary functions to manipulate transaction sequences and mempools. The functions \(\mathsf{gprefix}\) and \(\mathsf{lsuffix}\) extract the global prefix and the local suffix of a program, respectively, reflecting the convention that global transactions
precede local ones in Crystality. The functions \(\mathsf{LTx}\) and
\(\mathsf{GTx}\) extract from a mempool the local and global
transactions. Their formal definitions are given in the appendix.

We first define well-formed configurations for both the original and the compositional semantics, since the correspondence between the two semantics is established only on well-formed configurations.

\begin{definition}[Well-formed configurations]
Let
\[
\Gamma = (\gamma_1,\ldots,\gamma_n), \qquad
\gamma_i = (\sigma_i, Prog_i), \qquad
\sigma_i = (\Psi_i,M_i,G_i),
\]
\[
\gamma_G = (\sigma_G, Prog_G), \qquad
\sigma_G = (G,M_G).
\]
We define the well-formedness predicate for configurations of the
compositional semantics by
\[
\mathsf{wf}((\Gamma,\Omega,\gamma_G))
\;\stackrel{\mathrm{def}}{=}\;
G_1 = G_2 = \cdots = G_n = G.
\]

For a configuration of the original semantics
\[
C^{\mathsf{o}}
=
(\sigma_1^{\mathsf{o}},\Omega_1^{\mathsf{o}},Prog_1^{\mathsf{o}},\ldots,
\sigma_n^{\mathsf{o}},\Omega_n^{\mathsf{o}},Prog_n^{\mathsf{o}},G^{\mathsf{o}}),
\]
we define
\[
\mathsf{wf}^{\mathsf{o}}(C^{\mathsf{o}})
\;\stackrel{\mathrm{def}}{=}\;
\mathsf{gprefix}(Prog_1^{\mathsf{o}})
=
\mathsf{gprefix}(Prog_2^{\mathsf{o}})
=
\cdots
=
\mathsf{gprefix}(Prog_n^{\mathsf{o}}).
\]
\end{definition}

We then present the definitions of the two key functions, \(Enc\) and \(Dec\) .

\begin{definition}[Encoding] \label{def:enc}
Let \( C^{\mathsf{o}} =
(\sigma_1^{\mathsf{o}},\Omega_1^{\mathsf{o}},Prog_1^{\mathsf{o}},\ldots,
\sigma_n^{\mathsf{o}},\Omega_n^{\mathsf{o}},Prog_n^{\mathsf{o}},G^{\mathsf{o}}) \)
be a well-formed configuration of the original semantics,
where \(\sigma_i^{\mathsf{o}}=(\Psi_i^{\mathsf{o}},M_i^{\mathsf{o}})\).

Let
\[
Prog_G = \mathsf{gprefix}(Prog_i^{\mathsf{o}}),
\qquad
Prog_i = \mathsf{lsuffix}(Prog_i^{\mathsf{o}}),
\]
\[
\Omega_i = \mathsf{LTx}(\Omega_i^{\mathsf{o}}),
\qquad
\Omega_G = \mathsf{GTx}(\Omega_i^{\mathsf{o}}).
\]

Define
\[
M_i =
\begin{cases}
\emptyset, & Prog_G \neq \cdot,\\
M_i^{\mathsf{o}}, & Prog_G = \cdot,
\end{cases}
\qquad
M_G =
\begin{cases}
M_i^{\mathsf{o}}, & Prog_G \neq \cdot,\\
\emptyset, & Prog_G = \cdot.
\end{cases}
\]

The encoding function
\(
Enc : Conf^{\mathsf{o}} \rightarrow Conf
\)
maps \(C^{\mathsf{o}}\) to the configuration
\((\Gamma,\Omega,\gamma_G)\) defined by
\[
\Gamma = (\gamma_1,\ldots,\gamma_n),
\qquad
\gamma_i = ((\Psi_i^{\mathsf{o}},M_i,G^{\mathsf{o}}),Prog_i),
\]
\[
\Omega = (\Omega_1,\ldots,\Omega_n,\Omega_G),
\qquad
\gamma_G = ((G^{\mathsf{o}},M_G),Prog_G).
\]
\end{definition}

\begin{definition}[Decoding] \label{def:dec}
Let \( (\Gamma,\Omega,\gamma_G) \)
be a well-formed configuration of the compositional semantics, where
\[
\Gamma=(\gamma_1,\ldots,\gamma_n),
\qquad
\gamma_i=((\Psi_i,M_i,G),Prog_i),
\]
\[
\Omega=(\Omega_1,\ldots,\Omega_n,\Omega_G),
\qquad
\gamma_G=((G,M_G),Prog_G).
\]
Define
\[
M_i^{\mathsf{o}} =
\begin{cases}
M_G, & Prog_G \neq \cdot,\\
M_i, & Prog_G = \cdot .
\end{cases}
\]

The decoding function
\(
Dec : Conf \rightarrow Conf^{\mathsf{o}}
\)
maps \((\Gamma,\Omega,\gamma_G)\) to the configuration
\(
(\sigma_1^{\mathsf{o}},\Omega_1^{\mathsf{o}},Prog_1^{\mathsf{o}},\ldots,
\sigma_n^{\mathsf{o}},\Omega_n^{\mathsf{o}},Prog_n^{\mathsf{o}},G^{\mathsf{o}})
\)
defined by
\[
\sigma_i^{\mathsf{o}} = (\Psi_i,M_i^{\mathsf{o}}),
\quad
\Omega_i^{\mathsf{o}} = \Omega_i \cup \Omega_G,
\quad
Prog_i^{\mathsf{o}} = Prog_G ; Prog_i,
\quad
G^{\mathsf{o}} = G .
\]
\end{definition}

The functions \(Enc\) and \(Dec\) are structurally dual.
\(Enc\) decomposes a configuration of the original
semantics into the components of the compositional semantics:
the global prefix of each program is extracted into the global
component, the remaining local suffix stays in each engine,
the global transactions in the mempools are separated into
\(\Omega_G\), and the stacks are distributed between the local
and global components depending on whether a global computation
is in progress. Conversely, \(Dec\) reconstructs a configuration of the original semantics by merging these components:
the global program \(Prog_G\) is prefixed to each local program,
the global mempool \(\Omega_G\) is merged into each engine mempool,
and the stacks are restored according to whether the global
program is empty. \(Enc\) factorizes a monolithic configuration
into compositional components, while \(Dec\) assembles these
components back into a monolithic configuration.

\begin{lemma}[Well-formedness preserved by \(Enc\) and \(Dec\)]
\label{lem:wf-preserve-enc-dec}
If \(\mathsf{wf}^{\mathsf{o}}(C^{\mathsf{o}})\) , then \(\mathsf{wf}(Enc(C^{\mathsf{o}}))\).
If \(\mathsf{wf}((\Gamma,\Omega,\gamma_G))\), then
\(\mathsf{wf}^{\mathsf{o}}(Dec((\Gamma,\Omega,\gamma_G)))\) .
\end{lemma}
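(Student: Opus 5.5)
Both implications follow by unfolding the definitions. Each predicate is a plain equality constraint, and $Enc$ and $Dec$ are defined componentwise, so no induction over the semantics is needed. I would treat the two directions separately.

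\emph{First direction.} Assume $\mathsf{wf}^{\mathsf{o}}(C^{\mathsf{o}})$, so all the $\mathsf{gprefix}(Prog_i^{\mathsf{o}})$ coincide. This is exactly what makes $Enc$ well defined: $Prog_G$ does not depend on the choice of $i$. I would make the same remark for $M_G$ and $\Omega_G$, which are also read off from some index $i$. Next I would read off $Enc(C^{\mathsf{o}})$ from Definition~\ref{def:enc}. By construction every engine component is $\gamma_i = ((\Psi_i^{\mathsf{o}},M_i,G^{\mathsf{o}}),Prog_i)$, so its perceived global state is $G_i = G^{\mathsf{o}}$. The global component is $\gamma_G = ((G^{\mathsf{o}},M_G),Prog_G)$, so its global storage is also $G^{\mathsf{o}}$. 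Hence $G_1 = \cdots = G_n = G = G^{\mathsf{o}}$, which is precisely $\mathsf{wf}(Enc(C^{\mathsf{o}}))$.

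\emph{Second direction.} Assume $\mathsf{wf}((\Gamma,\Omega,\gamma_G))$, i.e.\ $G_1=\cdots=G_n=G$. This is what justifies writing $\gamma_i = ((\Psi_i,M_i,G),Prog_i)$ in Definition~\ref{def:dec} and setting $G^{\mathsf{o}} = G$ unambiguously. By definition, $Dec$ gives $Prog_i^{\mathsf{o}} = Prog_G ; Prog_i$ for every $i$. It remains to show $\mathsf{gprefix}(Prog_G;Prog_i) = Prog_G$ for each $i$; then all global prefixes equal $Prog_G$ and $\mathsf{wf}^{\mathsf{o}}$ holds.

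\emph{The obstacle.} This last identity is the only nontrivial step, because it depends on the appendix definition of $\mathsf{gprefix}$ and on the shape of the components. I would prove a small auxiliary fact by induction on the length of $Prog_G$: if every transaction in $Prog_G$ is global and $Prog_i$ does not begin with a global transaction, then $\mathsf{gprefix}(Prog_G;Prog_i) = Prog_G$. To apply it, I would need the invariant that $Prog_G$ consists only of global transactions and $Prog_i$ only of local ones. I would either take this as part of the intended reading of the configurations (as $Enc$ produces them via $\mathsf{gprefix}/\mathsf{lsuffix}$), or note that the statement should be read on configurations satisfying it. If even that is unavailable, the fallback is to observe that all $Prog_i^{\mathsf{o}}$ share the common prefix $Prog_G$. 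This gives equality of global prefixes whenever the leading global segment of each $Prog_i$ is empty, which is again the typing invariant.
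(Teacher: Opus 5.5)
Your proof is correct and matches the paper's argument. For $Enc$, every $G_i$ and $G$ equal $G^{\mathsf{o}}$ by construction. For $Dec$, everything reduces to $\mathsf{gprefix}(Prog_G;Prog_i)=Prog_G$, which the paper obtains by simply asserting the typing invariant you flagged ($Prog_G$ purely \textsf{global}, each $Prog_i$ purely \textsf{address}/\textsf{engine}), so your explicit handling of it is, if anything, more careful.

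One peripheral quibble: the independence of $\Omega_G$ from the chosen index $i$ comes from the mempool-agreement lemma for reachable configurations, not from $\mathsf{wf}^{\mathsf{o}}$. This affects only the well-definedness of $Enc$, not the lemma's conclusion.
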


\begin{lemma}[Mutual inverse properties of \(Enc\) and \(Dec\)]
\label{thm:enc-dec-inverse}
The functions \(Enc\) and \(Dec\) are mutually inverse on reachable well-formed configurations.

(i) For every reachable well-formed configuration \(C^{\mathsf{o}}\) of the original semantics,
\(
Dec(Enc(C^{\mathsf{o}})) = C^{\mathsf{o}}.
\)

(ii) For every reachable well-formed configuration \(C\) of the compositional semantics,
\(
Enc(Dec(C)) = C.
\)

\end{lemma}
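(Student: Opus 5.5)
The plan is to prove both equations by unfolding Definitions~\ref{def:enc} and~\ref{def:dec} component by component. First we isolate the invariants that ``reachable'' supplies, since the purely syntactic identities fail for arbitrary configurations. For the original semantics we use $\mathsf{wf}^{\mathsf{o}}$ together with three invariants, each proved by induction on the length of the execution from an initial configuration.

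\textbf{(R1)} Every $Prog_i^{\mathsf{o}}$ splits as $\mathsf{gprefix}(Prog_i^{\mathsf{o}});\mathsf{lsuffix}(Prog_i^{\mathsf{o}})$.

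\textbf{(R2)} Global relay transactions are broadcast to every engine mempool, so $\mathsf{GTx}(\Omega_i^{\mathsf{o}})$ is the same set for all $i$, and $\Omega_i^{\mathsf{o}} = \mathsf{LTx}(\Omega_i^{\mathsf{o}})\cup\mathsf{GTx}(\Omega_i^{\mathsf{o}})$.

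\textbf{(R3)} While the global prefix is nonempty, all the stacks $M_i^{\mathsf{o}}$ coincide, since every engine runs the same global code on the same $G^{\mathsf{o}}$.

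For the compositional semantics we use the following invariants.

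\textbf{(C1)} $\mathsf{wf}$ holds.

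\textbf{(C2)} $Prog_G$ contains only global transactions, $Prog_i$ contains none, and no $Prog_i$ begins with a global transaction.

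\textbf{(C3)} $\Omega_i$ contains only local transactions and $\Omega_G$ only global ones.

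\textbf{(C4)} If $Prog_G\neq\cdot$ then $M_i=\emptyset$, and if $Prog_G=\cdot$ then $M_G=\emptyset$.

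Invariant (C4) reflects the execution convention: a global step runs only in the global component, and rule \textsc{L} requires $Prog_G=\cdot$. Each compositional invariant is preserved by rules \textsc{G} and \textsc{L}, and by \textsc{RELAYo}, because relay rules such as \textsc{RELAYal} route address and engine targets to $\omega_r$ and global targets to $\omega_G$. The corresponding original invariants are preserved by the original rules.

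For (i), let $C^{\mathsf{o}}$ be reachable and well-formed and compute $Dec(Enc(C^{\mathsf{o}}))$ field by field.

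\textbf{Programs.} The program field is $\mathsf{gprefix}(Prog_i^{\mathsf{o}});\mathsf{lsuffix}(Prog_i^{\mathsf{o}})$, which equals $Prog_i^{\mathsf{o}}$ by (R1). The value of $Prog_G$ does not depend on the choice of $i$, by $\mathsf{wf}^{\mathsf{o}}$.

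\textbf{Mempools.} The mempool field is $\mathsf{LTx}(\Omega_i^{\mathsf{o}})\cup\mathsf{GTx}(\Omega_i^{\mathsf{o}})=\Omega_i^{\mathsf{o}}$ by (R2). Here we note that $Enc$ implicitly picks some index for $\Omega_G$, and (R2) makes that choice irrelevant.

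\textbf{Stacks.} If $Prog_G\neq\cdot$, decoding returns $M_G$, which is $M_i^{\mathsf{o}}$ for the representative $i$, and this equals every engine's $M_i^{\mathsf{o}}$ by (R3). If $Prog_G=\cdot$, decoding returns $M_i=M_i^{\mathsf{o}}$ directly.

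\textbf{Remaining fields.} $\Psi_i$ and $G^{\mathsf{o}}$ are copied through unchanged.

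For (ii), let $C$ be reachable and well-formed and compute $Enc(Dec(C))$. $Dec(C)$ is well-formed by Lemma~\ref{lem:wf-preserve-enc-dec}, so $Enc$ applies to it. We then check each field.

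\textbf{Programs.} $\mathsf{gprefix}(Prog_G;Prog_i)=Prog_G$ and $\mathsf{lsuffix}(Prog_G;Prog_i)=Prog_i$ by (C2).

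\textbf{Mempools.} $\mathsf{LTx}(\Omega_i\cup\Omega_G)=\Omega_i$ and $\mathsf{GTx}(\Omega_i\cup\Omega_G)=\Omega_G$ by (C3).

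\textbf{Global state.} Each engine's copy $G_i$ is restored as $G$ by (C1).

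\textbf{Stacks.} If $Prog_G\neq\cdot$, re-encoding gives $M_G$ back to the global component and sets $M_i=\emptyset$, which is correct by (C4). The case $Prog_G=\cdot$ is symmetric.

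We expect the main obstacle to be the reachability invariants, especially (R2) and (R3) on the original side. (R3) in particular depends on details of the original monolithic rules for global transactions. The first thing to try is to check, rule by rule in \cite{xu2025operational}, that a global transaction step updates all engines' stacks and programs in lockstep. If that lockstep fails, the fallback is to strengthen the reachability hypothesis, or to observe that global transactions execute at transaction granularity, so that at system-level configurations $M_i^{\mathsf{o}}$ is empty whenever a global prefix is pending. The remaining steps are routine once the invariants are in place.
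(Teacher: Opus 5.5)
Your proposal is correct and follows essentially the paper's proof: unfold $Enc$ and $Dec$ field by field. The stack components are handled exactly as in the paper, by the agreement of the stacks $M_i^{\mathsf{o}}$ during the global phase (your (R3), the paper's lemma on agreement of local stacks) and by the empty-local-stacks and empty-global-stack lemmas (your (C4)); your invariants (R1), (R2), (C2), (C3) for programs and mempools simply make explicit what the paper treats as ``restored immediately''.
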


The functions \(Enc\) and \(Dec\) are defined on all well-formed configurations. We are particularly interested in configurations that occur at \emph{transaction boundaries}, i.e., immediately before a transaction begins or after it completes. Transaction-boundary configurations are formally defined in the appendix. Restricting \(Enc\) and \(Dec\) to such configurations yields the functions \(Enc_T\) and \(Dec_T\).

Using \(Enc_T\) and \(Dec_T\), we define a transaction-level correspondence between configurations of the original semantics and the compositional semantics.

\begin{definition}[Transaction-level correspondence]
Let \(Conf^{\mathsf{o}}\) and \(Conf\) denote the sets of configurations
of the original semantics and the compositional semantics,
respectively.

We define a relation
\[
R_T \subseteq Conf^{\mathsf{o}} \times Conf
\]
such that \(R_T(C^{\mathsf{o}},C)\) holds iff

\begin{enumerate}
\item \(C^{\mathsf{o}}\) is reachable and well formed in the original semantics;
\item \(C\) is reachable and well formed in the compositional semantics;
\item both \(C^{\mathsf{o}}\) and \(C\) are at transaction boundaries; and
\item \(Enc_T(C^{\mathsf{o}}) = C\)
      (equivalently, \(Dec_T(C)=C^{\mathsf{o}}\)).
\end{enumerate}
\end{definition}

We now state the main theorem of this section. Due to space limitations, the proof is deferred to the appendix.

\begin{theorem}[Transaction-level bisimulation]
\label{thm:transaction-bisim}
Let \(R_T\) be the transaction-level correspondence relation.
If \(R_T(C^{\mathsf{o}},C)\), then the following properties hold:

(i) If \( C^{\mathsf{o}} \to C^{\mathsf{o}\prime} \)
by a transaction step of the original semantics,
then there exists a configuration \(C'\) such that
\( C \sysstep C' \)
and \(R_T(C^{\mathsf{o}\prime},C')\).

(ii) If \(C \sysstep C'\)
then there exists a configuration \(C^{\mathsf{o}\prime}\) such that
\( C^{\mathsf{o}} \to C^{\mathsf{o}\prime} \)
by a transaction step of the original semantics,
and \(R_T(C^{\mathsf{o}\prime},C')\).
\end{theorem}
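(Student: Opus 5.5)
The plan is a case analysis on the kind of transaction executed, with the paper's two code-level bisimulation theorems for local and global execution doing the work inside each transaction. Write \(C = Enc_T(C^{\mathsf{o}})\) and split on whether the global prefix \(Prog_G = \mathsf{gprefix}(Prog_i^{\mathsf{o}})\) is empty. By \(\mathsf{wf}^{\mathsf{o}}\), this prefix is the same for every engine, so the case split is well defined.

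\emph{Global case} (\(Prog_G \neq \cdot\)). In the original semantics the next transaction step executes the head global transaction of the common prefix. The original semantics executes this transaction monolithically on the store \(G^{\mathsf{o}}\), using the memory stack \(M_i^{\mathsf{o}}\); under \(Enc\) that stack becomes \(M_G\). I would show that running the global T-function body in the original semantics is matched step for step by \(\gamma_G \to_{g,t}^{?(\omega_1,\ldots,\omega_n,\omega_G)} \gamma_G'\), using the code-level global bisimulation. The relay sets \(\omega\) emitted in the original semantics are split by \(\mathsf{LTx}\) and \(\mathsf{GTx}\) into the per-engine parts and \(\Omega_G\). Rule \textsc{RELAYo} then gives \(\Omega'\), and \(\mathsf{LTx}\) and \(\mathsf{GTx}\) distribute over union.

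To build the new engine components, set \(G_i' = G'\) for every \(i\) and leave everything else unchanged. This gives \(\rho(\Gamma,\Gamma')\) and both \(\mu\) conditions, so rule \textsc{G} fires. It remains to check that \(Enc_T(C^{\mathsf{o}\prime}) = C'\). This is a componentwise computation: the global transaction is removed from every \(Prog_i^{\mathsf{o}}\) simultaneously, so \(\mathsf{wf}^{\mathsf{o}}\) is preserved and \(\mathsf{gprefix}\) shrinks by one. The stack distribution also stays correct, since the stack is empty at a transaction boundary in either branch of the definition of \(M_i\) and \(M_G\).

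\emph{Local case} (\(Prog_G = \cdot\)). A transaction step of the original semantics on engine \(i\) runs an address or engine T-function, reading \(G^{\mathsf{o}}\) but not writing it. The code-level local bisimulation gives \(\gamma_i \to_{l,t}^{?(\omega_1,\ldots,\omega_n,\omega_G)} \gamma_i'\), with \(G_i\) unchanged because local code cannot modify global variables. Rule \textsc{L} therefore applies with \(\gamma_G = (\sigma_G,\cdot)\), and the \(\mu\) conditions hold before and after. The relay bookkeeping is handled as in the global case.

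Direction (ii) is symmetric. Given \(C \sysstep C'\), invert the last rule used, which is either \textsc{G} or \textsc{L}. Apply the converse direction of the corresponding code-level bisimulation to obtain an original transaction step from \(Dec_T(C)\). Then use Lemma~\ref{thm:enc-dec-inverse} to identify \(Dec_T(C')\) with the resulting original configuration. Reachability of the successor configurations is immediate, and well-formedness follows from Lemma~\ref{lem:wf-preserve-enc-dec}.

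The main obstacle is the global case. In the original semantics it is not immediately clear that a global transaction is executed once on the shared \(G^{\mathsf{o}}\), rather than once per engine's copy of the program prefix, so the prefix bookkeeping needs care. I would first establish an auxiliary lemma. It should state that an original-semantics global transaction step removes the head of \(\mathsf{gprefix}(Prog_i^{\mathsf{o}})\) uniformly for all \(i\). It should also state that a relayed global transaction lands in every engine's mempool, so that it corresponds to the single entry in \(\Omega_G\), matching \(\Omega_i^{\mathsf{o}} = \Omega_i \cup \Omega_G\) in \(Dec\). If the original semantics instead distributes global relays to only one mempool, I would strengthen the reachability invariant so that it records this duplication.
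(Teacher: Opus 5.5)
Your proposal follows the same route as the paper. Both split into a global and a local transaction: the paper does this by case analysis on the transaction rule (\textsc{IFUNtg} versus \textsc{IFUNta}/\textsc{IFUNts}) and by inversion on \textsc{G} versus \textsc{L}. Both then strip the transaction wrapper, hand the body to the global or local code-level bisimulation, and re-establish \(R_T\) for the successors. Your extra bookkeeping fills in what the paper leaves implicit: the \(\rho\)/\(\mu\) side conditions, the distribution of \(\mathsf{LTx}\) and \(\mathsf{GTx}\) over union, and empty stacks at boundaries. The auxiliary lemma you flag as the main obstacle is already in the paper, namely its lemmas on agreement of global transactions across mempools and on preservation of the common global prefix in reachable original configurations.
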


Since each transaction is an atomic unit of execution, the correspondence is defined at transaction boundaries, the equivalence
between the original semantics and the compositional semantics is
naturally stated at transaction boundaries, and the main bisimulation
theorem (Theorem~\ref{thm:transaction-bisim}) relates configurations only at these boundaries.

However, proving this result requires reasoning about the execution
inside a transaction. In the compositional semantics, the behavior of the system is defined component-wise: either a local component \(\gamma_i\) executes local code, or the global component \(\gamma_G\) executes global code. To relate these component-level executions with the monolithic
execution of the original semantics, we introduce system configurations at the code level.

These configurations should be understood purely as proof artifacts.
They do not correspond to observable states in the actual execution,
since system-level composition occurs only at transaction boundaries.
Instead, they serve as an intermediate bridge in the proof:
\[
\gamma
\;\xrightarrow{\text{Wrap}}\;
C
\;\xleftrightarrow{\;\; R \;\;}\;
C^{\mathsf{o}},
\]
where \(\gamma\) denotes either a local component \(\gamma_i\) or the
global component \(\gamma_G\), \(C\) is a system configuration of the
compositional semantics, and \(C^{\mathsf{o}}\) is a configuration of
the original semantics. The relation \(R\), defined using the code-level mappings \(Enc\) and \(Dec\), enables us to relate component-level execution with the execution of the entire system in the original semantics.

To establish bisimulation at transaction-level, we first prove two code-level bisimulation theorems, corresponding to local and global execution respectively. In the compositional semantics, code-level rules operate on component configurations \(\gamma\), whereas in the original semantics they operate on whole-system configurations. Therefore, several auxiliary constructions and definitions are introduced to relate component configurations in the compositional semantics with system configurations in the original semantics. We show that one step of a component in the compositional semantics can be simulated by a corresponding system step in the original semantics, and vice versa.

We directly state the two code-level bisimulation theorems below; the necessary definitions, lemmas, and full proofs are provided in the appendix.

\begin{theorem}[Local code-level bisimulation]
\label{lem:local-code-bisim}
If \(R_l(i,\gamma_i,C^{\mathsf{o}})\), then the following properties hold:

(i) If
\(
\gamma_i \to_l^{?(w_1,\ldots,w_n,w_G)} \gamma_i',
\)
then there exists a configuration \(C^{\mathsf{o}\prime}\) such that
\(
C^{\mathsf{o}} \to C^{\mathsf{o}\prime},
\) \(
\mathsf{Same}_l^{\mathsf{o}}(i,C^{\mathsf{o}},C^{\mathsf{o}\prime}),
\) \(
\mathsf{MemDiff}^{\mathsf{o}}(C^{\mathsf{o}},C^{\mathsf{o}\prime},\Delta\Omega),
\)
where
\(
\Delta\Omega=(w_1\cup w_G,\; w_2\cup w_G,\; \ldots,\; w_n\cup w_G),
\)
and
\(
R_l(i,\gamma_i',C^{\mathsf{o}\prime}).
\)

(ii) If
\(
C^{\mathsf{o}} \to C^{\mathsf{o}\prime},
\) \(
\mathsf{Same}_l^{\mathsf{o}}(i,C^{\mathsf{o}},C^{\mathsf{o}\prime}),
\) \(
\mathsf{MemDiff}^{\mathsf{o}}(C^{\mathsf{o}},C^{\mathsf{o}\prime},\Delta\Omega),
\)
where
\(
\Delta\Omega=(\Delta\Omega_1,\ldots,\Delta\Omega_n),
\)
then there exists a component \(\gamma_i'\) such that
\(
\gamma_i \to_l^{?(w_1,\ldots,w_n,w_G)} \gamma_i',
\)
where
\(
w_j = \mathsf{LTx}(\Delta\Omega_j)\quad (1\le j\le n),
\)
and
\(
w_G = \mathsf{GTx}(\Delta\Omega_j)
\)
for any \(j \in \{1,\ldots,n\}\), and
\(
R_l(i,\gamma_i',C^{\mathsf{o}\prime}).
\)
\end{theorem}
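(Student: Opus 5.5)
We argue by rule induction on the derivation of the component step in part (i) and on the derivation of the original step in part (ii). Before the induction we make the relation $R_l$ concrete. $R_l(i,\gamma_i,C^{\mathsf{o}})$ should say three things. First, $C^{\mathsf{o}}$ agrees with $Dec(\mathrm{Wrap}_i(\gamma_i))$ on engine $i$: its storage $\Psi_i$, stack $M_i$ and local program $Prog_i$ coincide with $\sigma_i^{\mathsf{o}}$ and $Prog_i^{\mathsf{o}}$, with no global prefix pending. Second, the view $G_i$ equals $G^{\mathsf{o}}$. Third, the scope of $top(M_i)$ is not $global$. $\mathsf{Same}_l^{\mathsf{o}}(i,\cdot,\cdot)$ says that only $\sigma_i^{\mathsf{o}}$, $Prog_i^{\mathsf{o}}$ and the mempools may change. $\mathsf{MemDiff}^{\mathsf{o}}$ says that each $\Omega_j^{\mathsf{o}}$ grows exactly by $\Delta\Omega_j$.

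First we prove two transfer lemmas by simultaneous induction with the statement, since \textsc{SASSIGN} and function-call rules nest expression evaluation.
\begin{enumerate}
\item An evaluation $\mathcal{E}[\![(\sigma_i,exp)]\!]\to_l^{?(\vec w)}(\hat\sigma_i,v)$ corresponds to the original evaluation on $(\sigma_i^{\mathsf{o}},G^{\mathsf{o}})$, with the same value and the same relays merged into per-engine mempools.
\item Reads of global variables through $G_i$, as in \textsc{ESIDag}, coincide with the original reads of $G^{\mathsf{o}}$, because $R_l$ forces $G_i=G^{\mathsf{o}}$.
\end{enumerate}

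Part (i) is then a case analysis on the last rule, with every local rule paired with its original counterpart.
\begin{itemize}
\item Storage and memory rules (\textsc{SVDa}, \textsc{TVDl}, \textsc{SASSIGNaa}, and so on) modify only $\Psi_i$ and $M_i$. The original rule makes the same modification to $\sigma_i^{\mathsf{o}}$, so $\mathsf{Same}_l^{\mathsf{o}}$ holds and $\Delta\Omega$ is empty unless $exp$ relayed.
\item Relay rules such as \textsc{RELAYal} produce $\omega_r$ in the compositional step. The original semantics inserts the same transaction into $\Omega_r^{\mathsf{o}}$, and a global relay $\omega_G$ is inserted into every $\Omega_j^{\mathsf{o}}$. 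This is exactly why $\Delta\Omega_j=w_j\cup w_G$.
\item Composite rules (sequencing, conditionals, loops, local function calls) follow from the induction hypothesis applied to the premises. The relay sets compose by union, and $G_i$ remains unchanged throughout because no local rule writes $G_i$. This gives $R_l(i,\gamma_i',C^{\mathsf{o}\prime})$.
\end{itemize}

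Part (ii) inverts the original step. Under $R_l$ the next instruction of $Prog_i^{\mathsf{o}}$ is local code in a non-global scope, and $\mathsf{Same}_l^{\mathsf{o}}$ excludes steps of other engines. Hence the step must use a local original rule, and we apply the matching compositional rule. The relay sets are recovered as $w_j=\mathsf{LTx}(\Delta\Omega_j)$ and $w_G=\mathsf{GTx}(\Delta\Omega_j)$. This needs a lemma: in the original semantics a global relay is added identically to every engine mempool, while a local relay is added only to its target engine. It follows that $\mathsf{GTx}(\Delta\Omega_j)$ is independent of $j$, and that $\mathsf{LTx}(\Delta\Omega_j)$ is exactly what engine $i$ sent to $j$. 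We prove this lemma by inspecting the original relay rules. It also relies on $\mathsf{LTx}$ and $\mathsf{GTx}$ partitioning a mempool, so that $\mathsf{LTx}(w\cup w')=w$ when $w$ is local and $w'$ is global.

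The main obstacle is the mempool bookkeeping when relays occur nested inside expressions or function bodies. The original semantics threads mempools through every subderivation, whereas the compositional semantics only labels arrows, so the induction hypothesis must be strengthened to state $\mathsf{MemDiff}^{\mathsf{o}}$ for arbitrary subderivations and to show that the union distributes. We would settle this first by proving the $\Delta\Omega$ decomposition lemma for single relay rules, then lifting it through the sequencing and call rules. A secondary subtlety is that nondeterministic constructs, such as fresh allocation in \textsc{allocate\_new}, must behave identically in both semantics, which we take as a shared assumption.
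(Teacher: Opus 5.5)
Your proposal is correct at the paper's level of detail and follows essentially the same route: induction on the derivation combined with inversion on the matching rule of the other semantics, a one-to-one pairing of compositional and original rules (declarations and assignments, function calls via the induction hypothesis on the body, and relays), with the labels recovered as $w_j=\mathsf{LTx}(\Delta\Omega_j)$ and $w_G=\mathsf{GTx}(\Delta\Omega_j)$ because a global relay is inserted identically into every engine mempool. One point both arguments leave implicit: $\mathsf{MemDiff}^{\mathsf{o}}$ only requires $\Omega_j^{\mathsf{o}\prime}=\Omega_j^{\mathsf{o}}\cup\Delta\Omega_j$, so $\Delta\Omega_j$ is determined only up to elements already in $\Omega_j^{\mathsf{o}}$, and part (ii) therefore depends on your reading of $\Delta\Omega_j$ as exactly the relays emitted by the step, which the paper also assumes tacitly when it concludes $\Delta\Omega=(\emptyset,\ldots,\emptyset)$ for non-relaying rules.
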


\begin{theorem}[Global code-level bisimulation]
\label{lem:global-code-bisim}
If \(R_g(\gamma_G,C^{\mathsf{o}})\), then the following properties hold:

(i) If
\(
\gamma_G \to_g^{?(w_1,\ldots,w_n,w_G)} \gamma_G',
\)
then there exists a configuration \(C^{\mathsf{o}\prime}\) such that
\(
C^{\mathsf{o}} \to C^{\mathsf{o}\prime},
\)
\(
\mathsf{Same}_g^{\mathsf{o}}(C^{\mathsf{o}},C^{\mathsf{o}\prime}),
\)
\(
\mathsf{MemDiff}^{\mathsf{o}}(C^{\mathsf{o}},C^{\mathsf{o}\prime},\Delta\Omega),
\)
where
\(
\Delta\Omega=(w_1\cup w_G,\; w_2\cup w_G,\; \ldots,\; w_n\cup w_G),
\)
and
\(
R_g(\gamma_G',C^{\mathsf{o}\prime}).
\)

(ii) If
\(
C^{\mathsf{o}} \to C^{\mathsf{o}\prime},
\)
\(
\mathsf{Same}_g^{\mathsf{o}}(C^{\mathsf{o}},C^{\mathsf{o}\prime}),
\)
\(
\mathsf{MemDiff}^{\mathsf{o}}(C^{\mathsf{o}},C^{\mathsf{o}\prime},\Delta\Omega),
\)
where
\(
\Delta\Omega=(\Delta\Omega_1,\ldots,\Delta\Omega_n),
\)
then there exists a component \(\gamma_G'\) such that
\(
\gamma_G \to_g^{?(w_1,\ldots,w_n,w_G)} \gamma_G',
\)
where
\(
w_i = \mathsf{LTx}(\Delta\Omega_i) \quad (1\le i\le n),
\)
and
\(
w_G = \mathsf{GTx}(\Delta\Omega_i)
\)
for any \(i \in \{1,\ldots,n\}\), and
\(
R_g(\gamma_G',C^{\mathsf{o}\prime}).
\)
\end{theorem}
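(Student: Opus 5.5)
We prove the Global code-level bisimulation theorem by induction on the derivation of the component step, mirroring the proof we plan for the Local code-level bisimulation theorem. We take $R_g(\gamma_G,C^{\mathsf{o}})$ to mean three things. First, $C^{\mathsf{o}}$ is the decoding of a wrapped configuration built from $\gamma_G$. Second, every engine program in $C^{\mathsf{o}}$ has the shape $Prog_G;Prog_i$, with the stack $M_i^{\mathsf{o}}$ of each engine equal to $M_G$ while $Prog_G\neq\cdot$. Third, $G^{\mathsf{o}}=G$.

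\textbf{Direction (i): structural induction on the rule deriving $\gamma_G \to_g^{?(w_1,\ldots,w_n,w_G)} \gamma_G'$.} Each compositional rule with subscript $g$ (\textsc{SVDg}, \textsc{TVDg}, \textsc{IFUNgg}, \textsc{ESIDgg}, the global assignment and relay rules, and the sequencing and control-flow rules in the appendix) is matched with its original-semantics counterpart. That counterpart has the same premises, except that $G$ and $M_G$ are read through $G^{\mathsf{o}}$ and $M_i^{\mathsf{o}}$.

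Because the premises agree pointwise under $R_g$, the original rule fires and produces $C^{\mathsf{o}\prime}$. In it, $G^{\mathsf{o}}$ and the stacks are updated exactly as $G$ and $M_G$ are, and every $\Psi_i^{\mathsf{o}}$ is untouched. This gives $\mathsf{Same}_g^{\mathsf{o}}$: a global function cannot modify \tsym{engine} or \tsym{address} storage.

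For relay rules, the monolithic semantics puts a global relay into every engine mempool and a local relay into the mempool of its target engine. This yields exactly $\Delta\Omega=(w_1\cup w_G,\ldots,w_n\cup w_G)$ and hence $\mathsf{MemDiff}^{\mathsf{o}}$. Re-establishing $R_g(\gamma_G',C^{\mathsf{o}\prime})$ only needs that $Dec$ commutes with the componentwise updates. This follows from Definition~\ref{def:dec} and Lemma~\ref{lem:wf-preserve-enc-dec}.

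Rules whose premises are themselves steps, such as \textsc{IFUNgg} and sequencing, use the induction hypothesis on the premise. The relay sets are composed by union, and $\mathsf{MemDiff}^{\mathsf{o}}$ is additive.

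\textbf{Direction (ii): case analysis on the original rule used for $C^{\mathsf{o}} \to C^{\mathsf{o}\prime}$.} Under $R_g$, the next instruction of the running program lies in the global prefix, and the top stack frame has scope \tsym{global}. Hence only global-scope original rules are applicable, and each has a unique compositional counterpart. We set $w_i=\mathsf{LTx}(\Delta\Omega_i)$ and $w_G=\mathsf{GTx}(\Delta\Omega_i)$, and then check the following:

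\begin{itemize}
\item $\mathsf{GTx}(\Delta\Omega_i)$ is independent of $i$. This holds because original global relays are broadcast to all engines.
\item $\mathsf{LTx}$ and $\mathsf{GTx}$ split $\Delta\Omega_i$ exactly, so $\Delta\Omega_i = w_i\cup w_G$.
\end{itemize}

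The resulting $\gamma_G'$ then satisfies $R_g$ by the same commutation argument as in direction (i).

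\textbf{Main obstacle: the frame boundary in \textsc{IFUNgg}.} The compositional rule pushes and pops a frame on $M_G$, while the original semantics does the same on $M_i^{\mathsf{o}}$. The last global step of a transaction leaves $Prog_G=\cdot$. At that point $Dec$ switches $M_i^{\mathsf{o}}$ from $M_G$ to $M_i$, so we must show that $M_G$ has returned to the frame configuration matching the local stack.

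We will handle this with an auxiliary invariant carried inside $R_g$: the bottom of $M_G$ coincides with the corresponding $M_i$ at transaction boundaries. We prove the invariant by induction alongside the main statement, so that the switch in $Dec$ is consistent. The remaining cases are routine bookkeeping.
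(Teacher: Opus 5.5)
Your proposal follows essentially the same route as the paper: induction on the derivation with a one-to-one matching of global rules, $\mathsf{Same}_g^{\mathsf{o}}$ from the fact that global code never touches any $\Psi_i$, broadcast of global relays to get $\mathsf{MemDiff}^{\mathsf{o}}$ and a well-defined $w_G$, and, as the crux of \textsc{IFUNgg}, the identification of the single stack $M_G$ with the common engine stacks $M_i^{\mathsf{o}}$ (the paper's lemma on agreement of local stacks during global execution). The one difference is your auxiliary invariant for the $Dec$ case switch at the end of the global phase, which the paper does not thread through the induction: it follows directly from the separate empty-stack lemmas (all stacks empty at transaction boundaries, local stacks empty throughout the global phase, which is also why $Enc_T$ and $Dec_T$ need no case split), so at the switch both $M_G$ and $M_i$ are simply $\emptyset$.
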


\section{Related Work} \label{sec:related}

Formal semantics of smart contract languages has been extensively studied, particularly in the context of sequential execution platforms such as Ethereum.  Bhargavan \cite{bhargavan2016formal} formalized a core subset of Solidity \cite{Solidity} in F*, enabling verification of safety properties.  Grishchenko \cite{grishchenko2018semantic} developed the KEVM semantics using the K-framework, capturing the EVM bytecode precisely and supporting automated reasoning. Marmsoler \cite{Marmsoler} introduced a denotational semantics of Solidity.  Zakrzewski  \cite{zakrzewski2018towards} formalized a big-step semantics for a subset of Solidity. Crosara and Jiao   \cite{crosaratowards,jiao2018,jiao2020semantic} proposed operational semantics, including executable definitions of Solidity behaviors. Other approaches encode Solidity into intermediate representations:  Hajdu \cite{Hajdu1,Hajdu2} developed an SMT-based intermediate language to support verification, and Yang \cite{Yang2020} presented a formal semantics for a specification language targeting Ethereum contracts in Coq. Luu \cite{luu2016making} focused on symbolic analysis of smart contracts to detect common vulnerabilities. These approaches primarily target sequential models and do not address the unique challenges introduced by parallel execution.

Parallel execution in blockchain systems has been proposed to improve scalability. Notable examples include proposals for parallel transaction execution engines \cite{dickerson2017adding}, and experimental systems such as Chainspace \cite{al2017chainspace}. While these systems provide architectural solutions to concurrency, they rarely define formal operational semantics that explicitly capture cross-engine interactions and asynchronous relay mechanisms at the language level.

Structural operational semantics (SOS) has a long tradition in programming language theory \cite{Plotkin}. Modular SOS frameworks, such as MSOS \cite{mosses2004modular}, emphasize compositional definitions to enable scalable reasoning about program fragments and language extensions. Our work draws inspiration from these principles but adapts them to the domain of parallel smart contracts, where partitioned state and deferred cross-engine calls require novel semantic treatment. To our knowledge, this paper is the first to define a compositional operational semantics for a smart contract language designed for parallel execution environments.

\section{Conclusion} \label{sec:conclusion}

This paper introduces a compositional operational semantics for the smart contract language Crystality. The semantics separates the system into engine components and a global component, making the modular structure of parallel smart contract execution explicit.

The compositional formulation enables clear statements and simple proofs of structural properties that are difficult to express in the original semantics. In particular, we prove locality of local steps, isolation of global execution, and strong commutativity of independent local steps.

We formally establish the correctness of this redesign by proving a transaction-level bisimulation between the compositional semantics and the original monolithic semantics. The proof is based on encoding and decoding functions relating configurations of the two semantics, together with code-level bisimulation results for local and global execution.

These results demonstrate that the compositional semantics provides a cleaner and more modular foundation for reasoning about parallel smart contracts. We believe that this framework can serve as a basis for future verification techniques and formal analyses for languages designed for parallel blockchain.

\begin{credits}
\subsubsection{\ackname} This research was sponsored by National Natural Science Foundation of China Grant No. 92582102, 62572013 and 62172019, and the National Key R\&D Program of China under Grant 2022YFB2702200.
\end{credits}

\bibliographystyle{splncs04}
\bibliography{mypaper}

\appendix

\section{Syntax of Crystality}

In the following, we give the whole syntax of Crystality, represented by a variant of Extended Backus-Naur Form (known as EBNF) where:

\begin{itemize}
 \item Terminal symbols are written in \tsym{monospaced fonts}.
 \item Non-terminal productions are encapsulated in  $\ntsym{angle brackets}$.
 \item Zero or one occurence is denoted by $^{?}$, zero or more occurences is denoted by $^{*}$.
\end{itemize}

A contract takes the following syntactic form:
 
\begin{bnf}
    \ntsym{contractDecl} ::= \tsym{contract} \ntsym{identifier} \tsym{\{} \ntsym{statevarDecl} ^{*} \ntsym{funcDecl}^{*} \tsym{\}}
\end{bnf}

\emph{Identifier} serves as a unique name for the contract. \emph{StatevarDecl} defines the various state variables associated with the contract, and \emph{funcDecl} specifies the functions within the contract.

\begin{bnf}
    \ntsym{statevarDecl} ::= \ntsym{type} \ntsym{scope} \ntsym{identifier} \tsym{;}
\end{bnf}

\begin{bnf}
    \ntsym{scope} ::= \tsym{@address} | \tsym{@engine} | \tsym{@global}
\end{bnf}

\emph{StatevarDecl} specifies both the type and the scope of the state variable.

The abstract syntax tree of functions is as follows.

\begin{bnf}
    \ntsym{funcDecl} ::=& \tsym{function}  \ntsym{identifier} \ntsym{funcPara} \ntsym{scope} \tsym{returns} \ntsym{type}^{?} \ntsym{funcBody} \\
    \ntsym{funcPara} ::=&  \tsym{(} (\ntsym{type}   \ntsym{identifier} \tsym{,} )^{*} \ntsym{type}   \ntsym{identifier} \tsym{)}  | \tsym{(} \tsym{)} \\
    \ntsym{funcBody} ::=& \tsym{\{} \ntsym{stmt} \tsym{\}} \\
    \ntsym{stmt} ::=& \ntsym{pstmt} | \ntsym{stmt} \ntsym{stmt} | \tsym{if} \tsym{(} \ntsym{exp} \tsym{)} \tsym{then} \tsym{\{} \ntsym{stmt} \tsym{\}} \tsym{else} \tsym{\{} \ntsym{stmt} \tsym{\}} \\ & | \tsym{while}  \tsym{(} \ntsym{exp} \tsym{)} \tsym{\{} \ntsym{stmt} \tsym{\}} \\
    \ntsym{pstmt} ::=& \ntsym{type} \ntsym{identifier} \tsym{;} | \tsym{skip} | \ntsym{identifier} := \ntsym{exp} \tsym{;} | \ntsym{relaycall} \\ & | \ntsym{returnstmt} | \ntsym{identifier} \tsym{(} \ntsym{exp} ^{*} \tsym{)} \\
    \ntsym{exp} ::=& \ntsym{identifier} | \ntsym{identifier} \tsym{(} \ntsym{exp} ^{*} \tsym{)} 
\end{bnf}

\begin{bnf}
    \ntsym{relaycall} ::= ( \tsym{relay @}\ntsym{exp} | \tsym{relay @ engines} | \tsym{relay @ global} ) \ntsym{identifier} \tsym{(} \ntsym{exp}^{*} \tsym{)} \tsym{;}
\end{bnf}

Basically, a function declaration consists of the following parts: its identifier, parameters along with their corresponding types, its scope, return type (if applicable), and the function body. The function body is enclosed within braces and consists of some statements. The statement may be primitive statement, sequential composition, conditional, or loop. A primitive statement may be temporary variable declaration, skip, assignment, relay call, return statement, or function call. The expression can be identifier or function call.

\section{Running Example}

\begin{lstlisting}[basicstyle=\ttfamily\small]
contract Bank {
  int @address balance;
  int @global total;
  function transfer(address payee, int amount) @address {
    if (amount <= balance) then {
      balance := balance - amount;
      relay @ payee deposit(amount);
      relay @ global updateTotal(amount);
    } else { skip }
  }
  function deposit(int amount) @address {
    balance := balance + amount;
  }
  function updateTotal(int amount) @global {
    total := total + amount;
  }
}
\end{lstlisting}

Assume there are two execution engines. Let $a,d$ be addresses managed by engine $1$, and $b,c$ be addresses managed by engine $2$.
Consider the following initial state:
\[
balance(a)=10,\quad balance(b)=5,\quad balance(c)=8,\quad balance(d)=1,
\]
and
\[
total = 0.
\]

Now suppose the current block contains exactly two user-initiated transactions:
\[
tx_1 = transfer(b,3) \text{ sent by } a,
\qquad
tx_2 = transfer(d,2) \text{ sent by } c.
\]

Let the initial system configuration be
\[
C_0 = (\Gamma^{(0)},\Omega^{(0)},\gamma_G^{(0)}),
\]
where
\[
\Gamma^{(0)} = (\gamma_1^{(0)},\gamma_2^{(0)}),\qquad
\Omega^{(0)} = (\emptyset,\emptyset,\emptyset),
\]
and $\gamma_G^{(0)}$ stores the initial global state with
\(
G(total)=0.
\)

Moreover, $\gamma_1^{(0)}$ contains the local program corresponding to $tx_1$,
and $\gamma_2^{(0)}$ contains the local program corresponding to $tx_2$.

\paragraph{First block: local execution.}

A transaction is executed in the engine that stores its sender's address. The transaction $tx_1$ is executed in component $\gamma_1$.
Its local effect is to decrease $balance(a)$ from $10$ to $7$.
At the same time, it emits two relay sets:
\[
\omega_1^{(1)}=\emptyset,\quad
\omega_2^{(1)}=\{(address,b,deposit(3))\},\quad
\omega_G^{(1)}=\{(global,updateTotal(3))\}.
\]
Thus, by one system-level local step, we obtain
\[
C_0=(\Gamma^{(0)},\Omega^{(0)},\gamma_G^{(0)})\Rightarrow_l
(\Gamma^{(1)},\Omega^{(1)},\gamma_G^{(0)})=C_1,
\]
where $\Gamma^{(1)}$ differs from $\Gamma^{(0)}$ only at the first component, and
\[
\Omega^{(1)}=(\emptyset,\{(address,b,deposit(3))\},\{(global,updateTotal(3))\}).
\]

Next, the transaction $tx_2$ is executed in component $\gamma_2$.
Its local effect is to decrease $balance(c)$ from $8$ to $6$.
It emits
\[
\omega_1^{(2)}=\{(address,d,deposit(2))\},\quad
\omega_2^{(2)}=\emptyset,\quad
\omega_G^{(2)}=\{(global,updateTotal(2))\}.
\]
Hence we obtain another system-level local step
\[
C_1=(\Gamma^{(1)},\Omega^{(1)},\gamma_G^{(0)})\Rightarrow_l
(\Gamma^{(2)},\Omega^{(2)},\gamma_G^{(0)})=C_2,
\]
where $\Gamma^{(2)}$ differs from $\Gamma^{(1)}$ only at the second component, and
\[
\Omega^{(2)}=
\left(
\begin{aligned}
&\{(address,d,deposit(2))\},\\
&\{(address,b,deposit(3))\},\\
&\{(global,updateTotal(3)),\,(global,updateTotal(2))\}
\end{aligned}
\right).
\]

At the end of the first block, the sender-side local updates have already been performed, while both the payee-side updates and the global aggregation are still pending in the mempools.

Before describing the second block, we clarify the scope of the semantics
considered in this paper.

Our semantics models only the execution of transactions \emph{within a block}. It does not formalize the process by which transactions are selected from mempools and assembled into the next block. In particular, the scheduling and ordering policies used by the blockchain protocol are outside the scope of the present semantics.

When the next block begins, we assume that the transactions pending in the mempools have already been collected and organized into the program fragments executed by the components of the system configuration.

Concretely, the global program is
\[
Prog_G = updateTotal(3);\,updateTotal(2),
\]
while the local programs for the two engines are
\[
Prog_1 = deposit(2), \qquad Prog_2 = deposit(3),
\]
where the corresponding address information is attached.

\paragraph{Second block: global phase followed by local phase.}

By the execution convention, all global transactions of a block are executed before any local transactions. Therefore, in the second block, global component executes first and update the global state. Hence we obtain
\[
C_2=(\Gamma^{(2)},\Omega^{(2)},\gamma_G^{(0)})\Rightarrow_g^{*}
(\Gamma^{(3)},\Omega^{(3)},\gamma_G^{(1)})=C_3,
\]

where
\(
G(total)=5
\)

After the global phase completes, engine $1$ and $2$ execute their pending transactions and update the payee balances and yields
\[
C_3=(\Gamma^{(3)},\Omega^{(3)},\gamma_G^{(1)})\Rightarrow_l^{*}
(\Gamma^{(4)},\Omega^{(4)},\gamma_G^{(1)})=C_4,
\]

Therefore, after the two blocks, the final state is
\begin{align*}
&balance(a)=7, \qquad\qquad balance(b)=8,\\
&balance(c)=6, \qquad\qquad balance(d)=3,\\
&total=5.
\end{align*}

This example illustrates four key aspects of the compositional semantics. 
First, each system-level local step changes exactly one local component
and leaves the global component unchanged.
Second, relay effects are not executed immediately, but are recorded explicitly in the corresponding mempools $\Omega_i$ and $\Omega_G$.
Third, the execution is naturally decomposed into a global phase and a local phase, matching the rule \textsc{GL}.
Finally, the two sender-side transfers are independent local computations,
so their order does not matter; this is precisely the kind of modular reasoning enabled by the compositional semantics and later formalized by our commutativity result. To summarize,
\[
C_0
\Rightarrow_l
C_1
\Rightarrow_l
C_2
\Rightarrow_g^{*}
C_3
\Rightarrow_l^{*}
C_4 .
\]

Here $C_0 \Rightarrow_l C_1 \Rightarrow_l C_2$ is the first block,
while $C_2 \Rightarrow_g^{*} C_3 \Rightarrow_l^{*} C_4$ is the second block.

\begin{corollary}[Commutativity in the running example]
In the first block, the executions of $tx_1$ and $tx_2$ commute:
executing $tx_1$ before $tx_2$, or $tx_2$ before $tx_1$, yields the same
configuration $C_2$.
\end{corollary}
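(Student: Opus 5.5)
The plan is to derive the corollary directly from Theorem~\ref{thm:strong-comm-local}, instantiated with the two local steps of the first block. Let $C_0 \sysstep_l C_1$ denote the step executing $tx_1$ in engine $1$, as described above. Let $C_0 \sysstep_l C_1'$ denote the step executing $tx_2$ in engine $2$ directly from $C_0$.

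First I will check that the second step is actually available from $C_0$. By rule \textsc{L} this requires three things:
\begin{itemize}
\item $\gamma_2^{(0)} \to_{l,t}^{?(\omega_1^{(2)},\omega_2^{(2)},\omega_G^{(2)})} \gamma_2^{(1)}$, where $\gamma_2^{(1)}$ is the same component update used in $C_1 \sysstep_l C_2$;
\item $\gamma_G^{(0)}$ has an empty program, which holds in the first block since $Prog_G=\cdot$;
\item $\mu$ holds before and after the step.
\end{itemize}
The transaction derivation for $tx_2$ uses only $\gamma_2$'s own state: $\Psi_{2}$, $M_2$ and $G_2$. The first step $C_0\sysstep_l C_1$ leaves $\gamma_2$ unchanged, by Theorem~\ref{thm:locality}, so engine $2$ sees the same state $\gamma_2^{(0)}$ in $C_0$ and in $C_1$. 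Hence the same component derivation applies from $C_0$. Since $\gamma_G$ is unchanged by local steps, all $G_i$ remain equal to $G$, and $\mu$ is preserved.

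Second, I will apply Theorem~\ref{thm:strong-comm-local} to $C_0\sysstep_l C_1$ (engine $1$) and $C_0\sysstep_l C_1'$ (engine $2$), with $1\neq 2$. This yields a unique $C'$ with $C_1\sysstep_l C'$ and $C_1'\sysstep_l C'$. It remains to identify $C'$ with $C_2$, and I will do this by direct computation using Theorem~\ref{thm:locality}. Both interleavings end in the same configuration:
\begin{itemize}
\item engine $1$ is $\gamma_1^{(1)}$ and engine $2$ is $\gamma_2^{(1)}$;
\item the global component is $\gamma_G^{(0)}$;
\item the mempool is $\Omega^{(0)}\cup\omega^{(1)}\cup\omega^{(2)}$, taken componentwise.
\end{itemize}
By commutativity and associativity of $\cup$, this mempool equals the $\Omega^{(2)}$ displayed above. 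Hence $C'=C_2$. Uniqueness in the theorem, or simply this explicit computation, gives the claim.

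The main obstacle is justifying that the component-level transaction of $tx_2$, and the relay sets it emits, do not depend on whether $tx_1$ has already run. This must be argued from the structure of the component rules: a derivation $\gamma_2\to_{l,t}\gamma_2'$ mentions only $\gamma_2$, and the mempool is not read during execution (relays are only sent, via \textsc{RELAYo}). A subtle point is the global mempool entry: $updateTotal(3)$ and $updateTotal(2)$ are both placed in $\Omega_G$. The claim therefore needs mempools to be sets, where union is order-insensitive. If $Prog_G$ later imposes an order on them, that ordering belongs to the next block's scheduling, which is outside the semantics as the paper stipulates.
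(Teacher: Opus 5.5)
Your proposal is correct and follows the paper's intended argument: the paper gives no separate proof, and presents the corollary as a direct instance of Theorem~\ref{thm:strong-comm-local}, which is exactly what you do. Your extra checks make explicit what the paper leaves implicit: that the $tx_2$ step is already enabled at $C_0$ (rule \textsc{L} only inspects $\gamma_2$, which $C_0\sysstep_l C_1$ leaves untouched), and that the commuting configuration coincides with $C_2$ because mempools are combined by set union.
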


\section{Complete Set of Semantic Rules}
\subsection{Semantic Rules of Statements}

\begin{gather*}
\drule{SVDa}{ N_{\Psi_{i,j}}(id) = \mathit{None}  ,\ 1\le j\le k \\ T_{\Psi_{i,j}'}(id) = Type ,\ 1\le j\le k  \\ N_{\Psi_{i,j}'}(id) = allocate\_new(Type,\Psi_{i,j}) ,\ 1\le j\le k\\ [N_{\Psi_{i,j}'}(id)]_{\Psi_{i,j}'}^{size(Type)} = init(Type) ,\ 1\le j\le k  }{ (\sigma_i, @address\ Type\ id;) ->_l^{\tau} (\sigma_i', \cdot) }
\end{gather*}

\begin{gather*}
\drule{SVDs}{ N_{\Psi_{i,s}}(id) = \mathit{None} \\ T_{\Psi_{i,s}'}(id) = Type \\\\ N_{\Psi_{i,s}'}(id) = allocate\_new(Type,\Psi_{i,s}) \\ [N_{\Psi_{i,s}'}(id)]_{\Psi_{i,s}'}^{size(Type)} = init(Type)  }{ (\sigma_i,@engine\ Type\ id; ) ->_l^{\tau} (\sigma_i',\cdot) }
\end{gather*}

\begin{gather*}
\drule{SVDg}{N_G(id) = \mathit{None}\\ T_{G'}(id) = Type \\\\ N_{G'}(id)=allocate\_new(Type,G)\\ [N_{G'}(id)_{G'}^{size(Type)}] = init(Type)  }{(\sigma_G,@global\ Type\ id;) ->_g^{\tau} (\sigma_G',\cdot)}
\end{gather*}

\begin{gather*}
\drule{TVDl}{top(M_i).scope \ne global \\ N_{top(M_i)}(id) = \mathit{None} \\ T_{top(M_i)'}(id) = Type \\ N_{top(M_i)'}(id) = allocate\_new(Type,top(M_i)) \\ [N_{top(M_i)'}(id)]_{top(M_i)'}^{size(Type)} = init(Type)}{(\sigma_i, Type\ id;) ->^{\tau}_l (\sigma_i',\cdot)}
\end{gather*}

\begin{gather*}
\drule{TVDg}{top(M_G).scope = global \\ N_{top(M_G)}(id) = \mathit{None} \\ T_{top(M_G)'}(id) = Type \\ N_{top(M_G)'}(id) = allocate\_new(Type,top(M_G)) \\ [N_{top(M_G)'}(id)]_{top(M_G)'}^{size(Type)} = init(Type)}{(\sigma_G, Type\ id;) ->^{\tau}_g (\sigma_G',\cdot)}
\end{gather*}

\begin{gather*}
\drule{TASSIGNl}{top(M_i).scope \ne global \\ \mathcal{E}[\![(\sigma_i, exp)]\!] ->_l^{?(\omega_1,...,\omega_n,\omega_G)} (\hat{\sigma_i},v) \\\\
N_{top(\hat{M_i})}(id) \ne \mathit{None} \\ [N_{top(\hat{M_i}')}(id)]^{size(T_{top(\hat{M_i}')}(id))}_{top(\hat{M_i}')}=v }{(\sigma_i,  id = exp;) ->_l^{?(\omega_1,...,\omega_n,\omega_G)} (\hat{\sigma_i}',\cdot)}
\end{gather*}

\begin{gather*}
\drule{TASSIGNg}{top(M_G).scope = global \\ \mathcal{E}[\![(\sigma_G, exp)]\!] ->_g^{?(\omega_1,...,\omega_n,\omega_G)} (\hat{\sigma_G},v) \\\\
N_{top(\hat{M_G})}(id) \ne \mathit{None} \\ [N_{top(\hat{M_G}')}(id)]^{size(T_{top(\hat{M_G}')}(id))}_{top(\hat{M_G}')}=v }{(\sigma_G,  id = exp;) ->_g^{?(\omega_1,...,\omega_n,\omega_G)} (\hat{\sigma_G}',\cdot)}
\end{gather*}

\begin{gather*}
\drule{SASSIGNaa}{\mathcal{E}[\![(\sigma_i,  exp)]\!] ->_l^{?(\omega_1,...,\omega_n,\omega_G)} (\hat{\sigma_i},v)   \\\\ top(M_i).scope = (address,j) \\ top(\hat{M_i}).scope = (address,j) \\\\ N_{\hat{\Psi_{i,j}}}(id) \ne \mathit{None} \\ [N_{\hat{\Psi_{i,j}}'}(id)]^{size(T_{\hat{\Psi_{i,j}}'}(id))}_{\hat{\Psi_{i,j}}'}=v }{(\sigma_i,  id = exp;) ->_l^{?(\omega_1,...,\omega_n,\omega_G)} (\hat{\sigma_i}',\cdot)}
\end{gather*}

\begin{gather*}
\drule{SASSIGNas}{\mathcal{E}[\![(\sigma_i,  exp)]\!] ->_l^{?(\omega_1,...,\omega_n,\omega_G)} (\hat{\sigma_i},v)    \\\\  top(M_i).scope = (address,j) \\ top(\hat{M_i}).scope = (address,j) \\\\ N_{\hat{\Psi_{i,s}}}(id) \ne \mathit{None} \\ [N_{\hat{\Psi_{i,s}}'}(id)]^{size(T_{\hat{\Psi_{i,s}}'}(id))}_{\hat{\Psi_{i,s}}'}=v }{(\sigma_i,  id = exp;) ->_l^{?(\omega_1,...,\omega_n,\omega_G)} (\hat{\sigma_i}',\cdot)}
\end{gather*}

\begin{gather*}
\drule{SASSIGNss}{\mathcal{E}[\![(\sigma_i,  exp)]\!] ->_l^{?(\omega_1,...,\omega_n,\omega_G)} (\hat{\sigma_i},v)    \\\\ top(M_i).scope = engine \\ top(\hat{M_i}).scope = engine \\\\  N_{\hat{\Psi_{i,s}}}(id) \ne \mathit{None} \\ [N_{\hat{\Psi_{i,s}}'}(id)]^{size(T_{\hat{\Psi_{i,s}}'}(id))}_{\hat{\Psi_{i,s}}'}=v }{(\sigma_i,  id = exp;) ->_l^{?(\omega_1,...,\omega_n,\omega_G)} (\hat{\sigma_i}',\cdot)}
\end{gather*}

\begin{gather*}
\drule{SASSIGNgg}{\mathcal{E}[\![(\sigma_G,  exp)]\!] ->_g^{?(\omega_1,...,\omega_n,\omega_G)} (\hat{\sigma_G},v)    \\\\ top(M_G).scope = global \\ top(\hat{M_G}).scope = global \\\\  N_{\hat{G}}(id) \ne \mathit{None} \\ [N_{\hat{G}'}(id)]^{size(T_{\hat{G}'}(id))}_{\hat{G}'}=v }{(\sigma_G,  id = exp;) ->_g^{?(\omega_1,...,\omega_n,\omega_G)} (\hat{\sigma_G}',\cdot)}
\end{gather*}

\begin{gather*}
\drule{IFUNaa}{\Lambda_{scope}(idf)=address \\ top(M_i).scope = (address,j)\\\\
\Lambda_{rettype}(idf) = \mathit{None} \\ \Lambda_{body}(idf) = Block \\\\
\Lambda_{paraname}(idf) =(id_1,...,id_t) \\ \Lambda_{paratype}(idf) = (T_1,...,T_t)\\\\ 
M_i^{(1)}=addtop(M_i,top(M_i)) \\ top(M_i^{(1)}).scope = (address,j) \\\\
top(M_i^{(1)}).rt = \mathit{None} \\ M_i^{(3)} = removetop(M_i^{(2)})\\\\
(\sigma_i^{(1)},T_1\ id_1 = exp_1;...;T_t\ id_t = exp_t;Block) ->_l^{?(\omega_1,...,\omega_n,\omega_G)} (\sigma_i^{(2)},\cdot) 
}{(\sigma_i,idf(exp_1,...,exp_t)) ->_l^{?(\omega_1,...,\omega_n,\omega_G)} (\sigma_i^{(3)},\cdot)}
\end{gather*}

\begin{gather*}
\drule{IFUNas}{\Lambda_{scope}(idf)=engine \\ top(M_i).scope = (address,j)\\\\
\Lambda_{rettype}(idf) = \mathit{None} \\ \Lambda_{body}(idf) = Block \\\\
\Lambda_{paraname}(idf) =(id_1,...,id_t) \\ \Lambda_{paratype}(idf) = (T_1,...,T_t)\\\\ 
M_i^{(1)}=addtop(M_i,top(M_i)) \\ top(M_i^{(1)}).scope = engine \\\\
top(M_i^{(1)}).rt = \mathit{None} \\ M_i^{(3)} = removetop(M_i^{(2)})\\\\
(\sigma_i^{(1)},T_1\ id_1 = exp_1;...;T_t\ id_t = exp_t;Block) ->_l^{?(\omega_1,...,\omega_n,\omega_G)} (\sigma_i^{(2)},\cdot) 
}{(\sigma_i,idf(exp_1,...,exp_t)) ->_l^{?(\omega_1,...,\omega_n,\omega_G)} (\sigma_i^{(3)},\cdot)}
\end{gather*}

\begin{gather*}
\drule{IFUNss}{\Lambda_{scope}(idf)=engine \\ top(M_i).scope = engine\\\\
\Lambda_{rettype}(idf) = \mathit{None} \\ \Lambda_{body}(idf) = Block \\\\
\Lambda_{paraname}(idf) =(id_1,...,id_t) \\ \Lambda_{paratype}(idf) = (T_1,...,T_t)\\\\
M_i^{(1)}=addtop(M_i,top(M_i)) \\ top(M_i^{(1)}).scope = engine \\\\
top(M_i^{(1)}).rt = \mathit{None} \\ M_i^{(3)} = removetop(M_i^{(2)})\\\\
(\sigma_i^{(1)},T_1\ id_1 = exp_1;...;T_t\ id_t = exp_t;Block) ->_l^{?(\omega_1,...,\omega_n,\omega_G)} (\sigma_i^{(2)},\cdot) 
}{(\sigma_i,idf(exp_1,...,exp_t)) ->_l^{?(\omega_1,...,\omega_n,\omega_G)} (\sigma_i^{(3)},\cdot)}
\end{gather*}

\begin{gather*}
\drule{IFUNgg}{\Lambda_{scope}(idf)=global \\ top(M_G).scope = global\\\\
\Lambda_{rettype}(idf) = \mathit{None} \\ \Lambda_{body}(idf) = Block \\\\
\Lambda_{paraname}(idf) =(id_1,...,id_t) \\ \Lambda_{paratype}(idf) = (T_1,...,T_t)\\\\
M_G^{(1)}=addtop(M_G,top(M_G))\\
top(M_G^{(1)}).scope = global \\\\
top(M_G^{(1)}).rt = \mathit{None} \\
M_G^{(3)} = removetop(M_G^{(2)}) \\\\
(\sigma_G^{(1)},T_1\ id_1 = exp_1;...;T_t\ id_t = exp_t;Block) ->_g^{?(\omega_1,...,\omega_n,\omega_G)} (\sigma_G^{(2)},\cdot) }{(\sigma_G,idf(exp_1,...,exp_t)) ->_g^{?(\omega_1,...,\omega_n,\omega_G)} (\sigma_G^{(3)},\cdot)}
\end{gather*}

\begin{gather*}
\drule{RELAYal}{top(M_i).scope \ne global \\ \Lambda_{scope}(idf) = address \\\\ \mathcal{E}[\![(\sigma_i,exp)]\!] ->_l^{\tau} (\sigma_i,(r,j)) \\  \mathcal{E}[\![(\sigma_i,exp_l)]\!] ->_l^{\tau} (\sigma_i,v_l), 1 \le l \le t  \\\\
\omega_r= \{ (address,j,idf(v_1,...,v_t)) \} \\
\omega_\iota = \emptyset , \forall \iota \ne r }{(\sigma_i,relay@exp\ idf(exp_1,...,exp_t);) ->_l^{?(\omega_1,...,\omega_n,\omega_G)} (\sigma_i,\cdot)}
\end{gather*}

\begin{gather*}
\drule{RELAYag}{top(M_G).scope = global \\ \Lambda_{scope}(idf) = address \\\\ \mathcal{E}[\![(\sigma_G,exp)]\!] ->_g^{\tau} (\sigma_G,(r,j)) \\       \mathcal{E}[\![(\sigma_G,exp_l)]\!] ->_g^{\tau} (\sigma_G,v_l), 1 \le l \le t   \\\\
\omega_r= \{ (address,j,idf(v_1,...,v_t)) \} \\
\omega_\iota = \emptyset , \forall \iota \ne r
}{(\sigma_G,relay@exp\ idf(exp_1,...,exp_t);) ->_g^{?(\omega_1,...,\omega_n,\omega_G)} (\sigma_G,\cdot)}
\end{gather*}

\begin{gather*}
\drule{RELAYsl}{top(M_i).scope \ne global\\ \Lambda_{scope}(idf) = engine \\\\  \mathcal{E}[\![(\sigma_i,exp_l)]\!] ->_l^{\tau} (\sigma_i,v_l), 1 \le l \le t \\
\omega_i = \{ (engine,idf(v_1,...,v_t)) \} ,\ 1 \le i \le n\\ \omega_G = \emptyset}{(\sigma_i,relay@engines\ idf(exp_1,...,exp_t);) ->_l^{?(\omega_1,...,\omega_n,\omega_G)} (\sigma_i,\cdot)}
\end{gather*}

\begin{gather*}
\drule{RELAYsg}{top(M_G).scope = global\\ \Lambda_{scope}(idf) = engine \\\\ \mathcal{E}[\![(\sigma_G,exp_l)]\!] ->_g^{\tau} (\sigma_G,v_l), 1 \le l \le t \\
\omega_i = \{ (engine,idf(v_1,...,v_t)) \} ,\ 1 \le i \le n\\ \omega_G = \emptyset}{(\sigma_G,relay@engines\ idf(exp_1,...,exp_t);) ->_g^{?(\omega_1,...,\omega_n,\omega_G)} (\sigma_G,\cdot)}
\end{gather*}

\begin{gather*}
\drule{RELAYgl1}{\Lambda_{scope}(idf) = global \\ top(M_i).scope = (address,j) \\\\  \mathcal{E}[\![(\sigma_i,exp_l)]\!] ->_l^{\tau} (\sigma_i,v_l), 1 \le l \le t \\
\omega_G =  \{ (global,idf(v_1,...,v_t)) \}  \\ \omega_\iota = \emptyset , \forall \iota \ne G}{(\sigma_i,relay@global\ idf(exp_1,...,exp_t);) ->_l^{?(\omega_1,...,\omega_n,\omega_G)} (\sigma_i,\cdot)}
\end{gather*}

\begin{gather*}
\drule{RELAYgl2}{\Lambda_{scope}(idf) = global \\   top(M_i).scope = engine  \\\\  \mathcal{E}[\![(\sigma_i,exp_l)]\!] ->_l^{\tau} (\sigma_i,v_l), 1 \le l \le t \\
\omega_G =  \{ (global,idf(v_1,...,v_t)) \} \\ \omega_\iota = \emptyset , \forall \iota \ne G}{(\sigma_i,relay@global\ idf(exp_1,...,exp_t);) ->_l^{?(\omega_1,...,\omega_n,\omega_G)} (\sigma_i,\cdot)}
\end{gather*}

\begin{gather*}
\drule{RELAYo}{\quad}{(\Omega_1,...,\Omega_n,\Omega_G) ->^{!(\omega_1,...,\omega_n,\omega_G)} (\Omega_1 \cup \omega_1,...,\Omega_n \cup \omega_n,\Omega_G \cup \omega_G)}
\end{gather*}

\begin{gather*}
\drule{SEQl}{(\sigma_i,Prog_1;)->_l^{?(\hat{\omega_1},...,\hat{\omega_n},\hat{\omega_G})}  (\sigma_i',\cdot)\\ (\sigma_i',Prog_2;)->_l^{?(\bar{\omega_1},...,\bar{\omega_n},\bar{\omega_G})} (\sigma_i'',\cdot) \\ \omega_\iota = \hat{\omega_\iota} \cup \bar{\omega_\iota}   , \forall \iota  }{(\sigma_i,Prog_1;Prog_2;) ->_l^{?(\omega_1,...,\omega_n,\omega_G)} (\sigma_i'',\cdot)}
\end{gather*}

\begin{gather*}
\drule{SEQg}{(\sigma_G,Prog_1;)->_g^{?(\hat{\omega_1},...,\hat{\omega_n},\hat{\omega_G})}  (\sigma_G',\cdot)\\ (\sigma_G',Prog_2;)->_g^{?(\bar{\omega_1},...,\bar{\omega_n},\bar{\omega_G})} (\sigma_G'',\cdot) \\ \omega_\iota = \hat{\omega_\iota} \cup \bar{\omega_\iota}   , \forall \iota  }{(\sigma_G,Prog_1;Prog_2;) ->_g^{?(\omega_1,...,\omega_n,\omega_G)} (\sigma_G'',\cdot)}
\end{gather*}

\subsection{Semantic Rules of Evaluations} 

\begin{gather*}
\drule{ETIDl}{ top(M_i).scope \ne global \\ N_{top(M_i)}(id) \ne \mathit{None} \\
[N_{top(M_i)}(id)]^{size(T_{top(M_i)}(id))}_{top(M_i)}=v }{\mathcal{E}[\![(\sigma_i,id) ]\!]->_l^{\tau}(\sigma_i,v)}
\end{gather*}

\begin{gather*}
\drule{ETIDg}{ top(M_G).scope = global \\ N_{top(M_G)}(id) \ne \mathit{None} \\
[N_{top(M_G)}(id)]^{size(T_{top(M_G)}(id))}_{top(M_G)}=v }{\mathcal{E}[\![(\sigma_G,id) ]\!]->_g^{\tau}(\sigma_G,v)}
\end{gather*}

\begin{gather*}
\drule{ESIDaa}{top(M_i).scope = (address,j) \\ N_{\Psi_{i,j}}(id) \ne \mathit{None}\\
[N_{\Psi_{i,j}}(id)]^{size(T_{\Psi_{i,j}}(id))}_{\Psi_{i,j}}=v }{\mathcal{E}[\![(\sigma_i,id) ]\!]->_l^{\tau}(\sigma_i,v)}
\end{gather*}

\begin{gather*}
\drule{ESIDas}{top(M_i).scope = (address,j) \\ N_{\Psi_{i,s}}(id) \ne \mathit{None}\\
[N_{\Psi_{i,s}}(id)]^{size(T_{\Psi_{i,s}}(id))}_{\Psi_{i,s}}=v }{\mathcal{E}[\![(\sigma_i,id) ]\!]->_l^{\tau}(\sigma_i,v)}
\end{gather*}

\begin{gather*}
\drule{ESIDag}{top(M_i).scope = (address,j) \\ N_{G_i}(id) \ne \mathit{None} \\ [N_{G_i}(id)]^{size(T_{G_i}(id))}_{G_i}=v }{\mathcal{E}[\![(\sigma_i,id) ]\!]->_l^{\tau}(\sigma_i,v)}
\end{gather*}

\begin{gather*}
\drule{ESIDss}{top(M_i).scope = engine \\ N_{\Psi_{i,s}}(id) \ne \mathit{None}\\
[N_{\Psi_{i,s}}(id)]^{size(T_{\Psi_{i,s}}(id))}_{\Psi_{i,s}}=v }{\mathcal{E}[\![(\sigma_i,id) ]\!]->_l^{\tau}(\sigma_i,v)}
\end{gather*}

\begin{gather*}
\drule{ESIDsg}{top(M_i).scope = engine \\ N_{G_i}(id) \ne \mathit{None} \\ [N_{G_i}(id)]^{size(T_{G_i}(id))}_{G_i}=v }{\mathcal{E}[\![(\sigma_i,id) ]\!]->_l^{\tau}(\sigma_i,v)}
\end{gather*}

\begin{gather*}
\drule{ESIDgg}{top(M_G).scope = global \\ N_G(id) \ne \mathit{None}\\
[N_G(id)]^{size(T_G(id))}_G=v }{\mathcal{E}[\![(\sigma_G,id) ]\!]->_g^{\tau}(\sigma_G,v)}
\end{gather*}

\begin{gather*}
\drule{EFUNaa}{\Lambda_{scope}(idf)=address \\ top(M_i).scope = (address,j)\\\\
\Lambda_{rettype}(idf) = Type \\ \Lambda_{body}(idf) = Block \\\\
\Lambda_{paraname}(idf)= (id_1,...,id_t) \\ \Lambda_{paratype}(idf) = (T_1,...,T_t)\\\\
M_i^{(1)}=addtop(M_i,top(M_i)) \\ top(M_i^{(1)}).scope = (address,j)\\\\
ret = new\_ID(top(M_i^{(1)})) \\ top(M_i^{(1)}).rt = ret\\\\
(\sigma_i^{(1)},T_1\ id_1 = exp_1;...;T_t\ id_t = exp_t;Type\ ret; Block) ->_l^{?(\omega_1,...,\omega_n,\omega_G)} (\sigma_i^{(2)},\cdot) \\\\
M_i^{(3)} = removetop(M_i^{(2)}) \\[N_{top(M_i^{(2)})}(ret)]^{size(Type)}_{top(M_i^{(2)})} = v} {\mathcal{E}[\![(\sigma_i,idf(exp_1,...,exp_t))]\!] ->_l^{?(\omega_1,...,\omega_n,\omega_G)} (\sigma_i^{(3)},v)}
\end{gather*}

\begin{gather*}
\drule{EFUNas}{\Lambda_{scope}(idf)=engine \\ top(M_i).scope = (address,j)\\\\
\Lambda_{rettype}(idf) = Type \\ \Lambda_{body}(idf) = Block \\\\
\Lambda_{paraname}(idf)= (id_1,...,id_t) \\ \Lambda_{paratype}(idf) = (T_1,...,T_t)\\\\
M_i^{(1)}=addtop(M_i,top(M_i)) \\ top(M_i^{(1)}).scope = engine\\\\
ret = new\_ID(top(M_i^{(1)})) \\ top(M_i^{(1)}).rt = ret\\\\
(\sigma_i^{(1)},T_1\ id_1 = exp_1;...;T_t\ id_t = exp_t;Type\ ret; Block) ->_l^{?(\omega_1,...,\omega_n,\omega_G)} (\sigma_i^{(2)},\cdot) \\\\
M_i^{(3)} = removetop(M_i^{(2)}) \\[N_{top(M_i^{(2)})}(ret)]^{size(Type)}_{top(M_i^{(2)})} = v} {\mathcal{E}[\![(\sigma_i,idf(exp_1,...,exp_t))]\!] ->_l^{?(\omega_1,...,\omega_n,\omega_G)} (\sigma_i^{(3)},v)}
\end{gather*}

\begin{gather*}
\drule{EFUNss}{\Lambda_{scope}(idf)=engine \\ top(M_i).scope = engine\\\\
\Lambda_{rettype}(idf) = Type \\ \Lambda_{body}(idf) = Block \\\\
\Lambda_{paraname}(idf)= (id_1,...,id_t) \\ \Lambda_{paratype}(idf) = (T_1,...,T_t)\\\\
M_i^{(1)}=addtop(M_i,top(M_i)) \\ top(M_i^{(1)}).scope = engine\\\\
ret = new\_ID(top(M_i^{(1)})) \\ top(M_i^{(1)}).rt = ret\\\\
(\sigma_i^{(1)},T_1\ id_1 = exp_1;...;T_t\ id_t = exp_t;Type\ ret; Block) ->_l^{?(\omega_1,...,\omega_n,\omega_G)} (\sigma_i^{(2)},\cdot) \\\\
M_i^{(3)} = removetop(M_i^{(2)}) \\[N_{top(M_i^{(2)})}(ret)]^{size(Type)}_{top(M_i^{(2)})} = v} {\mathcal{E}[\![(\sigma_i,idf(exp_1,...,exp_t))]\!] ->_l^{?(\omega_1,...,\omega_n,\omega_G)} (\sigma_i^{(3)},v)}
\end{gather*}

\begin{gather*}
\drule{EFUNgg}{\Lambda_{scope}(idf)=global \\ top(M_G).scope = global \\\\
\Lambda_{rettype}(idf) = Type \\ \Lambda_{body}(idf) = Block \\\\
\Lambda_{paraname}(idf)= (id_1,...,id_t) \\ \Lambda_{paratype}(idf) = (T_1,...,T_t)\\\\
M_G^{(1)}=addtop(M_G,top(M_G))  \\ top(M_G^{(1)}).scope = global  \\\\
ret = new\_ID(top(M_G^{(1)}))  \\ top(M_G^{(1)}).rt = ret  \\\\
(\sigma_G^{(1)},T_1\ id_1 = exp_1;...;T_t\ id_t = exp_t;Type\ ret; Block) ->_g^{?(\omega_1,...,\omega_n,\omega_G)}  (\sigma_G^{(2)},\cdot) \\\\
[N_{top(M_G^{(2)})}(ret)]^{size(Type)}_{top(M_G^{(2)})} = v  \\
M_G^{(3)} = removetop(M_G^{(2)}) } {\mathcal{E}[\![(\sigma_G,idf(exp_1,...,exp_t))]\!] ->_g^{?(\omega_1,...,\omega_n,\omega_G)}  (\sigma_G^{(3)},v)}
\end{gather*}

\subsection{Semantic Rules of Transactions}

\begin{gather*}
\drule{IFUNta}{\Lambda_{scope}(idf)=address \\ top(M_i).scope = \mathit{None}\\\\
\Lambda_{rettype}(idf) = \mathit{None} \\ \Lambda_{body}(idf) = Block \\\\
\Lambda_{paraname}(idf) =(id_1,...,id_t) \\ \Lambda_{paratype}(idf) = (T_1,...,T_t)\\\\
(i,j) = get\_sender\_address() \\ top(M_i^{(1)}).scope = (address,j) \\\\ M_i^{(1)}=addtop(M_i,\{ \}) \\ 
top(M_i^{(1)}).rt = \mathit{None} \\ M_i^{(3)} = removetop(M_i^{(2)})\\\\
(\sigma_i^{(1)},T_1\ id_1 = exp_1;...;T_t\ id_t = exp_t;Block) ->_l^{?(\omega_1,...,\omega_n,\omega_G)} (\sigma_i^{(2)},\cdot) 
}{(\sigma_i,idf(exp_1,...,exp_t)) ->_{l,t}^{?(\omega_1,...,\omega_n,\omega_G)} (\sigma_i^{(3)},\cdot)}
\end{gather*}

\begin{gather*}
\drule{IFUNts}{\Lambda_{scope}(idf)=engine \\ top(M_i).scope = \mathit{None}\\\\
\Lambda_{rettype}(idf) = \mathit{None} \\ \Lambda_{body}(idf) = Block \\\\
\Lambda_{paraname}(idf) =(id_1,...,id_t) \\ \Lambda_{paratype}(idf) = (T_1,...,T_t)\\\\
(i,j) = get\_sender\_address() \\ top(M_i^{(1)}).scope = engine \\\\
M_i^{(1)}=addtop(M_i,\{ \}) \\ top(M_i^{(1)}).rt = \mathit{None} \\ M_i^{(3)} = removetop(M_i^{(2)})\\\\
(\sigma_i^{(1)},T_1\ id_1 = exp_1;...;T_t\ id_t = exp_t;Block) ->_l^{?(\omega_1,...,\omega_n,\omega_G)} (\sigma_i^{(2)},\cdot) 
}{(\sigma_i,idf(exp_1,...,exp_t)) ->_{l,t}^{?(\omega_1,...,\omega_n,\omega_G)} (\sigma_i^{(3)},\cdot)}
\end{gather*}

\begin{gather*}
\drule{IFUNtg}{\Lambda_{scope}(idf)=global \\ top(M_G).scope = \mathit{None}\\\\
\Lambda_{rettype}(idf) = \mathit{None} \\ \Lambda_{body}(idf) = Block \\\\
\Lambda_{paraname}(idf) =(id_1,...,id_t) \\ \Lambda_{paratype}(idf) = (T_1,...,T_t)\\\\
M_G^{(1)}=addtop(M_G,\{ \})\\
top(M_G^{(1)}).scope = global \\\\
top(M_G^{(1)}).rt = \mathit{None} \\
M_G^{(3)} = removetop(M_G^{(2)}) \\\\
(\sigma_G^{(1)},T_1\ id_1 = exp_1;...;T_t\ id_t = exp_t;Block) ->_g^{?(\omega_1,...,\omega_n,\omega_G)} (\sigma_G^{(2)},\cdot) }{(\sigma_G,idf(exp_1,...,exp_t)) ->_{g,t}^{?(\omega_1,...,\omega_n,\omega_G)} (\sigma_G^{(3)},\cdot)}
\end{gather*}

\begin{gather*}
\drule{EFUNta}{\Lambda_{scope}(idf)=address \\ top(M_i).scope = \mathit{None}\\\\
\Lambda_{rettype}(idf) = Type \\ \Lambda_{body}(idf) = Block \\\\
\Lambda_{paraname}(idf)= (id_1,...,id_t) \\ \Lambda_{paratype}(idf) = (T_1,...,T_t)\\ \\
(i,j) = get\_sender\_address() \\ top(M_i^{(1)}).scope = (address,j)\\\\ 
ret = new\_ID(top(M_i^{(1)})) \\ top(M_i^{(1)}).rt = ret\\\\
(\sigma_i^{(1)},T_1\ id_1 = exp_1;...;T_t\ id_t = exp_t;Type\ ret; Block) ->_l^{?(\omega_1,...,\omega_n,\omega_G)} (\sigma_i^{(2)},\cdot) \\\\
M_i^{(1)}=addtop(M_i,\{ \})  \\ M_i^{(3)} = removetop(M_i^{(2)}) \\[N_{top(M_i^{(2)})}(ret)]^{size(Type)}_{top(M_i^{(2)})} = v} {\mathcal{E}[\![(\sigma_i,idf(exp_1,...,exp_t))]\!] ->_{l,t}^{?(\omega_1,...,\omega_n,\omega_G)} (\sigma_i^{(3)},v)}
\end{gather*}

\begin{gather*}
\drule{EFUNts}{\Lambda_{scope}(idf)=engine \\ top(M_i).scope = \mathit{None}\\\\
\Lambda_{rettype}(idf) = Type \\ \Lambda_{body}(idf) = Block \\\\
\Lambda_{paraname}(idf)= (id_1,...,id_t) \\ \Lambda_{paratype}(idf) = (T_1,...,T_t)\\\\
(i,j) = get\_sender\_address() \\ top(M_i^{(1)}).scope = engine\\\\
ret = new\_ID(top(M_i^{(1)})) \\ top(M_i^{(1)}).rt = ret\\\\
(\sigma_i^{(1)},T_1\ id_1 = exp_1;...;T_t\ id_t = exp_t;Type\ ret; Block) ->_l^{?(\omega_1,...,\omega_n,\omega_G)} (\sigma_i^{(2)},\cdot) \\\\
M_i^{(1)}=addtop(M_i,\{ \})  \\ M_i^{(3)} = removetop(M_i^{(2)}) \\[N_{top(M_i^{(2)})}(ret)]^{size(Type)}_{top(M_i^{(2)})} = v} {\mathcal{E}[\![(\sigma_i,idf(exp_1,...,exp_t))]\!] ->_{l,t}^{?(\omega_1,...,\omega_n,\omega_G)} (\sigma_i^{(3)},v)}
\end{gather*}

\begin{gather*}
\drule{EFUNtg}{\Lambda_{scope}(idf)=global \\ top(M_G).scope = \mathit{None} \\\\
\Lambda_{rettype}(idf) = Type \\ \Lambda_{body}(idf) = Block \\\\
\Lambda_{paraname}(idf)= (id_1,...,id_t) \\ \Lambda_{paratype}(idf) = (T_1,...,T_t)\\\\
M_G^{(1)}=addtop(M_G,\{ \})  \\ top(M_G^{(1)}).scope = global  \\\\
ret = new\_ID(top(M_G^{(1)}))  \\ top(M_G^{(1)}).rt = ret  \\\\
(\sigma_G^{(1)},T_1\ id_1 = exp_1;...;T_t\ id_t = exp_t;Type\ ret; Block) ->_g^{?(\omega_1,...,\omega_n,\omega_G)}  (\sigma_G^{(2)},\cdot) \\\\
[N_{top(M_G^{(2)})}(ret)]^{size(Type)}_{top(M_G^{(2)})} = v  \\
M_G^{(3)} = removetop(M_G^{(2)}) } {\mathcal{E}[\![(\sigma_G,idf(exp_1,...,exp_t))]\!] ->_{g,t}^{?(\omega_1,...,\omega_n,\omega_G)}  (\sigma_G^{(3)},v)}
\end{gather*}

\begin{gather*}
\drule{SEQtl}{(\sigma_i,Prog_1) ->_{l,t}^{?(\omega_1,...,\omega_n,\omega_G)} (\sigma_i',\cdot)}{(\sigma_i,Prog_1; Prog_2) ->_{l,t}^{?(\omega_1,...,\omega_n,\omega_G)} (\sigma_i',Prog_2)}
\end{gather*}

\begin{gather*}
\drule{SEQtg}{(\sigma_G,Prog_1) ->_{g,t}^{?(\omega_1,...,\omega_n,\omega_G)} (\sigma_G',\cdot)}{(\sigma_G,Prog_1; Prog_2) ->_{g,t}^{?(\omega_1,...,\omega_n,\omega_G)} (\sigma_G',Prog_2)}
\end{gather*}

\subsection{Semantic Rules of the System} 

\begin{gather*}
\drule{TAUg1}{\gamma_G ->_g^{\tau} \gamma_G' \\ \omega_\iota = \emptyset , \forall \iota }{\gamma_G ->_g^{?(\omega_1,...,\omega_n,\omega_G)} \gamma_G'}
\\[0.5em]
\drule{TAUg2}{\delta_G ->_g^{\tau} \gamma_G' \\ \omega_\iota = \emptyset , \forall \iota }{\delta_G ->_g^{?(\omega_1,...,\omega_n,\omega_G)} \gamma_G'}
\\[0.5em]
\drule{TAUl1}{\gamma_i ->_l^{\tau} \gamma_i' \\ \omega_\iota = \emptyset , \forall \iota }{\gamma_i ->_l^{?(\omega_1,...,\omega_n,\omega_G)} \gamma_i'}
\\[0.5em]
\drule{TAUl2}{\delta_i ->_l^{\tau} \gamma_i' \\ \omega_\iota = \emptyset , \forall \iota }{\delta_i ->_l^{?(\omega_1,...,\omega_n,\omega_G)} \gamma_i'}
\end{gather*}

\begin{gather*}
\drule{G}{\gamma_G ->_{g,t}^{?(\omega_1,...,\omega_n,\omega_G)} \gamma_G' \\  \Omega ->^{!(\omega_1,...,\omega_n,\omega_G)} \Omega' \\\\ \rho(\Gamma,\Gamma') \\ \mu(\Gamma,\Omega,\gamma_G) \\ \mu(\Gamma',\Omega',\gamma_G')}{(\Gamma,\Omega,\gamma_G) \sysstep_g (\Gamma',\Omega',\gamma_G') }
\end{gather*}

\begin{gather*}
\drule{G1}{\gamma_G ->_{g,t}^{\tau} \gamma_G'\\ \rho(\Gamma,\Gamma') \\ \mu(\Gamma,\Omega,\gamma_G) \\ \mu(\Gamma',\Omega,\gamma_G')}{(\Gamma,\Omega,\gamma_G) \sysstep_g (\Gamma',\Omega,\gamma_G') }
\end{gather*}

\begin{gather*}
\drule{L}{\gamma_i ->_{l,t}^{?(\omega_1,...,\omega_n,\omega_G)} \gamma_i' \\  \Omega ->^{!(\omega_1,...,\omega_n,\omega_G)} \Omega'\\
\Gamma=(\gamma_1,...,\gamma_i,...,\gamma_n)\\ \Gamma'=(\gamma_1,...,\gamma_i',...,\gamma_n) \\ \gamma_G = (\sigma_G,\cdot) \\ \mu(\Gamma,\Omega,\gamma_G) \\ \mu(\Gamma',\Omega',\gamma_G)}{(\Gamma,\Omega,\gamma_G) \sysstep_l (\Gamma',\Omega',\gamma_G)}
\end{gather*}

\begin{gather*}
\drule{L1}{\gamma_i ->_{l,t}^{\tau} \gamma_i' \\
\Gamma=(\gamma_1,...,\gamma_i,...,\gamma_n)\\ \Gamma'=(\gamma_1,...,\gamma_i',...,\gamma_n) \\ \gamma_G = (\sigma_G,\cdot) \\ \mu(\Gamma,\Omega,\gamma_G) \\ \mu(\Gamma',\Omega,\gamma_G)}{(\Gamma,\Omega,\gamma_G) \sysstep_l (\Gamma',\Omega,\gamma_G)}
\end{gather*}

\begin{gather*}
\drule{Ga}{ \mu(\Gamma,\Omega,\gamma_G)  }{(\Gamma,\Omega,\gamma_G) \sysstep_g^{*} (\Gamma,\Omega,\gamma_G) }
\\[0.5em]
\drule{Gs}{(\Gamma,\Omega,\gamma_G) \sysstep_g (\Gamma',\Omega',\gamma_G')  \\ (\Gamma',\Omega',\gamma_G') \sysstep_g^{*} (\Gamma'',\Omega'',\gamma_G'') }{(\Gamma,\Omega,\gamma_G) \sysstep_g^{*} (\Gamma'',\Omega'',\gamma_G'')}
\\[0.5em]
\drule{La}{ \mu(\Gamma,\Omega,\gamma_G) }{(\Gamma,\Omega,\gamma_G) \sysstep_l^{*} (\Gamma,\Omega,\gamma_G)}
\\[0.5em]
\drule{Ls}{(\Gamma,\Omega,\gamma_G) \sysstep_l (\Gamma',\Omega',\gamma_G) \\ (\Gamma',\Omega',\gamma_G) \sysstep_l^{*} (\Gamma'',\Omega'',\gamma_G) }{(\Gamma,\Omega,\gamma_G) \sysstep_l^{*} (\Gamma'',\Omega'',\gamma_G)}
\end{gather*}

\begin{gather*}
\drule{Para}{\gamma_i ->_{l,t}^{?(\hat{\omega_1},...,\hat{\omega_n},\hat{\omega_G})} \gamma_i'\\ \gamma_j ->_{l,t}^{?(\bar{\omega_1},...,\bar{\omega_n},\bar{\omega_G})} \gamma_j' \\ \gamma_G = (\sigma_G,\cdot) \\ \omega_\iota = \hat{\omega_\iota} \cup \bar{\omega_\iota}   , \forall \iota \\  \Omega ->^{!(\omega_1,...,\omega_n,\omega_G)} \Omega'\\
\Gamma=(\gamma_1,...,\gamma_i,...,\gamma_j,...,\gamma_n)\\ \Gamma'=(\gamma_1,...,\gamma_i',...,\gamma_j',...,\gamma_n) \\ \mu(\Gamma,\Omega,\gamma_G) \\ \mu(\Gamma',\Omega',\gamma_G)}{(\Gamma,\Omega,\gamma_G) \sysstep_l^{*} (\Gamma',\Omega',\gamma_G)}
\end{gather*}

\begin{gather*}
\drule{GL}{(\Gamma,\Omega,\gamma_G) \sysstep_g^{*} (\Gamma',\Omega',\gamma_G')  \\ (\Gamma',\Omega',\gamma_G') \sysstep_l^{*} (\Gamma'',\Omega'',\gamma_G') }{(\Gamma,\Omega,\gamma_G) \sysstep_{gl}^{*} (\Gamma'',\Omega'',\gamma_G')}
\end{gather*}

\section{Properties and Proofs}

\begin{proof}[Proof of Theorem~\ref{thm:locality}]
By inversion on the system rule used to derive the step.
The only applicable rule is \(L\), which updates exactly one component
\(\gamma_i\), leaves \(\gamma_G\) unchanged, and extends the mempools by
union with the emitted relay transactions.
\end{proof}

\begin{proof}[Proof of Theorem~\ref{thm:global-isolation}]
By inversion on the system rule deriving the step.
The only applicable rule is \(G\), which updates the global component
and synchronizes the copies of the global state in \(\Gamma\), while the mempools grow by union with the
emitted relay transactions.
\end{proof}

\begin{proof}[Proof of Theorem~\ref{thm:strong-comm-local}]
By Theorem~\ref{thm:locality}, each local system step
updates exactly one engine component, leaves the global component
unchanged, and extends the mempools by union with the emitted relay
transactions. Since \(i\neq j\), the two steps affect different engine
components. Hence applying the \(i\)-step and then the \(j\)-step yields
the same engine components, the same global component, and the same
mempool as applying the \(j\)-step and then the \(i\)-step. Therefore
the two steps commute, and the resulting configuration \(C'\) is unique.
\end{proof}

\begin{theorem}[Mempool monotonicity]
\label{thm:mempool-monotonicity}
For any system execution
\[
(\Gamma,\Omega,\gamma_G)\sysstep^{*}(\Gamma',\Omega',\gamma_G'),
\]
we have
\[
\Omega \subseteq \Omega'.
\]
\end{theorem}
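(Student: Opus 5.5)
We read $\sysstep^{*}$ as any finite chain of system-level steps, each an instance of $\sysstep_g$ or $\sysstep_l$, possibly grouped by \textsc{Ga}, \textsc{Gs}, \textsc{La}, \textsc{Ls}, \textsc{Para}, or \textsc{GL}. The inclusion $\Omega \subseteq \Omega'$ is meant componentwise, that is, $\Omega_\iota \subseteq \Omega'_\iota$ for every $\iota \in \{1,\ldots,n,G\}$. The proof is by induction on the derivation of the multi-step relation, using the one-step facts already established.

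\emph{Single steps.} For a global step $(\Gamma,\Omega,\gamma_G)\sysstep_g(\Gamma',\Omega',\gamma_G')$, Theorem~\ref{thm:global-isolation} gives that $\Omega'$ is obtained from $\Omega$ by union with the emitted relay sets. Inversion on \textsc{G} makes this concrete: the premise $\Omega \to^{!(\omega_1,\ldots,\omega_n,\omega_G)} \Omega'$ can only come from \textsc{RELAYo}, so $\Omega'_\iota = \Omega_\iota \cup \omega_\iota \supseteq \Omega_\iota$. For a local step, Theorem~\ref{thm:locality} gives the same conclusion. The $\tau$-variants \textsc{G1} and \textsc{L1} leave $\Omega' = \Omega$. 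The rule \textsc{Para} again has a \textsc{RELAYo} premise with $\omega_\iota = \hat\omega_\iota \cup \bar\omega_\iota$, so it also only enlarges each component.

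\emph{Chains.} The base cases \textsc{Ga} and \textsc{La} leave $\Omega$ unchanged, and reflexivity of $\subseteq$ handles them. For \textsc{Gs} and \textsc{Ls}, the first premise is a single step, giving $\Omega\subseteq\Omega'$. The induction hypothesis applied to the second premise gives $\Omega'\subseteq\Omega''$, and transitivity of componentwise inclusion concludes. For \textsc{GL}, the argument is the same: combine the results for the $\sysstep_g^{*}$ and $\sysstep_l^{*}$ premises by transitivity. If $\sysstep^{*}$ is instead read as the reflexive–transitive closure of $\sysstep_g \cup \sysstep_l$, the same induction on the length of the chain works.

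\emph{Main obstacle.} The only delicate point is fixing what $\sysstep^{*}$ denotes, since the paper defines several starred relations, and making sure every rule that can produce it has been covered. In particular, \textsc{Para} must be handled directly, because it is not derived from the single-step rules. Once the case list is fixed, each case reduces to the fact that \textsc{RELAYo} is the only rule for mempool transitions and that it acts by union. No rule ever removes transactions from a mempool within the system semantics, because block assembly lies outside the model.
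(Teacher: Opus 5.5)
Your proof is correct and takes the same route as the paper: induction on the derivation of $\sysstep^{*}$, where each single system step changes the mempool only through \textsc{RELAYo}, which acts by union, and the closure rules are discharged by reflexivity and transitivity of componentwise inclusion. Your explicit handling of \textsc{G1}, \textsc{L1}, and especially \textsc{Para}, which the paper's short proof folds into ``each one-step system transition,'' adds care but does not change the approach.
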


\begin{proof}
By induction on the derivation of \(\sysstep^{*}\).
In each one-step system transition, the mempool changes only through the
relay-receiving operation, which extends it by set union (See Rule RELAYo). Hence the mempool is monotonically increasing in one step. The multi-step result then follows immediately from the closure rules.
\end{proof}

\begin{theorem}[Associativity of mempool reception]
\label{thm:relay-reception-union}
If
\[
\Omega \to^{!(\hat{\omega}_1,\ldots,\hat{\omega}_n,\hat{\omega}_G)} \Omega'
\qquad\text{and}\qquad
\Omega' \to^{!(\bar{\omega}_1,\ldots,\bar{\omega}_n,\bar{\omega}_G)} \Omega'',
\]
and
\[
\omega_{\iota}=\hat{\omega}_{\iota}\cup \bar{\omega}_{\iota}
\qquad\text{for every } \iota\in\{1,\ldots,n,G\},
\]
then
\[
\Omega \to^{!(\omega_1,\ldots,\omega_n,\omega_G)} \Omega''.
\]
\end{theorem}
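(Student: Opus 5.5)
The plan is to argue by inversion on the two reception judgements, using that rule \textsc{RELAYo} is the only rule deriving a judgement of the form \(\Omega \to^{!(\ldots)} \Omega'\) and that its conclusion is fully determined by its source mempool and its label. Inverting the first hypothesis gives, writing \(\Omega=(\Omega_1,\ldots,\Omega_n,\Omega_G)\),
\[
\Omega' = (\Omega_1\cup\hat{\omega}_1,\ldots,\Omega_n\cup\hat{\omega}_n,\Omega_G\cup\hat{\omega}_G).
\]
Inverting the second hypothesis on this \(\Omega'\) then gives, componentwise for every \(\iota\in\{1,\ldots,n,G\}\),
\[
\Omega''_\iota = (\Omega_\iota\cup\hat{\omega}_\iota)\cup\bar{\omega}_\iota .
\]

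The second step is a purely set-theoretic rewriting. By associativity of union, and using the hypothesis \(\omega_\iota=\hat{\omega}_\iota\cup\bar{\omega}_\iota\), we get for each \(\iota\)
\[
\Omega''_\iota = \Omega_\iota\cup(\hat{\omega}_\iota\cup\bar{\omega}_\iota) = \Omega_\iota\cup\omega_\iota .
\]
Hence \(\Omega''=(\Omega_1\cup\omega_1,\ldots,\Omega_n\cup\omega_n,\Omega_G\cup\omega_G)\).

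Finally, a single application of \textsc{RELAYo}, which has no premises, with label \(!(\omega_1,\ldots,\omega_n,\omega_G)\) yields exactly \(\Omega\to^{!(\omega_1,\ldots,\omega_n,\omega_G)}\Omega''\). This concludes the argument.

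There is no real obstacle here. The only point needing care is justifying the inversion, namely that \textsc{RELAYo} is the unique rule for the reception arrow \(\to^{!}\). This is immediate from the rule listing: all other rules carry subscripts \(l\), \(g\), or \(t\) or act on components rather than mempools. Mempools are tuples of sets, so equality is componentwise, and associativity of \(\cup\) is standard.
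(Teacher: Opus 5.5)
Your proof is correct and follows the paper's approach: invert \textsc{RELAYo} on both reception steps, regroup the componentwise unions by associativity, and reapply \textsc{RELAYo} with the combined label. The only difference is that you spell out the inversion and the set-theoretic rewriting, which the paper leaves as ``immediate from Rule RELAYo.''
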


\begin{proof}
Immediate from Rule RELAYo. By definition of the receiving step,
the first transition updates each mempool component by union with
\(\hat{\omega}_{\iota}\), and the second by union with
\(\bar{\omega}_{\iota}\). This is exactly the same as a single receiving step with \(\omega_{\iota}=\hat{\omega}_{\iota}\cup \bar{\omega}_{\iota}\) for each \(\iota\).
\end{proof}

\section{Proofs of the Bisimulation Results}

\paragraph{Notation convention.}
Throughout the bisimulation section, symbols from the original
(non-compositional) semantics are annotated with the superscript
\(\mathsf{o}\), while unannotated symbols refer to the compositional semantics. In particular, we write \(C^{\mathsf{o}}\), \(\sigma_i^{\mathsf{o}}\), \(\Omega_i^{\mathsf{o}}\),
\(Prog_i^{\mathsf{o}}\), \(G^{\mathsf{o}}\) and \(\to^{\mathsf{o}}\) for the corresponding notions in the original semantics.

\begin{definition}[Global prefix and local suffix]
Let \(Prog\) be a program consisting of a sequence of
T-functions of the form
\[
idf_1(\ldots) ; idf_2(\ldots) ; \cdots ; idf_n(\ldots),
\]
where \(n \ge 0\).

In Crystality, T-functions whose scope is \textsf{global}
must appear before those whose scope is \textsf{address} or
\textsf{engine}. Hence every such program \(Prog\) can be uniquely
decomposed as
\[
Prog = Prog_g ; Prog_l,
\]
where every T-function in \(Prog_g\) has scope \textsf{global},
and every T-function in \(Prog_l\) has scope \textsf{address} or
\textsf{engine}.

We define
\[
\mathsf{gprefix}(Prog) = Prog_g,
\qquad
\mathsf{lsuffix}(Prog) = Prog_l.
\]
\end{definition}

\begin{lemma}
For any program \(Prog\) consisting of T-functions,
\[
Prog = \mathsf{gprefix}(Prog) \; ; \; \mathsf{lsuffix}(Prog).
\]
\end{lemma}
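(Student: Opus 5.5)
The statement follows almost directly from the definition of global prefix and local suffix, so the plan is to unfold that definition. The argument has two parts: first justify that the decomposition used there exists and is unique, then read off the equation.

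Let \(Prog = idf_1(\ldots);\cdots;idf_m(\ldots)\) be a sequence of T-functions.

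\emph{Existence.} By the Crystality execution convention, all \textsf{global} T-functions precede the \textsf{address}/\textsf{engine} ones. Hence there is an index \(p\) with \(0\le p\le m\) such that \(\Lambda_{scope}(idf_q)=global\) exactly when \(q\le p\). Concretely, take \(p\) to be the largest index for which every \(idf_q\) with \(q\le p\) has global scope. The convention ensures that no global T-function occurs after position \(p\). Set \(Prog_g = idf_1(\ldots);\cdots;idf_p(\ldots)\) and \(Prog_l = idf_{p+1}(\ldots);\cdots;idf_m(\ldots)\). When \(p=0\) or \(p=m\), the corresponding part is the empty program \(\cdot\). By construction, \(Prog = Prog_g;Prog_l\).

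\emph{Uniqueness.} Suppose \(Prog = P_g;P_l\) is another decomposition with the required scope properties. Then the length of \(P_g\) must equal \(p\). If it were shorter, \(P_l\) would begin with a global T-function. If it were longer, \(P_g\) would contain a non-global T-function. So the two decompositions coincide. This uniqueness is what makes \(\mathsf{gprefix}\) and \(\mathsf{lsuffix}\) well-defined functions.

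With well-definedness in hand, substituting the definitions \(\mathsf{gprefix}(Prog)=Prog_g\) and \(\mathsf{lsuffix}(Prog)=Prog_l\) into \(Prog = Prog_g;Prog_l\) gives the claim.

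The only delicate point is the treatment of sequencing with the empty program. We must have \(\cdot;P = P\) and \(P;\cdot = P\), so that the cases \(p=0\) and \(p=m\) go through. Beyond that, \(;\) should be treated as associative concatenation of T-function lists. I would state both of these explicitly as conventions on program syntax. With them in place, the proof is a short structural argument, or equivalently an induction on \(m\): consume the leading T-function, putting it into the prefix if it is global and otherwise placing the whole remainder into the suffix.
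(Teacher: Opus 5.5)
Your proposal is correct and takes essentially the paper's approach: the paper gives no separate proof, because its definition of \(\mathsf{gprefix}\) and \(\mathsf{lsuffix}\) already asserts the unique decomposition \(Prog = Prog_g;Prog_l\), so the lemma is that definition read back. Your existence and uniqueness argument (splitting at the maximal all-global prefix index \(p\)) and your stated conventions on \(\cdot\) and on the associativity of \(;\) make explicit what the paper takes for granted from the global-before-local execution convention.
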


\begin{lemma}[Empty-stack invariant at transaction boundaries]
In both the original semantics and the compositional semantics,
all temporary stacks are empty at transaction boundaries.
That is, immediately before and after the execution of a
transaction, \(M_i = \emptyset\) for every engine \(i\),
and \(M_G = \emptyset\).
\end{lemma}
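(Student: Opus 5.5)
The plan is to prove the invariant by induction on the derivation of transaction-level execution, separately in the two semantics. The invariant is: at every transaction-boundary configuration, every temporary stack is empty. In the compositional semantics these stacks are $M_1,\ldots,M_n$ and $M_G$; in the original semantics they are the $M_i^{\mathsf{o}}$.

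\emph{Base case.} Initial configurations are taken with empty stacks, since no function call is in progress before the first transaction. This is a convention of the model and I will make it explicit.

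\emph{Inductive step, compositional semantics.} Suppose the invariant holds at a boundary configuration $(\Gamma,\Omega,\gamma_G)$ and one system step $\sysstep_l$ or $\sysstep_g$ is taken. I invert on the rule used.

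For \textsc{L}/\textsc{L1}, the premise is a transaction step $\gamma_i \to_{l,t}\gamma_i'$. It is derived by \textsc{SEQtl} from one of \textsc{IFUNta}, \textsc{IFUNts}, \textsc{EFUNta}, \textsc{EFUNts}. Each of these pushes exactly one layer, $M_i^{(1)}=addtop(M_i,\{\})$, runs the body to obtain $M_i^{(2)}$, and returns $M_i^{(3)}=removetop(M_i^{(2)})$. All other stacks are untouched: \textsc{L} changes only $\gamma_i$ and leaves $\gamma_G$ fixed.

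For \textsc{G}/\textsc{G1}, the argument is symmetric with $M_G$ and \textsc{IFUNtg}/\textsc{EFUNtg}. Here $\rho(\Gamma,\Gamma')$ guarantees that the engine stacks do not change.

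The multi-step closures \textsc{Ga}, \textsc{Gs}, \textsc{La}, \textsc{Ls}, \textsc{GL}, \textsc{Para} then follow by a straightforward induction.

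\emph{Key auxiliary lemma (stack balance).} To conclude $M^{(3)}=M$, I need: if $(\sigma,P)\to^{?(\ldots)}(\sigma',\cdot)$ is a code-level derivation, then $M'$ equals $M$ except possibly in its top layer. I prove this by induction on the code-level derivation, covering all statement and evaluation rules.
\begin{itemize}
\item \textsc{TVD}, \textsc{TASSIGN} and the \textsc{ETID} rules modify only $top(M)$.
\item \textsc{SASSIGN}, the \textsc{ESID} rules and the \textsc{RELAY} rules do not touch $M$ beyond what their subderivations do.
\item \textsc{SEQ} composes two such steps.
\item \textsc{IFUN}/\textsc{EFUN} push a layer, apply the induction hypothesis to the body (which modifies the stack at or above the pushed layer only), then pop it.
\end{itemize}
Applying the lemma to the body inside a T-function rule, $M_i^{(2)}$ agrees with $M_i^{(1)}=addtop(M_i,\{\})$ below the top layer. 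Hence $removetop(M_i^{(2)})=M_i=\emptyset$.

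\emph{Original semantics.} The same structure applies to the original rules, which are analogous: a push/pop T-function rule with a per-engine stack. There is one subtlety. In the original semantics a global transaction runs on some $M_i^{\mathsf{o}}$; this is why $Enc$ routes $M_i^{\mathsf{o}}$ to $M_G$ when $Prog_G\neq\cdot$. The balance lemma still gives that this stack returns to its prior value. Alternatively, for reachable configurations I can transport the result along $Enc$/$Dec$ using Lemma~\ref{thm:enc-dec-inverse}. I prefer the direct argument, since that lemma may itself rely on this invariant.

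\emph{Main obstacle.} I expect the stack-balance lemma to be the hardest step. Two points need care. First, the rules use the implicit zooming convention, under which $\hat\sigma_i$ produced by evaluating an expression could in principle carry a modified stack. Second, the \textsc{EFUN} rules read $ret$ from $top(M^{(2)})$ before popping, so one must check that nested calls inside expressions also pop back. I would handle this by a mutual induction over statement and expression derivations, with the stronger statement ``the stack below the current top is unchanged and the depth is preserved.''
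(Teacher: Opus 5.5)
Your proposal is correct and uses the same idea as the paper's proof: every frame pushed by a T-function or call rule is removed by the matching $removetop$ before the transaction ends. The paper simply asserts this ``by inspection of the semantic rules,'' whereas you spell out the base case (initially empty stacks), the induction over the system rules \textsc{L}/\textsc{G} (using $\rho$ and the unchanged $\gamma_G$ to show the other stacks are untouched), and the stack-balance lemma by mutual induction over statement and expression derivations. Your version is the same argument, made more rigorous.
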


\begin{proof}
The property follows by inspection of the semantic rules.
Function-call rules push a frame onto the corresponding stack,
and the matching return rules pop the frame before the transaction
terminates. Hence all stacks are empty at transaction boundaries.
\end{proof}

\begin{definition}[Mempool projection operations]
For each engine mempool \(\Omega_i\), we write
\[
\mathsf{GTx}(\Omega_i)
\;=\;
\{\, (global,\, idf(...)) \in \Omega_i \mid idf(...) \text{ is a T-function call} \,\}
\]
for the set of \textsf{global} transactions contained in \(\Omega_i\). And we write
\[
\mathsf{LTx}(\Omega_i)
\;=\;
\Omega_i \setminus \mathsf{GTx}(\Omega_i).
\]
for the set of \textsf{address} or \textsf{engine} transactions contained in \(\Omega_i\). 
\end{definition}

\begin{lemma}[Agreement of global transactions in mempools] \label{mempool_lemma}
For any reachable configuration in the original semantics, the global
transactions contained in the mempools of all engines are identical.
Formally, for all \(1 \le i,j \le n\),
\[
\mathsf{GTx}(\Omega_i)=\mathsf{GTx}(\Omega_j).
\]
\end{lemma}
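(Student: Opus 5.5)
The plan is to prove Lemma~\ref{mempool_lemma} by induction on the length of the execution reaching the configuration in the original semantics, with the invariant
\[
\mathsf{GTx}(\Omega_i^{\mathsf{o}})=\mathsf{GTx}(\Omega_j^{\mathsf{o}}) \quad \text{for all } 1\le i,j\le n .
\]
For the base case, I would use the fact that in an initial configuration all mempools are empty (no relays have been issued yet, as in the running example where \(\Omega^{(0)}=(\emptyset,\emptyset,\emptyset)\)). Hence \(\mathsf{GTx}(\Omega_i^{\mathsf{o}})=\emptyset\) for every \(i\), and the invariant holds trivially.

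For the inductive step, I would do a case analysis on the rule of the original semantics used for the last step. Most rules (declarations, assignments, evaluations, function calls, sequencing, transaction wrappers) leave the mempools untouched, so the invariant is inherited directly. The only rules that modify mempools are the relay rules of the original semantics, which correspond to \textsc{RELAYal}, \textsc{RELAYag}, \textsc{RELAYsl}, \textsc{RELAYsg}, \textsc{RELAYgl1} and \textsc{RELAYgl2}. For these I would argue as follows.
\begin{itemize}
\item Relays targeting an address or the engines add only entries of the form \((address,j,\ldots)\) or \((engine,\ldots)\). These are not global T-function calls, so \(\mathsf{GTx}\) of every mempool is unchanged.
\item A relay \(\texttt{relay@global}\) adds the same entry \((global,idf(v_1,\ldots,v_t))\) to \emph{every} engine mempool. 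This is exactly the reading given by \(\mathsf{Dec}\), where \(\Omega_i^{\mathsf{o}}=\Omega_i\cup\Omega_G\), and it is reflected in the \(\Delta\Omega=(w_1\cup w_G,\ldots,w_n\cup w_G)\) shape in Theorems~\ref{lem:local-code-bisim} and~\ref{lem:global-code-bisim}. Since \(\mathsf{GTx}\) distributes over union, \(\mathsf{GTx}(\Omega_i^{\mathsf{o}}\cup\{x\})=\mathsf{GTx}(\Omega_i^{\mathsf{o}})\cup\{x\}\) for each \(i\), and equal sets stay equal.
\end{itemize}
Composite steps (sequencing, function bodies) accumulate such updates, so I would strengthen the induction to a statement about derivations. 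Every derivation of a step produces a mempool delta \(\Delta\Omega=(\Delta\Omega_1,\ldots,\Delta\Omega_n)\) with \(\mathsf{GTx}(\Delta\Omega_i)\) independent of \(i\). This is proved by induction on the derivation tree, and the delta property is closed under componentwise union.

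If block boundaries in the original semantics remove transactions from mempools to form programs, I would also check that the selection removes global transactions uniformly from all mempools. This uniformity is consistent with \(\mathsf{wf}^{\mathsf{o}}\) requiring equal \(\mathsf{gprefix}\) across engines, so removing the same set from each preserves equality.

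The main obstacle I expect is the relay-to-global case. The argument depends on the original semantics broadcasting a global relay to all engine mempools rather than to a single one. Since the original rules are not reproduced here, I would first confirm this from \cite{xu2025operational}. If the original rule instead inserted into one designated mempool, the lemma would fail as stated, and the invariant would have to be restated with respect to that convention.
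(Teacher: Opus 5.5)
Your proposal is correct and follows the same idea as the paper's proof. The paper simply inspects the relay rules: global transactions are never user-initiated, arise only from \texttt{relay@global}, and are broadcast to every engine mempool, so all $\mathsf{GTx}(\Omega_i)$ stay equal; your explicit induction (empty initial mempools, a per-step mempool delta whose global part is independent of $i$, and distributivity of $\mathsf{GTx}$ over union) is a careful rendering of that inspection.

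One caution: do not cite $Dec$ or the $(w_1\cup w_G,\ldots,w_n\cup w_G)$ shape of the code-level bisimulation theorems as evidence for the broadcast reading. These are design choices of the correspondence, and their justification (well-definedness of $Enc$, and of $w_G$ in part (ii)) rests on this very lemma. The broadcast must be read off the original relay rule, as you already plan to do.
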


\begin{proof}
By inspection of the relay rules. Global transactions cannot be
user-initiated; they are generated only by \texttt{relay@global}.
Moreover, every such relay is broadcast to all engines, so the same
global transaction tuple is added to every mempool simultaneously.
Hence the sets of global transactions in all \(\Omega_i\) are identical.
\end{proof}

\begin{remark}[Decomposition of mempools]
By the previous lemma, the set of global transactions extracted from
each engine mempool is the same. Each engine mempool \(\Omega_i\) can be decomposed as
\[
\Omega_i = \mathsf{LTx}(\Omega_i) \cup \mathsf{GTx}(\Omega_i),
\]
\end{remark}

\begin{lemma}[Well-formedness of reachable configurations in the compositional semantics]
Every reachable configuration in the compositional semantics is
well formed. That is, if
\[
(\Gamma,\Omega,\gamma_G) \;\sysstep^{*}\; (\Gamma',\Omega',\gamma_G'),
\]
then
\[
\mathsf{wf}(\Gamma',\Omega',\gamma_G').
\]
\end{lemma}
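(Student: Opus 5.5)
The proof goes by induction on the derivation of the multi-step relation \(\sysstep^{*}\). We read \(\sysstep^{*}\) as generated by the closure rules \textsc{Ga}, \textsc{Gs}, \textsc{La}, \textsc{Ls}, \textsc{Para} and \textsc{GL}, with single steps given by \textsc{G}, \textsc{G1}, \textsc{L} and \textsc{L1}. The key observation is that well-formedness \(\mathsf{wf}\) coincides syntactically with the side condition \(\mu\): both assert \(G_1=\cdots=G_n=G\). Moreover, every system rule carries \(\mu\) of its target configuration as an explicit premise. So the argument is mostly a matter of reading off premises. We do not need any invariant about the inner component transitions \(\to_l\) and \(\to_g\).

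First, I will prove the one-step claim. For each single-step rule \textsc{G}, \textsc{G1}, \textsc{L} and \textsc{L1}, the conclusion is \((\Gamma,\Omega,\gamma_G)\sysstep (\Gamma',\Omega',\gamma_G')\) (with \(\gamma_G'=\gamma_G\) in the local rules). The premise list contains \(\mu(\Gamma',\Omega',\gamma_G')\), or \(\mu(\Gamma',\Omega,\gamma_G)\) in \textsc{L1} and \(\mu(\Gamma',\Omega,\gamma_G')\) in \textsc{G1}. By inversion, the target therefore satisfies \(\mathsf{wf}\). The rule \textsc{Para} is handled the same way, since it also lists \(\mu(\Gamma',\Omega',\gamma_G)\) among its premises.

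Next come the closure rules. In the base cases \textsc{Ga} and \textsc{La}, the target equals the source, and the premise \(\mu(\Gamma,\Omega,\gamma_G)\) gives \(\mathsf{wf}\) directly. In the step cases \textsc{Gs} and \textsc{Ls}, the target is the target of the inner \(\sysstep_g^{*}\) (resp.\ \(\sysstep_l^{*}\)) derivation, so the induction hypothesis applies. For \textsc{GL}, the target is the target of the second sub-derivation \(\sysstep_l^{*}\), and again the induction hypothesis applies. Because every base case already establishes \(\mathsf{wf}\) of its endpoint, the induction never needs \(\mathsf{wf}\) of the initial configuration. If one prefers to read "reachable" as "reachable from a well-formed initial configuration", the zero-step case instead uses that hypothesis.

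The main obstacle is not mathematical depth but pinning down exactly what \(\sysstep^{*}\) denotes. The statement uses an unsubscripted \(\sysstep^{*}\), whereas the paper defines \(\sysstep_g^{*}\), \(\sysstep_l^{*}\) and \(\sysstep_{gl}^{*}\). I will fix \(\sysstep^{*}\) as the reflexive-transitive closure of \(\sysstep_g\cup\sysstep_l\). I will then check that the derivation-based closures above are sub-relations of it and that every way of forming the closure preserves \(\mathsf{wf}\). For the reflexive case of that closure, we need \(\mathsf{wf}\) of the source. It is available either from the convention that reachable configurations start from a well-formed initial configuration, or by noting that a nontrivial derivation always ends in a single step whose premise supplies \(\mu\) of the target.
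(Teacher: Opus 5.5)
Your proposal is correct and is essentially the paper's proof: induction on the derivation of $\sysstep^{*}$, observing that every system-level rule preserves the predicate $\mu$, which is literally $\mathsf{wf}$. Indeed, each of \textsc{G}, \textsc{G1}, \textsc{L}, \textsc{L1}, \textsc{Para}, \textsc{Ga}, \textsc{La} lists $\mu$ of its target as a premise.

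Your treatment of the zero-step case is a useful refinement that the paper leaves implicit. Under \textsc{Ga}/\textsc{La} it is discharged by their $\mu$ premise. For a bare reflexive-transitive closure, only the well-formed-initial-configuration convention works; your alternative about a final single step does not apply when there are zero steps.
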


\begin{proof}
By induction on the derivation of the reachability relation.
It suffices to observe that every system-level step, whether $\sysstep_l^{*}$ or $\sysstep_g^{*}$, preserves the predicate $\mu(\Gamma,\Omega,\gamma_G) $ and thus the equality
\(
G_1 = \cdots = G_n = G
\).
Hence all reachable configurations are well formed.
\end{proof}

\begin{lemma}[Well-formedness of reachable configurations in the original semantics]
Every reachable configuration in the original semantics is
well formed. That is, if
\[
C^{\mathsf{o}} \;{\to^{\mathsf{o}}}^{*}\; C^{\mathsf{o}\prime},
\]
then
\[
\mathsf{wf}^{\mathsf{o}}(C^{\mathsf{o}\prime}).
\]
\end{lemma}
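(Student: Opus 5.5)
We argue by induction on the length of the derivation of \(C^{\mathsf{o}} \;{\to^{\mathsf{o}}}^{*}\; C^{\mathsf{o}\prime}\), in parallel with the proof of the corresponding lemma for the compositional semantics. The goal is an invariant showing that \(\mathsf{wf}^{\mathsf{o}}\), i.e.\ the equality \(\mathsf{gprefix}(Prog_1^{\mathsf{o}})=\cdots=\mathsf{gprefix}(Prog_n^{\mathsf{o}})\), holds at every reachable configuration.

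\emph{Base case.} The initial configuration must satisfy the predicate. Global transactions are never user-initiated, so at the start of a block the global part of each engine's program comes only from the global relay transactions collected from the mempools. By Lemma~\ref{mempool_lemma}, \(\mathsf{GTx}(\Omega_i^{\mathsf{o}})=\mathsf{GTx}(\Omega_j^{\mathsf{o}})\) for all \(i,j\). Since block assembly places global transactions before local ones in the same way for every engine, all engines receive the same global prefix. If the initial configuration has no pending global transactions, every prefix is \(\cdot\) and the claim is trivial.

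\emph{Inductive step.} We case on the original-semantics rule that produces the last step.
\begin{itemize}
\item \emph{Steps that do not touch the program sequence.} Expression evaluation, inner statement steps, and relay rules alter only states or mempools. The \(Prog_i^{\mathsf{o}}\) are unchanged, so the predicate is preserved.
\item \emph{Steps that complete a local (address or engine) transaction in engine \(i\).} Such a step can only fire once \(\mathsf{gprefix}(Prog_i^{\mathsf{o}})=\cdot\), by the global-before-local convention. The invariant then says every prefix is \(\cdot\). Removing a local T-function from the head of \(\mathsf{lsuffix}(Prog_i^{\mathsf{o}})\) leaves all prefixes empty.
\item \emph{Steps that complete a global transaction.} In the original semantics, a global T-function is executed synchronously on behalf of all engines. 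Its effect on \(G^{\mathsf{o}}\) is shared, and the head global call is consumed from every \(Prog_i^{\mathsf{o}}\) at once. Equal prefixes therefore stay equal after the common head is dropped.
\end{itemize}

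\emph{Main obstacle.} The hardest point is the global-transaction case: we must confirm from the original rules that a global transaction really removes the same head from every engine's program, rather than from one engine only. If the original semantics instead executed it engine by engine, the plain equality invariant would fail at intermediate points. In that situation we would first try strengthening the invariant to hold at transaction boundaries only, since that is where \(\mathsf{wf}^{\mathsf{o}}\) is actually used by \(Enc_T\) and \(R_T\). We would then argue with the empty-stack invariant that the global phase is completed atomically across all engines before any local step is allowed.
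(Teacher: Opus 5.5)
Your proposal is correct and takes essentially the same route as the paper, whose entire proof is that, by inspection of the original rules, global transactions are executed jointly by all engines, so the equality $\mathsf{gprefix}(Prog_1^{\mathsf{o}})=\cdots=\mathsf{gprefix}(Prog_n^{\mathsf{o}})$ is preserved along every execution. Your induction, case split and explicit base case spell out what the paper leaves implicit; note that the base case (well-formedness of the initial, block-assembled configuration) is an unformalized assumption in both treatments, since the paper explicitly does not model block assembly.
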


\begin{proof}
By inspection of the semantic rules of the original semantics.
Global transactions are executed jointly by all engines. Consequently, \(\mathsf{gprefix}(Prog_1^{\mathsf{o}})=\cdots=
\mathsf{gprefix}(Prog_n^{\mathsf{o}})\) is preserved throughout execution.
\end{proof}

\begin{lemma}[Agreement of local stacks during global execution] \label{mi_lemma}
Let
\[
C^{\mathsf{o}}
=
(\sigma_1^{\mathsf{o}},\Omega_1^{\mathsf{o}},Prog_1^{\mathsf{o}},\ldots,
\sigma_n^{\mathsf{o}},\Omega_n^{\mathsf{o}},Prog_n^{\mathsf{o}},G^{\mathsf{o}})
\]
be a well-formed configuration of the original semantics, where
\[
\sigma_i^{\mathsf{o}}=(\Psi_i^{\mathsf{o}},M_i^{\mathsf{o}})
\qquad (1 \le i \le n).
\]
If
\[
\mathsf{gprefix}(Prog_i^{\mathsf{o}}) \neq \cdot
\]
for some \(i\) (and hence for all \(i\), by well-formedness), then
\[
M_1^{\mathsf{o}} = M_2^{\mathsf{o}} = \cdots = M_n^{\mathsf{o}}.
\]
\end{lemma}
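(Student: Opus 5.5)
The plan is a case analysis on the position of the configuration relative to transaction boundaries, combined with the ordering convention that global T-functions precede local ones. Assume \(\mathsf{wf}^{\mathsf{o}}(C^{\mathsf{o}})\) and \(\mathsf{gprefix}(Prog_i^{\mathsf{o}})\neq\cdot\). By well-formedness, all engines share the same nonempty global prefix \(P_g=\mathsf{gprefix}(Prog_1^{\mathsf{o}})=\cdots=\mathsf{gprefix}(Prog_n^{\mathsf{o}})\). By the decomposition lemma \(Prog=\mathsf{gprefix}(Prog);\mathsf{lsuffix}(Prog)\), every \(Prog_i^{\mathsf{o}}\) has the form \(P_g;\mathsf{lsuffix}(Prog_i^{\mathsf{o}})\).

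\emph{First step: no local frame can be on any \(M_i^{\mathsf{o}}\).} Since \(P_g\) is nonempty and heads every program, no \textsf{address} or \textsf{engine} T-function can have started in any engine. A local transaction is only entered once the global prefix has been consumed, because \textsf{global} T-functions must appear before local ones. Hence every frame on \(M_i^{\mathsf{o}}\) was pushed by the function-call rules while executing the head global T-function of \(P_g\), or a callee of it.

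\emph{Second step: split into two cases.}
\begin{itemize}
\item If \(C^{\mathsf{o}}\) is at a transaction boundary, the empty-stack invariant gives \(M_1^{\mathsf{o}}=\cdots=M_n^{\mathsf{o}}=\emptyset\), and we are done.
\item Otherwise, each engine is in the middle of the head global T-function of \(P_g\). This head is syntactically identical in all engines, and its residual code is the same in every engine since \(P_g\) is common. I would then argue that each frame of \(M_i^{\mathsf{o}}\) depends only on three things: the common residual global code, the common global storage \(G^{\mathsf{o}}\), and the scope/return attributes fixed by the rules. In particular, a frame cannot depend on the engine-private storage \(\Psi_i^{\mathsf{o}}\), because global code can neither read nor write engine or address variables. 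So the frames agree layer by layer. Concretely, I would show this by induction on the depth of the stack, matching the frames pushed by the global function-call rules of the original semantics.
\end{itemize}

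The main obstacle is the second case. The predicate \(\mathsf{wf}^{\mathsf{o}}\) constrains only the programs, not the stacks, so the argument must show that the stack of an engine inside a global transaction is determined by data that well-formedness already forces to be shared. I would first make precise, from the original semantics, that during the global phase the per-engine stack is exactly the frame sequence generated by the joint global execution of the common head of \(P_g\) against \(G^{\mathsf{o}}\). This is the same fact used for the well-formedness of reachable original configurations, namely that global transactions are executed jointly by all engines. Equality of the \(M_i^{\mathsf{o}}\) then follows immediately. If this determination could not be extracted from the program component alone, I would fall back to reading the original global rules as updating all \(M_i^{\mathsf{o}}\) simultaneously with the same frame, which yields the agreement directly.
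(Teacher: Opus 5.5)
This is essentially the paper's argument: like the paper's proof, yours ultimately rests on the fact that in the original semantics global code is executed jointly and synchronously by all engines, so every step updates all $M_i^{\mathsf{o}}$ identically (your fallback). The transaction-boundary case split and the observation that global code never touches $\Psi_i^{\mathsf{o}}$ are extra scaffolding around that argument.

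One caution, which applies equally to the paper: this is an invariant of reachable executions and should be proved by induction on the execution, not on stack depth. The reason is that $\mathsf{wf}^{\mathsf{o}}$ says nothing about stacks, and a frame may hold values read from an earlier version of $G^{\mathsf{o}}$, so the current residual code and $G^{\mathsf{o}}$ do not by themselves determine the stack.
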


\begin{proof}
Since \(C^{\mathsf{o}}\) is well formed, we have
\[
\mathsf{gprefix}(Prog_1^{\mathsf{o}})
=
\mathsf{gprefix}(Prog_2^{\mathsf{o}})
=
\cdots
=
\mathsf{gprefix}(Prog_n^{\mathsf{o}}).
\]
If this common global prefix is non-empty, then the system is currently
executing global transactions. In the original semantics, global
functions are executed jointly by all engines using the same code, and
their execution is synchronized by the consensus mechanism. Therefore,
the temporary stacks maintained by all engines evolve identically during
such execution, which yields
\[
M_1^{\mathsf{o}} = M_2^{\mathsf{o}} = \cdots = M_n^{\mathsf{o}}.
\]
\end{proof}

\begin{remark}
The definition of \(Enc\) is well defined.
Since the original configuration is well formed,
\(\mathsf{gprefix}(Prog_i^{\mathsf{o}})\) is identical for all engines.
Moreover, by Lemma~\ref{mempool_lemma} the sets
\(\mathsf{GTx}(\Omega_i^{\mathsf{o}})\) coincide,
and by Lemma~\ref{mi_lemma} the stacks \(M_i^{\mathsf{o}}\) coincide whenever
\(\mathsf{gprefix}(Prog_i^{\mathsf{o}})\neq\cdot\).
Hence the construction above does not depend on the choice of \(i\).
\end{remark}

\begin{remark}
The definition of \(Dec\) is well defined. Since the configuration of the compositional semantics is well formed, all engines share the same global state \(G\).
\end{remark}

\begin{proof}[Proof of Lemma~\ref{lem:wf-preserve-enc-dec}]
Both statements follow directly from the definitions.

For \(Enc\), in the encoded configuration all local states use the same global state \(G^{\mathsf{o}}\), hence the resulting configuration satisfies \(\mathsf{wf}\).

For \(Dec\), the decoded programs are \(Prog_i^{\mathsf{o}} = Prog_G ; Prog_i\). Since \(Prog_G\) consists only of \textsf{global} T-functions and
\(Prog_i\) consists only of \textsf{address} or \textsf{engine}
T-functions, we have \(\mathsf{gprefix}(Prog_G ; Prog_i) = Prog_G\). Thus all decoded programs share the same global prefix, and the resulting configuration satisfies \(\mathsf{wf}^{\mathsf{o}}\).
\end{proof}

\begin{lemma}[Empty local stacks during the global phase]
\label{lem:empty-local-stacks-global-phase}
Let \((\Gamma,\Omega,\gamma_G)\) be a reachable configuration of the
compositional semantics, where
\[
\Gamma=(\gamma_1,\ldots,\gamma_n), \qquad
\gamma_i=((\Psi_i,M_i,G),Prog_i), \qquad
\gamma_G=((G,M_G),Prog_G).
\]
If \(Prog_G \neq \cdot\), then
\[
M_1 = M_2 = \cdots = M_n = \emptyset .
\]
\end{lemma}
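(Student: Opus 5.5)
We argue by induction on the length of the derivation of reachability, i.e.\ on the number of system-level steps \(\sysstep_g\) and \(\sysstep_l\) (via \textsc{G}, \textsc{G1}, \textsc{L}, \textsc{L1} and the closure rules \textsc{Ga}, \textsc{Gs}, \textsc{La}, \textsc{Ls}, \textsc{Para}, \textsc{GL}) leading from an initial configuration to \((\Gamma,\Omega,\gamma_G)\). The invariant we maintain is slightly stronger than the statement: \emph{at every reachable configuration, all local stacks \(M_1,\ldots,M_n\) are empty}. Reachable configurations are exactly those produced at transaction boundaries, since the system rules only compose complete T-function executions \(\to_{l,t}\) or \(\to_{g,t}\). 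This makes the hypothesis \(Prog_G \neq \cdot\) only needed to single out the case of interest, not to drive the argument.

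For the base case, an initial configuration has no transaction in progress, so all stacks are empty. We take this as part of what it means to be an initial configuration, consistent with the empty-stack invariant at transaction boundaries stated above.

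For the inductive step we split on the last rule.

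\emph{Case \textsc{G} or \textsc{G1}.} The premise \(\rho(\Gamma,\Gamma')\) says \(\Gamma'\) coincides with \(\Gamma\) except possibly in the copies \(G_i\). Hence each \(M_i'\) equals \(M_i\), which is empty by the induction hypothesis. This is exactly the content of Theorem~\ref{thm:global-isolation}.

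\emph{Case \textsc{L}, \textsc{L1} or \textsc{Para}.} Only engines \(i\) (and \(j\)) change, via a transaction step \(\gamma_i \to_{l,t} \gamma_i'\). Inspecting \textsc{IFUNta}, \textsc{IFUNts}, \textsc{EFUNta} and \textsc{EFUNts}, the stack is first pushed, \(M_i^{(1)} = addtop(M_i,\{\})\), and finally popped, \(M_i^{(3)} = removetop(M_i^{(2)})\). \textsc{SEQtl} only chains such steps. So it suffices to show that code-level execution of the body leaves the stack height unchanged: \(M_i^{(2)}\) has the same frames below the top as \(M_i^{(1)}\). This follows by a routine induction on code-level derivations. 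Every rule that pushes (\textsc{IFUN}*, \textsc{EFUN}*) pops in the same rule, and the remaining rules modify only the top frame's contents. Hence \(M_i^{(3)} = M_i = \emptyset\).

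Alternatively, and more cheaply, the invariant follows from the empty-stack lemma at transaction boundaries already stated, once we observe that every reachable configuration lies at such a boundary.

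The main obstacle is that this lemma is only asserted ``by inspection'', and the informal side conditions (e.g.\ \(top(M_i).scope = \mathit{None}\) in \textsc{IFUNta}) suggest that an empty stack is modelled as having a distinguished bottom frame. I would therefore first fix the convention that \(\emptyset\) denotes this base stack, and then carry out the stack-height induction above explicitly.
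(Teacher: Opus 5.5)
Your argument is correct if ``reachable'' means \(\sysstep^{*}\)-reachable from an initial configuration with empty stacks, but it is not the paper's route.

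\textbf{The paper's route.} The paper gives a phase argument in which the hypothesis \(Prog_G\neq\cdot\) does all the work. The local rules \textsc{L}, \textsc{L1} and \textsc{Para} require \(\gamma_G=(\sigma_G,\cdot)\), and \(Prog_G\) only shrinks under \(\to_{g,t}\). So a configuration with \(Prog_G\neq\cdot\) has been reached by global steps alone. These steps fix every \(M_i\) through \(\rho\), hence the local stacks are still the initial empty ones. Nothing about the body of a local transaction is ever inspected.

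\textbf{Your route.} You drop the hypothesis and prove the stronger claim that local stacks are empty at every reachable configuration. This forces the push/pop balance induction over all code-level rules in the \textsc{L} case; the paper's empty-stack lemma only asserts this ``by inspection''.

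\textbf{What each buys.} Your route is robust across blocks. When a fresh \(Prog_G\) is installed after a local phase, as at the start of the second block in the running example, the paper's ``no local step has been performed yet'' is not literally true, while your invariant still applies.

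The paper's route is robust inside a global transaction. The paper also treats code-level configurations in the middle of a global function, where \(M_G\neq\emptyset\), as reachable. This is why \(Enc\) and \(Dec\) case-split on the stacks, and why \(R_g\) is maintained for the intermediate configurations of \textsc{IFUNgg}. The present lemma feeds into \(Enc(Dec(C))=C\) for exactly such configurations. Under that broader reading, your blanket invariant is false for mid-local-transaction configurations, and the hypothesis \(Prog_G\neq\cdot\) is not dispensable as you claim. To cover it you would have to add precisely the paper's observation: global code-level steps never touch \(\Gamma\), so the local stacks of a mid-global configuration are those of the last transaction boundary.
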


\begin{proof}
A reachable configuration with \(Prog_G \neq \cdot\) is in the global
phase of transactions execution. By the system-level execution discipline, the global transactions in \(Prog_G\) are executed before any local
computation starts. Hence no local step has been performed yet, so none
of the local stacks \(M_i\) has been used. Since the local stacks are
initially empty, they remain empty throughout this phase.
\end{proof}

\begin{lemma}[Empty global stack during the local phase]
\label{lem:empty-global-stack-local-phase}
Let \((\Gamma,\Omega,\gamma_G)\) be a reachable configuration of the
compositional semantics, where
\[
\Gamma=(\gamma_1,\ldots,\gamma_n), \qquad
\gamma_i=((\Psi_i,M_i,G),Prog_i), \qquad
\gamma_G=((G,M_G),Prog_G).
\]
If \(Prog_G = \cdot\), then
\[
M_G = \emptyset .
\]
\end{lemma}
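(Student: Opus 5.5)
The plan is to argue by induction along the reachability derivation, proving a slightly stronger invariant for every reachable configuration \((\Gamma,\Omega,\gamma_G)\): at a system-level configuration, \(M_G=\emptyset\). This is stronger than the statement, so in particular \(M_G=\emptyset\) whenever \(Prog_G=\cdot\). System-level configurations only arise before and after complete T-functions, because rules \(G\), \(G1\), \(L\), \(L1\), Para and the closure rules Ga, Gs, La, Ls, GL only compose component transitions annotated \(\to_{g,t}\) or \(\to_{l,t}\). So the invariant is really the compositional half of the empty-stack invariant at transaction boundaries, specialised to the global component.

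\emph{Base case.} The initial configuration has an empty global stack, since no global T-function has started yet.

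\emph{Inductive step.} Assume \(M_G=\emptyset\) in \((\Gamma,\Omega,\gamma_G)\) and perform a case split on the last system rule.
\begin{itemize}
\item For the local rules \(L\), \(L1\) and Para, \(\gamma_G\) is unchanged in the conclusion, so \(M_G\) stays empty. These rules also require \(\gamma_G=(\sigma_G,\cdot)\), which is consistent with the statement.
\item For the global rules \(G\) and \(G1\), the premise is \(\gamma_G\to_{g,t}\gamma_G'\). Inverting this transaction-level judgement leaves SEQtg, IFUNtg or EFUNtg. SEQtg reduces to one of the other two on its first T-function. In IFUNtg and EFUNtg the final stack is \(M_G^{(3)}=removetop(M_G^{(2)})\), where \(M_G^{(1)}=addtop(M_G,\{\})\).
\end{itemize}

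The key auxiliary fact is a stack-balance lemma: if \((\sigma_G,Prog)\to_g^{?(\ldots)}(\sigma_G',\cdot)\) by the code-level rules, then \(removetop\) of the final stack equals \(removetop\) of the initial stack, or more strongly, the stack below the top frame is unchanged. This is proved by induction on the code-level derivation, going through TVDg, TASSIGNg, SASSIGNgg, the RELAY*g rules, SEQg and TAUg*, and through IFUNgg and EFUNgg. The first group only modify \(top(M_G)\) or \(G\). IFUNgg and EFUNgg push a frame and pop it again, using the induction hypothesis on the body. Evaluation judgements \(\mathcal{E}[\![\cdot]\!]\) need a mutual induction with the same property. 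Applying the lemma to \(M_G^{(1)}=addtop(M_G,\{\})\) gives \(M_G^{(3)}=M_G=\emptyset\).

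The main obstacle is making this balance lemma precise, because the notation treats \(top(M_G)'\) modifications implicitly through the zooming convention. I would first state the invariant as ``\(removetop(M')=removetop(M)\)'' for statement steps and check that each rule's zoomed update touches only the top layer or \(G\). I would verify SASSIGNgg and TASSIGNg carefully, since they first evaluate an expression that may itself call functions. Once the lemma is in place, the statement follows because the invariant \(M_G=\emptyset\) holds at every reachable configuration, in particular when \(Prog_G=\cdot\).
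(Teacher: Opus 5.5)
Your proof is correct and takes essentially the paper's approach: the paper argues that $M_G$ is empty at the transaction boundary that closes the global phase (the empty-stack invariant) and that local-phase steps never modify $\gamma_G$, which is exactly your case split on the local versus global system rules. Your stack-balance lemma, proved by induction on the code-level and evaluation derivations, makes precise the push/pop balance the paper only asserts by inspection, and you rightly note that the hypothesis $Prog_G=\cdot$ is never used, since under the paper's reading of reachability via $\sysstep^{*}$ every reachable configuration sits at a transaction boundary.
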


\begin{proof}
If \(Prog_G=\cdot\), then all global transactions of the current block
have already been completed, or there were none to begin with. The
global stack is empty at transaction boundaries, and once the global
phase has finished, the subsequent local phase does not modify the
global component. Therefore \(M_G\) is empty.
\end{proof}

\begin{proof}[Proof of Lemma~\ref{thm:enc-dec-inverse}]
We prove the two statements separately.

\smallskip
\noindent
\emph{(1)} Let
\[
C^{\mathsf{o}} =
(\sigma_1^{\mathsf{o}},\Omega_1^{\mathsf{o}},Prog_1^{\mathsf{o}},\ldots,
\sigma_n^{\mathsf{o}},\Omega_n^{\mathsf{o}},Prog_n^{\mathsf{o}},G^{\mathsf{o}})
\]
be a well-formed configuration of the original semantics, where
\(\sigma_i^{\mathsf{o}}=(\Psi_i^{\mathsf{o}},M_i^{\mathsf{o}})\).
We distinguish two cases.

If \(\mathsf{gprefix}(Prog_i^{\mathsf{o}})\neq\cdot\), then by the
definition of \(Enc\), all local stacks in the encoded configuration are
empty and the global stack is \(M_i^{\mathsf{o}}\). Decoding this
configuration restores the programs, mempools, and states immediately.
For the stack component, it suffices to use Lemma~\ref{mi_lemma} stating that, in a well-formed configuration of the original semantics with non-empty global prefix, all stacks \(M_i^{\mathsf{o}}\) are identical. Hence
\(Dec(Enc(C^{\mathsf{o}})) = C^{\mathsf{o}}\).

If \(\mathsf{gprefix}(Prog_i^{\mathsf{o}})=\cdot\), then by the
definitions of \(Enc\) and \(Dec\), the two transformations leave all
components unchanged, and again \(Dec(Enc(C^{\mathsf{o}})) = C^{\mathsf{o}}\).

\smallskip
\noindent
\emph{(2)} Let
\[
C=(\Gamma,\Omega,\gamma_G)
\]
be a reachable well-formed configuration of the compositional semantics,
where
\[
\Gamma=(\gamma_1,\ldots,\gamma_n), \qquad
\gamma_i=((\Psi_i,M_i,G),Prog_i), \qquad
\gamma_G=((G,M_G),Prog_G).
\]
Again we distinguish two cases.

If \(Prog_G\neq\cdot\), then by the definition of \(Dec\) and \(Enc\), decoding followed by encoding restores all components immediately except
potentially the local stacks. In this case, by Lemma~\ref{lem:empty-local-stacks-global-phase} on empty local stacks during the global phase, we have
\(M_1 = \cdots = M_n = \emptyset\). Therefore the reconstructed local stacks are exactly the original ones, and hence \( Enc(Dec(C)) = C \).

If \(Prog_G=\cdot\), decoding followed by encoding restores all components immediately except potentially the global stack. In this case, by Lemma~\ref{lem:empty-global-stack-local-phase} on the empty global stack during the local phase, we have \( M_G=\emptyset \). Therefore the reconstructed global stack is exactly the original one, and thus \(Enc(Dec(C)) = C \).

This completes the proof.
\end{proof}

\begin{definition}[Transaction-boundary configurations]
A configuration is said to be at a \emph{transaction boundary}
if all memory stacks are empty.

For a configuration of the compositional semantics,
this means
\[
M_i = \emptyset \ (1 \le i \le n)
\quad\text{and}\quad
M_G = \emptyset .
\]

For a configuration of the original semantics,
this means
\[
M_i^{\mathsf{o}} = \emptyset \ (1 \le i \le n).
\]
\end{definition}

\begin{definition}[Transaction-level encoding and decoding]
Let \(Enc_T\) and \(Dec_T\) denote the restrictions of
\(Enc\) and \(Dec\) to configurations at transaction boundaries.

Since all stacks are empty at transaction boundaries,
the stack components in the encoded and decoded configurations
are defined to be empty, while all other components are
constructed exactly as in Definitions~\ref{def:enc} and
\ref{def:dec}. No case distinction is needed.
\end{definition}

\begin{lemma}
For any configuration at a transaction boundary,
\[
Enc_T = Enc,
\qquad
Dec_T = Dec .
\]
\end{lemma}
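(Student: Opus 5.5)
The claim is that at a transaction boundary \(Enc_T\) and \(Dec_T\) agree with \(Enc\) and \(Dec\). I would prove it by comparing the definitions componentwise. Definition~\ref{def:enc} and Definition~\ref{def:dec} differ from \(Enc_T\) and \(Dec_T\) in only one place, the stack components. All other components are built by exactly the same recipe in both versions:
\begin{itemize}
\item the programs \(Prog_G=\mathsf{gprefix}(Prog_i^{\mathsf{o}})\) and \(Prog_i=\mathsf{lsuffix}(Prog_i^{\mathsf{o}})\), or \(Prog_G;Prog_i\) when decoding;
\item the mempools \(\mathsf{LTx}\), \(\mathsf{GTx}\), or the union \(\Omega_i\cup\Omega_G\) when decoding;
\item the storages \(\Psi_i\) and the global storage \(G\).
\end{itemize}
So the whole argument reduces to showing that the stack components coincide.

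\textbf{Encoding.} Let \(C^{\mathsf{o}}\) be at a transaction boundary, so \(M_i^{\mathsf{o}}=\emptyset\) for every \(i\). I split on whether \(Prog_G=\mathsf{gprefix}(Prog_i^{\mathsf{o}})\) is empty.
\begin{itemize}
\item If \(Prog_G\neq\cdot\), then \(Enc\) sets \(M_i=\emptyset\) and \(M_G=M_i^{\mathsf{o}}=\emptyset\).
\item If \(Prog_G=\cdot\), then \(Enc\) sets \(M_i=M_i^{\mathsf{o}}=\emptyset\) and \(M_G=\emptyset\).
\end{itemize}
In both cases every stack is empty, which is exactly what \(Enc_T\) prescribes. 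Well-definedness of the choice of \(i\) is not an issue here, since all the \(M_i^{\mathsf{o}}\) are empty anyway.

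\textbf{Decoding.} Let \(C\) be at a transaction boundary, so \(M_i=\emptyset\) and \(M_G=\emptyset\). \(Dec\) sets \(M_i^{\mathsf{o}}\) to \(M_G\) or to \(M_i\), depending on whether \(Prog_G\) is empty. Either way \(M_i^{\mathsf{o}}=\emptyset\), which agrees with \(Dec_T\).

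\textbf{Main obstacle.} The only delicate point is that the definition of \(Enc_T\) and \(Dec_T\) is given somewhat informally: the stacks are "defined to be empty" and everything else is "constructed exactly as" before. I would first make this explicit by writing \(Enc_T(C^{\mathsf{o}})\) as \(Enc(C^{\mathsf{o}})\) with every stack replaced by \(\emptyset\), and similarly for \(Dec_T\). The result also needs to land in the right domain. Since all stacks of \(Enc(C^{\mathsf{o}})\) and \(Dec(C)\) are empty, these images are themselves transaction-boundary configurations, so the restricted maps do land in the intended domains. After that, the equalities follow from the two case analyses above.
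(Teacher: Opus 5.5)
Your proof is correct and follows the paper's argument. The paper says only that the lemma is immediate from the definitions because all stacks are empty at transaction boundaries; your case split on whether \(Prog_G\) is empty spells out why every stack component produced by \(Enc\) and \(Dec\) is then \(\emptyset\). Your extra check that the images are themselves transaction-boundary configurations is also sound.
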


\begin{proof}
Immediate from the definitions, since all memory stacks are empty
at transaction boundaries.
\end{proof}

\begin{corollary}
All lemmas and theorems established for \(Enc\) and \(Dec\)
remain valid for \(Enc_T\) and \(Dec_T\).
\end{corollary}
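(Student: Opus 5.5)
The statement just above is the corollary that every lemma and theorem established for \(Enc\) and \(Dec\) remains valid for \(Enc_T\) and \(Dec_T\). The plan is a transfer argument based on the immediately preceding lemma, which says \(Enc_T = Enc\) and \(Dec_T = Dec\) on every configuration at a transaction boundary. Since \(Enc_T\) and \(Dec_T\) are, by definition, the restrictions of \(Enc\) and \(Dec\) to transaction-boundary configurations, any statement about \(Enc\) or \(Dec\) quantified over a class of configurations specialises to that class intersected with the boundary configurations. On that intersection the two pairs of functions coincide pointwise, so each conclusion can be rewritten with \(Enc_T\) and \(Dec_T\) in place of \(Enc\) and \(Dec\).

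Concretely, I would go through the two results stated earlier.

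\emph{Well-formedness (Lemma~\ref{lem:wf-preserve-enc-dec}).} If \(C^{\mathsf{o}}\) is well formed and at a boundary, then \(\mathsf{wf}(Enc(C^{\mathsf{o}}))\) holds. Rewriting via \(Enc_T(C^{\mathsf{o}}) = Enc(C^{\mathsf{o}})\) gives \(\mathsf{wf}(Enc_T(C^{\mathsf{o}}))\). The same argument works for \(Dec_T\).

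\emph{Mutual inverse (Lemma~\ref{thm:enc-dec-inverse}).} For a reachable, well-formed, boundary configuration \(C^{\mathsf{o}}\), part (i) gives \(Dec(Enc(C^{\mathsf{o}})) = C^{\mathsf{o}}\). To replace the outer \(Dec\) by \(Dec_T\), I also need \(Enc(C^{\mathsf{o}})\) to lie in the domain of \(Dec_T\), i.e.\ to be at a boundary. This is the one non-trivial check: the composite must stay inside the restricted domain.

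I would discharge that check directly from Definition~\ref{def:enc}. If all \(M_i^{\mathsf{o}} = \emptyset\), then in either case of the definition the local stacks \(M_i\) and the global stack \(M_G\) are each either \(\emptyset\) or some \(M_i^{\mathsf{o}}\), hence all empty. So \(Enc\) maps boundary configurations to boundary configurations. Symmetrically, Definition~\ref{def:dec} sets each \(M_i^{\mathsf{o}}\) to \(M_G\) or to \(M_i\), so \(Dec\) also preserves boundaries. The same check yields the parenthetical equivalence in the definition of \(R_T\), namely \(Enc_T(C^{\mathsf{o}}) = C \iff Dec_T(C) = C^{\mathsf{o}}\).

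I expect no real obstacle beyond this domain-closure point. The remaining work is pure substitution: every hypothesis of the earlier results, namely reachability, well-formedness and the case split on \(Prog_G\), is unaffected by also requiring the configuration to be at a transaction boundary.
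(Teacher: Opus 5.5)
Your proposal is correct and takes the same route as the paper, which justifies the corollary only through the preceding lemma that \(Enc_T = Enc\) and \(Dec_T = Dec\) on transaction-boundary configurations. Your extra check that \(Enc\) and \(Dec\) send boundary configurations to boundary configurations, so that composites such as \(Dec_T(Enc_T(C^{\mathsf{o}}))\) are defined, is sound and is a step the paper leaves implicit; it also holds by construction under the paper's phrasing that \(Enc_T\) and \(Dec_T\) set all stack components to empty.
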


\begin{definition}[Code-level correspondence]
Let \(Conf^{\mathsf{o}}\) and \(Conf\) denote the sets of configurations
of the original semantics and the compositional semantics,
respectively.

We define a relation
\[
R \subseteq Conf^{\mathsf{o}} \times Conf
\]
such that \(R(C^{\mathsf{o}},C)\) holds iff

\begin{enumerate}
\item \(C^{\mathsf{o}}\) is reachable and well formed in the original semantics;
\item \(C\) is reachable and well formed in the compositional semantics; and
\item \(Enc(C^{\mathsf{o}})=C\)
      (equivalently, \(Dec(C)=C^{\mathsf{o}}\)).
\end{enumerate}
\end{definition}

This relation captures the correspondence between the two semantics
at arbitrary code boundaries.

\begin{lemma}
If \(R_T(C^{\mathsf{o}},C)\), then \(R(C^{\mathsf{o}},C)\).
\end{lemma}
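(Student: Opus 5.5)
The plan is to unfold the two definitions and match them clause by clause. The relation \(R_T\) has four conditions, and \(R\) has three. Since \(R_T\) is defined exactly like \(R\), with one extra clause and with \(Enc_T\) in place of \(Enc\), the whole proof is a weakening argument. The only content is to justify replacing \(Enc_T\) by \(Enc\).

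First, assume \(R_T(C^{\mathsf{o}},C)\). Conditions (1) and (2) of \(R_T\) are identical to conditions (1) and (2) of \(R\). These say that \(C^{\mathsf{o}}\) is reachable and well formed in the original semantics, and that \(C\) is reachable and well formed in the compositional semantics. They therefore transfer verbatim.

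Second, condition (3) of \(R_T\) says that both configurations are at transaction boundaries. This means all stacks are empty: \(M_i^{\mathsf{o}}=\emptyset\) for every \(i\), and \(M_i=\emptyset\) and \(M_G=\emptyset\) in \(C\). By the lemma stating that \(Enc_T = Enc\) and \(Dec_T = Dec\) on transaction-boundary configurations, we get \(Enc(C^{\mathsf{o}}) = Enc_T(C^{\mathsf{o}})\). Condition (4) gives \(Enc_T(C^{\mathsf{o}}) = C\), so \(Enc(C^{\mathsf{o}})=C\), which is condition (3) of \(R\).

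For the parenthetical formulation \(Dec(C)=C^{\mathsf{o}}\), the same lemma gives \(Dec(C)=Dec_T(C)=C^{\mathsf{o}}\). Alternatively, it follows from Lemma~\ref{thm:enc-dec-inverse}(i), since \(C^{\mathsf{o}}\) is reachable and well formed: \(Dec(C)=Dec(Enc(C^{\mathsf{o}}))=C^{\mathsf{o}}\).

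There is essentially no obstacle. The only point needing care is that \(Enc_T\) is a restriction of \(Enc\) whose stack components are set to empty by fiat. So we must check that the two agree on \(C^{\mathsf{o}}\), and this holds because \(C^{\mathsf{o}}\) itself lies at a transaction boundary. If \(Prog_G\neq\cdot\), then \(Enc\) assigns \(M_G=M_i^{\mathsf{o}}=\emptyset\) and \(M_i=\emptyset\). If \(Prog_G=\cdot\), then \(Enc\) assigns \(M_i=M_i^{\mathsf{o}}=\emptyset\) and \(M_G=\emptyset\). In both cases these coincide with the empty stacks chosen by \(Enc_T\), which is exactly the content of the cited lemma.
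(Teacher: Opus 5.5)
Your proposal is correct and follows the paper's argument. \(R_T\) is \(R\) plus the transaction-boundary clause, with \(Enc_T\)/\(Dec_T\) in place of \(Enc\)/\(Dec\), and the lemma that \(Enc_T=Enc\) and \(Dec_T=Dec\) at transaction boundaries turns condition (4) of \(R_T\) into condition (3) of \(R\). Your explicit stack case check simply re-derives that cited lemma, which is harmless.
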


\begin{proof}
Immediate from the definitions, since \(R_T\) strengthens \(R\) by
additionally requiring \(C^{\mathsf{o}}\) and \(C\) to be at
transaction boundaries and using the restricted mappings
\(Enc_T\) and \(Dec_T\), which coincide with \(Enc\) and \(Dec\) on
such configurations.
\end{proof}

\begin{definition}[Local wrapping]
Let \(i \in \{1,\ldots,n\}\).
We write
\[
Wrap_l(i,\gamma_i,C)
\]
if \(C=(\Gamma,\Omega,\gamma_G)\) is a reachable and well-formed
configuration of the compositional semantics such that

\begin{enumerate}
\item the \(i\)-th component of \(\Gamma\) is \(\gamma_i\); and
\item the global program is empty, i.e., \(Prog_G=\cdot\).
\end{enumerate}
\end{definition}

\begin{definition}[Local correspondence]
Let \(i \in \{1,\ldots,n\}\).
We write
\[
R_l(i,\gamma_i,C^{\mathsf{o}})
\]
if there exists a configuration \(C\) such that

\begin{enumerate}
\item \(Wrap_l(i,\gamma_i,C)\);
\item \(R(C^{\mathsf{o}},C)\); and
\item for every engine \(j\), \( \mathsf{gprefix}(Prog_j^{\mathsf{o}})=\cdot\).
\end{enumerate}
\end{definition}

\begin{remark}
Intuitively, the relation \(R_l\) isolates the behavior of a
single engine \(i\) during the local phase.
The condition \(Wrap_l(i,\gamma_i,C)\) embeds the component
\(\gamma_i\) into a well-formed system configuration in which
no global transactions remain to be executed.
The relation \(R(C^{\mathsf{o}},C)\) then establishes the
code-level correspondence between the two semantics.

Note that the global states coincide.
Since \(C\) is well formed, all local copies \(G_i\) are equal to
the global state \(G\).
Moreover, \(R(C^{\mathsf{o}},C)\) implies that the global state of
\(C^{\mathsf{o}}\) is also \(G\).
Hence the global state in \(C^{\mathsf{o}}\) coincides with the
local copy \(G_i\) of the component \(\gamma_i\).
\end{remark}

\begin{definition}[Mempool difference in the original semantics]
Let
\[
\Delta\Omega = (\Delta\Omega_1,\ldots,\Delta\Omega_n).
\]
For two configurations \(C^{\mathsf{o}}\) and \(C^{\mathsf{o}\prime}\) of the
original semantics, we write
\[
\mathsf{MemDiff}^{\mathsf{o}}(C^{\mathsf{o}},C^{\mathsf{o}\prime},\Delta\Omega)
\]
if, for every \(1 \le i \le n\),
\[
\Omega_i^{\mathsf{o}\prime}
=
\Omega_i^{\mathsf{o}} \cup \Delta\Omega_i.
\]
\end{definition}

\begin{definition}[Local agreement outside the active component]
Let
\[
C^{\mathsf{o}}
=
(\sigma_1^{\mathsf{o}},\Omega_1^{\mathsf{o}},Prog_1^{\mathsf{o}},\ldots,
\sigma_n^{\mathsf{o}},\Omega_n^{\mathsf{o}},Prog_n^{\mathsf{o}},G^{\mathsf{o}})
\]
and
\[
C^{\mathsf{o}\prime}
=
(\sigma_1^{\mathsf{o}\prime},\Omega_1^{\mathsf{o}\prime},Prog_1^{\mathsf{o}\prime},\ldots,
\sigma_n^{\mathsf{o}\prime},\Omega_n^{\mathsf{o}\prime},Prog_n^{\mathsf{o}\prime},G^{\mathsf{o}\prime})
\]
be two configurations of the original semantics. For \(i\in\{1,\ldots,n\}\) we write
\[
\mathsf{Same}_l^{\mathsf{o}}(i,C^{\mathsf{o}},C^{\mathsf{o}\prime})
\]
if for every \(j\neq i\)
\[
\sigma_j^{\mathsf{o}\prime}=\sigma_j^{\mathsf{o}}
\qquad\text{and}\qquad
Prog_j^{\mathsf{o}\prime}=Prog_j^{\mathsf{o}}.
\]
\end{definition}

\begin{remark}
The relation \(\mathsf{Same}_l^{\mathsf{o}}\) expresses that all engines except \(i\) remain unchanged, while the mempools, the \(i\)-th engine,
and the global state may differ.
\end{remark}

Since the statement in the main text is compressed due to space limitations, we restate Theorem~\ref{lem:local-code-bisim} here in a clearer and more expanded format.

\medskip
\noindent\textbf{Theorem~\ref{lem:local-code-bisim} (Restated).}
If \(R_l(i,\gamma_i,C^{\mathsf{o}})\), then the following properties hold.

\begin{enumerate}
\item[(i)] If
\[
\gamma_i \to_l^{?(w_1,\ldots,w_n,w_G)} \gamma_i',
\]
then there exists a configuration \(C^{\mathsf{o}\prime}\) such that
\[
C^{\mathsf{o}} \to C^{\mathsf{o}\prime},
\]
\[
\mathsf{Same}_l^{\mathsf{o}}(i,C^{\mathsf{o}},C^{\mathsf{o}\prime}),
\]
\[
\mathsf{MemDiff}^{\mathsf{o}}(C^{\mathsf{o}},C^{\mathsf{o}\prime},\Delta\Omega),
\]
where
\[
\Delta\Omega=(w_1\cup w_G,\; w_2\cup w_G,\; \ldots,\; w_n\cup w_G),
\]
and
\[
R_l(i,\gamma_i',C^{\mathsf{o}\prime}).
\]

\item[(ii)] If
\[
C^{\mathsf{o}} \to C^{\mathsf{o}\prime},
\qquad
\mathsf{Same}_l^{\mathsf{o}}(i,C^{\mathsf{o}},C^{\mathsf{o}\prime}),
\qquad
\mathsf{MemDiff}^{\mathsf{o}}(C^{\mathsf{o}},C^{\mathsf{o}\prime},\Delta\Omega),
\]
where
\[
\Delta\Omega=(\Delta\Omega_1,\ldots,\Delta\Omega_n),
\]
then there exists a component \(\gamma_i'\) such that
\[
\gamma_i \to_l^{?(w_1,\ldots,w_n,w_G)} \gamma_i',
\]
where
\[
w_j = \mathsf{LTx}(\Delta\Omega_j)\quad (1\le j\le n),
\]
and
\[
w_G = \mathsf{GTx}(\Delta\Omega_j)
\]
for any \(j \in \{1,\ldots,n\}\), and
\[
R_l(i,\gamma_i',C^{\mathsf{o}\prime}).
\]
\end{enumerate}

\begin{remark}
In item (ii), the definition of \(w_G\) is well defined.
By Lemma~\ref{mempool_lemma}, the sets \(\mathsf{GTx}(\Delta\Omega_j)\) are identical for all \(j\). Hence \(w_G\) does not depend on the choice of \(j\).
\end{remark}

\begin{remark}
Theorem~\ref{lem:local-code-bisim} captures the code-level correspondence
between one local component in the compositional semantics and the
matching engine in the original semantics during the local phase.
The relation \(R_l\) isolates engine \(i\) while keeping the remaining
engines abstracted inside a well-formed system configuration.
Part~(i) shows that every local step of \(\gamma_i\) can be simulated by
a step of the original semantics, with the same effect on the active
engine and with the generated relay information reflected in the mempool
difference. Conversely, part~(ii) shows that every such step of the
original semantics can be represented by a corresponding local step of
\(\gamma_i\). This lemma is the key code-level ingredient for proving
the transaction-level bisimulation theorem.
\end{remark}

We now establish the code-level bisimulation for global execution.
While the previous lemma concerns the behavior of a single engine during
the local phase, the following result relates the execution of the
global component in the compositional semantics with the corresponding
global transactions in the original semantics.

\begin{definition}[Global wrapping]
We write
\[
Wrap_g(\bar{\gamma_G},C)
\]
if \(C=(\Gamma,\Omega,\gamma_G)\) is a reachable and well-formed
configuration of the compositional semantics such that

\begin{enumerate}
\item the global component of \(C\) is \(\bar{\gamma_G}\), i.e., \(\gamma_G=\bar{\gamma_G}\) ;
\item the global program is non-empty, i.e., \(Prog_G \neq \cdot\); and
\item for every local component \(\gamma_i\) in \(\Gamma\), its program is
empty, i.e., \(Prog_i = \cdot\).
\end{enumerate}
\end{definition}

\begin{definition}[Global correspondence]
We write
\[
R_g(\gamma_G,C^{\mathsf{o}})
\]
if there exists a configuration \(C\) such that

\begin{enumerate}
\item \(Wrap_g(\gamma_G,C)\);
\item \(R(C^{\mathsf{o}},C)\); and
\item for every engine \(i\), \(\mathsf{gprefix}(Prog_i^{\mathsf{o}})\neq\cdot \).
\end{enumerate}
\end{definition}

\begin{remark}
The relation \(Wrap_g\) isolates the global phase of execution.
It embeds the global component \(\gamma_G\) into a well-formed system
configuration in which the global program is still pending, while all
local components are inactive.
\end{remark}

\begin{remark}
Intuitively, the relation \(R_g\) matches a global component in the
compositional semantics with a configuration of the original semantics
during the global phase.
Since \(R(C^{\mathsf{o}},C)\) holds, the global state of
\(C^{\mathsf{o}}\) coincides with the global state carried by
\(\gamma_G\).
Moreover, the pending global program \(Prog_G\) and the stack \(M_G\)
represented in \(\gamma_G\) correspond to the \(Prog_i^{\mathsf{o}}\) and
\(M_i^{\mathsf{o}}\) distributed across all engines in the original
semantics.
\end{remark}

\begin{definition}[Global agreement on local states]
Let
\[
C^{\mathsf{o}}
=
(\sigma_1^{\mathsf{o}},\Omega_1^{\mathsf{o}},Prog_1^{\mathsf{o}},\ldots,
\sigma_n^{\mathsf{o}},\Omega_n^{\mathsf{o}},Prog_n^{\mathsf{o}},G^{\mathsf{o}})
\]
and
\[
C^{\mathsf{o}\prime}
=
(\sigma_1^{\mathsf{o}\prime},\Omega_1^{\mathsf{o}\prime},Prog_1^{\mathsf{o}\prime},\ldots,
\sigma_n^{\mathsf{o}\prime},\Omega_n^{\mathsf{o}\prime},Prog_n^{\mathsf{o}\prime},G^{\mathsf{o}\prime})
\]
be two configurations of the original semantics.
We write
\[
\mathsf{Same}_g^{\mathsf{o}}(C^{\mathsf{o}},C^{\mathsf{o}\prime})
\]
if, for every \(i \in \{1,\ldots,n\}\),
\[
\Psi_i^{\mathsf{o}\prime} = \Psi_i^{\mathsf{o}}.
\]
\end{definition}

\begin{remark}
Recall that \(\sigma_i^{\mathsf{o}}=(\Psi_i^{\mathsf{o}},M_i^{\mathsf{o}})\).
The relation \(\mathsf{Same}_g^{\mathsf{o}}\) expresses that the local states of all engines remain unchanged. Other components, including memory stacks, mempools, programs, and the global state, may differ.
\end{remark}

Since the statement in the main text is compressed due to space limitations, we restate Theorem~\ref{lem:global-code-bisim} here in a clearer and more expanded format.

\medskip
\noindent\textbf{Theorem~\ref{lem:global-code-bisim} (Restated).}
If \(R_g(\gamma_G,C^{\mathsf{o}})\), then the following properties hold.

\begin{enumerate}
\item[(i)] If
\[
\gamma_G \to_g^{?(w_1,\ldots,w_n,w_G)} \gamma_G',
\]
then there exists a configuration \(C^{\mathsf{o}\prime}\) such that
\[
C^{\mathsf{o}} \to C^{\mathsf{o}\prime},
\]
\[
\mathsf{Same}_g^{\mathsf{o}}(C^{\mathsf{o}},C^{\mathsf{o}\prime}),
\]
\[
\mathsf{MemDiff}^{\mathsf{o}}(C^{\mathsf{o}},C^{\mathsf{o}\prime},\Delta\Omega),
\]
where
\[
\Delta\Omega=(w_1\cup w_G,\; w_2\cup w_G,\; \ldots,\; w_n\cup w_G),
\]
and
\[
R_g(\gamma_G',C^{\mathsf{o}\prime}).
\]

\item[(ii)] If
\[
C^{\mathsf{o}} \to C^{\mathsf{o}\prime},
\qquad
\mathsf{Same}_g^{\mathsf{o}}(C^{\mathsf{o}},C^{\mathsf{o}\prime}),
\qquad
\mathsf{MemDiff}^{\mathsf{o}}(C^{\mathsf{o}},C^{\mathsf{o}\prime},\Delta\Omega),
\]
where
\[
\Delta\Omega=(\Delta\Omega_1,\ldots,\Delta\Omega_n),
\]
then there exists a component \(\gamma_G'\) such that
\[
\gamma_G \to_g^{?(w_1,\ldots,w_n,w_G)} \gamma_G',
\]
where
\[
w_i = \mathsf{LTx}(\Delta\Omega_i) \quad (1\le i\le n),
\]
and
\[
w_G = \mathsf{GTx}(\Delta\Omega_i)
\]
for any \(i \in \{1,\ldots,n\}\), and
\[
R_g(\gamma_G',C^{\mathsf{o}\prime}).
\]
\end{enumerate}

\begin{remark}
In item (ii), the definition of \(w_G\) is well defined.
By Lemma~\ref{mempool_lemma}, the sets \(\mathsf{GTx}(\Delta\Omega_i)\) are identical for all \(i\). Hence \(w_G\) does not depend on the choice of \(i\).
\end{remark}

\begin{remark}
Theorem~\ref{lem:global-code-bisim} establishes the correspondence between
the execution of the global component in the compositional semantics and
the execution of global transactions in the original semantics.
The relation \(R_g\) isolates the global phase of execution.
Part~(i) shows that every global step of \(\gamma_G\) can be simulated by
a step of the original semantics with the same effect on the mempools,
while preserving the local states of all engines.
Conversely, part~(ii) shows that every such step of the original
semantics can be represented by a corresponding global step of
\(\gamma_G\).
\end{remark}

\begin{proof}[Proof of Theorem~\ref{lem:local-code-bisim} and~\ref{lem:global-code-bisim}]
The two code-level bisimulation lemmas are proved together.
The underlying principle is that the code-level rules of the compositional
semantics correspond one by one to the code-level rules of the original
semantics.
The proof proceeds by induction on the derivation of the given step in one
semantics, combined with inversion on the corresponding rule in the other
semantics.
In each case, we show that the two rules perform the same essential state
update, produce the same residual program, and generate compatible mempool
effects. The preservation of the correspondence relations \(R_l\) and \(R_g\)
then follows by the definitions of \(Enc\), \(Dec\), \(\mathsf{Same}_l\),
\(\mathsf{Same}_g\), and \(\mathsf{MemDiff}^{\mathsf{o}}\).

To make the proof structure explicit, we group all code-level rules into the following five classes.

\smallskip
\noindent
\textbf{Class A: structural rules for common program constructs.}
This class contains the rules for sequential composition, conditional, and loop,
namely \textsc{SEQ}, \textsc{COND}, and \textsc{WHILE}.
These cases are routine: the proof follows immediately from the induction
hypothesis on the immediate sub-derivation(s), since the two semantics only
repackage the same control-flow step in different configuration formats.

\smallskip
\noindent
\textbf{Class B: variable declaration and assignment rules.}
This class contains the declaration and assignment rules, including
\textsc{SVD}, \textsc{TVD}, \textsc{SASSIGN}, and \textsc{TASSIGN}.
Here the key point is that both semantics update exactly the same storage or
stack component, while the encoding/decoding functions only reorganize the
surrounding configuration.
We present \textsc{SVDa}, \textsc{SVDg}, \textsc{SASSIGNaa}, and
\textsc{SASSIGNgg} as representative cases.

\smallskip
\noindent
\textbf{Class C: function-call rules.}
This class contains \textsc{IFUN} and \textsc{EFUN}.
These cases are handled by expanding the call, aligning the newly created stack
frame(s), and applying the induction hypothesis to the execution of the function
body.
The crucial invariant is that local calls remain in the local phase, whereas
global calls remain in the global phase.
We present \textsc{IFUNaa} and \textsc{IFUNgg} as representative cases.

\smallskip
\noindent
\textbf{Class D: relay rules.}
This class contains the \textsc{RELAY} rules.
These are the most important cases for the bisimulation argument, because they
explain how the relay labels
\(?(w_1,\ldots,w_n,w_G)\)
in the compositional semantics correspond to mempool growth in the original
semantics.
The local and global correspondence lemmas use
\(\mathsf{MemDiff}^{\mathsf{o}}\)
precisely to record this effect.
We present \textsc{RELAYal} and \textsc{RELAYgl} as representative cases.

\smallskip
\noindent
\textbf{Class E: evaluation rules.}
This class contains all evaluation rules, including \textsc{ETID} and
\textsc{ESID}.
These cases are straightforward, since they only read from the appropriate
state component and do not change the control structure.
Their proofs are either immediate or entirely analogous to the representative
rules from the previous classes.

\smallskip
\noindent
In the main text, we only present the representative cases
\textsc{SVDa}, \textsc{SVDg}, \textsc{SASSIGNaa}, \textsc{SASSIGNgg},
\textsc{IFUNaa}, \textsc{IFUNgg}, \textsc{RELAYal}, and \textsc{RELAYgl}.
The remaining cases are either trivial or proved in exactly the same manner.
This organization makes explicit where compositionality matters:
Class~A is purely structural, Classes~B and~C show that local and global code
manipulate the same computational content under different configuration
decompositions, and Class~D isolates the treatment of asynchronous relay
effects, which is the main semantic difference between ordinary local execution
and cross-engine interaction.

\paragraph{Case \textsc{SVDa} (local).}
We show the local code-level bisimulation for rule \textsc{SVDa} in the
compositional semantics and rule \textsc{SDa} in the original semantics.

Suppose \(R_l(i,\gamma_i,C^{\mathsf{o}})\) and
\[
\gamma_i \to_l^{\tau} \gamma_i'
\]
is derived by rule \textsc{SVDa}. By definition of \(R_l\), there exists a
reachable well-formed system configuration \(C\) such that
\(Wrap_l(i,\gamma_i,C)\) and \(R(C^{\mathsf{o}},C)\).
Since \textsc{SVDa} and \textsc{SDa} have exactly the same premises above the
line, they perform the same update on the corresponding \(@address\) storage
component of engine \(i\), and leave all other relevant components unchanged.
Moreover, no relay transaction is generated, so the mempool difference is empty.
Hence there exists \(C^{\mathsf{o}\prime}\) such that
\[
C^{\mathsf{o}} \to C^{\mathsf{o}\prime},
\qquad
\mathsf{Same}_l(i,C^{\mathsf{o}},C^{\mathsf{o}\prime}),
\qquad
\mathsf{MemDiff}^{\mathsf{o}}(C^{\mathsf{o}},C^{\mathsf{o}\prime},(\emptyset,\ldots,\emptyset)),
\]
and \(R_l(i,\gamma_i',C^{\mathsf{o}\prime})\) holds immediately.

Conversely, suppose
\[
C^{\mathsf{o}} \to C^{\mathsf{o}\prime}
\]
is derived by rule \textsc{SDa}, together with
\[
\mathsf{Same}_l(i,C^{\mathsf{o}},C^{\mathsf{o}\prime}),
\qquad
\mathsf{MemDiff}^{\mathsf{o}}(C^{\mathsf{o}},C^{\mathsf{o}\prime},\Delta\Omega).
\]
Since \textsc{SDa} generates no relay transaction, necessarily
\(\Delta\Omega=(\emptyset,\ldots,\emptyset)\).
Again, the premises of \textsc{SDa} and \textsc{SVDa} coincide exactly, so
there exists \(\gamma_i'\) such that
\[
\gamma_i \to_l^{\tau} \gamma_i'
\]
by \textsc{SVDa}, and \(R_l(i,\gamma_i',C^{\mathsf{o}\prime})\) follows
immediately from the definitions.

\paragraph{Case \textsc{SVDg} (global).}
We show the global code-level bisimulation for rule \textsc{SVDg} in the
compositional semantics and rule \textsc{SDg} in the original semantics.

Suppose \(R_g(\gamma_G,C^{\mathsf{o}})\) and
\[
\gamma_G \to_g^{\tau} \gamma_G'
\]
is derived by rule \textsc{SVDg}. By definition of \(R_g\), there exists a
reachable well-formed system configuration \(C\) such that
\(Wrap_g(\gamma_G,C)\) and \(R(C^{\mathsf{o}},C)\).
The rules \textsc{SVDg} and \textsc{SDg} have the same premises above the line
and perform the same update on the global storage \(G\).
Moreover, no relay transaction is generated, so the mempool difference is empty,
and the local states \(\Psi_i\) remain unchanged. Hence there exists
\(C^{\mathsf{o}\prime}\) such that
\[
C^{\mathsf{o}} \to C^{\mathsf{o}\prime},
\qquad
\mathsf{Same}_g(C^{\mathsf{o}},C^{\mathsf{o}\prime}),
\qquad
\mathsf{MemDiff}^{\mathsf{o}}(C^{\mathsf{o}},C^{\mathsf{o}\prime},(\emptyset,\ldots,\emptyset)),
\]
and \(R_g(\gamma_G',C^{\mathsf{o}\prime})\) holds directly.

Conversely, suppose
\[
C^{\mathsf{o}} \to C^{\mathsf{o}\prime}
\]
is derived by rule \textsc{SDg}, together with
\[
\mathsf{Same}_g(C^{\mathsf{o}},C^{\mathsf{o}\prime}),
\qquad
\mathsf{MemDiff}^{\mathsf{o}}(C^{\mathsf{o}},C^{\mathsf{o}\prime},\Delta\Omega).
\]
Since \textsc{SDg} does not generate relay transactions, we have
\(\Delta\Omega=(\emptyset,\ldots,\emptyset)\).
Again, the premises of \textsc{SDg} and \textsc{SVDg} coincide exactly, so
there exists \(\gamma_G'\) such that
\[
\gamma_G \to_g^{\tau} \gamma_G'
\]
by \textsc{SVDg}, and \(R_g(\gamma_G',C^{\mathsf{o}\prime})\) follows
immediately.

\paragraph{Case \textsc{SASSIGNaa} (local).}
We show the local code-level bisimulation for rule \textsc{SASSIGNaa} in the
compositional semantics and rule \textsc{SAaa} in the original semantics.

Suppose \(R_l(i,\gamma_i,C^{\mathsf{o}})\) and
\[
\gamma_i \to_l^{?(w_1,\ldots,w_n,w_G)} \gamma_i'
\]
is derived by rule \textsc{SASSIGNaa}. The non-trivial part of the rule is the
evaluation of the right-hand-side expression \(exp\). By the induction hypothesis
applied to the corresponding evaluation derivation, the evaluation step in the
compositional semantics is simulated by the corresponding evaluation step in the
original semantics, with the same emitted relay information. In particular, the
mempool difference produced during evaluation is exactly the one required in the
statement-level step. After evaluation, the remaining premises of
\textsc{SASSIGNaa} and \textsc{SAaa} are identical: both rules update the same
\(@address\) state variable in the same local state component.
Therefore there exists \(C^{\mathsf{o}\prime}\) such that
\[
C^{\mathsf{o}} \to C^{\mathsf{o}\prime},
\qquad
\mathsf{Same}_l(i,C^{\mathsf{o}},C^{\mathsf{o}\prime}),
\qquad
\mathsf{MemDiff}^{\mathsf{o}}(C^{\mathsf{o}},C^{\mathsf{o}\prime},\Delta\Omega),
\]
where
\[
\Delta\Omega=(w_1\cup w_G,\ldots,w_n\cup w_G),
\]
and \(R_l(i,\gamma_i',C^{\mathsf{o}\prime})\) holds.

Conversely, suppose
\[
C^{\mathsf{o}} \to C^{\mathsf{o}\prime}
\]
is derived by rule \textsc{SAaa}, together with
\[
\mathsf{Same}_l(i,C^{\mathsf{o}},C^{\mathsf{o}\prime}),
\qquad
\mathsf{MemDiff}^{\mathsf{o}}(C^{\mathsf{o}},C^{\mathsf{o}\prime},\Delta\Omega).
\]
Again, the only non-trivial part is the evaluation of \(exp\). By the induction
hypothesis for the corresponding evaluation derivation in the original semantics,
there exists a matching evaluation step in the compositional semantics producing
the same relay information. The remaining premises of \textsc{SAaa} and
\textsc{SASSIGNaa} then coincide exactly, so there exists \(\gamma_i'\) such that
\[
\gamma_i \to_l^{?(w_1,\ldots,w_n,w_G)} \gamma_i'
\]
with
\[
w_j=\mathsf{LTx}(\Delta\Omega_j)\quad(1\le j\le n),
\qquad
w_G=\mathsf{GTx}(\Delta\Omega_j)
\]
for any \(j\), and \(R_l(i,\gamma_i',C^{\mathsf{o}\prime})\) follows.

\paragraph{Case \textsc{SASSIGNgg} (global).}
We show the global code-level bisimulation for rule \textsc{SASSIGNgg} in the
compositional semantics and rule \textsc{SAgg} in the original semantics.

Suppose \(R_g(\gamma_G,C^{\mathsf{o}})\) and
\[
\gamma_G \to_g^{?(w_1,\ldots,w_n,w_G)} \gamma_G'
\]
is derived by rule \textsc{SASSIGNgg}. As in the local assignment case, the
non-trivial part is the evaluation of the right-hand-side expression \(exp\).
By the induction hypothesis applied to the corresponding evaluation derivation,
the evaluation step in the compositional semantics is simulated by the matching
evaluation step in the original semantics, with the same emitted relay
information and hence the same mempool difference.

It remains to justify that the global correspondence relation is preserved for
the configurations occurring above the line of the rule. Here the crucial point
is that the global stack \(M_G\) in the compositional semantics corresponds to
the stacks \(M_i^{\mathsf{o}}\) of all engines in the original semantics.
Since we are in the global phase, these stacks are identical in the original
semantics, and \(Enc/Dec\) map them to the unique global stack \(M_G\) in the
compositional semantics. Therefore the configurations above the line still match
under \(R_g\).

After evaluation, the remaining premises of \textsc{SASSIGNgg} and
\textsc{SAgg} coincide: both rules update the same global state variable in the
same way. Hence there exists \(C^{\mathsf{o}\prime}\) such that
\[
C^{\mathsf{o}} \to C^{\mathsf{o}\prime},
\qquad
\mathsf{Same}_g(C^{\mathsf{o}},C^{\mathsf{o}\prime}),
\qquad
\mathsf{MemDiff}^{\mathsf{o}}(C^{\mathsf{o}},C^{\mathsf{o}\prime},\Delta\Omega),
\]
where
\[
\Delta\Omega=(w_1\cup w_G,\ldots,w_n\cup w_G),
\]
and \(R_g(\gamma_G',C^{\mathsf{o}\prime})\) holds.

Conversely, suppose
\[
C^{\mathsf{o}} \to C^{\mathsf{o}\prime}
\]
is derived by rule \textsc{SAgg}, together with
\[
\mathsf{Same}_g(C^{\mathsf{o}},C^{\mathsf{o}\prime}),
\qquad
\mathsf{MemDiff}^{\mathsf{o}}(C^{\mathsf{o}},C^{\mathsf{o}\prime},\Delta\Omega).
\]
Again, by the induction hypothesis for the corresponding evaluation derivation,
there exists a matching evaluation step in the compositional semantics
producing the same relay information. The correspondence above the line is
preserved because the common stacks \(M_i^{\mathsf{o}}\) in the original
semantics are represented by the unique stack \(M_G\) in the compositional
semantics. The remaining premises of \textsc{SAgg} and \textsc{SASSIGNgg}
then coincide exactly, so there exists \(\gamma_G'\) such that
\[
\gamma_G \to_g^{?(w_1,\ldots,w_n,w_G)} \gamma_G'
\]
with
\[
w_i=\mathsf{LTx}(\Delta\Omega_i)\quad(1\le i\le n),
\qquad
w_G=\mathsf{GTx}(\Delta\Omega_i)
\]
for any \(i\), and \(R_g(\gamma_G',C^{\mathsf{o}\prime})\) follows.

\paragraph{Case \textsc{IFUNaa} (local).}
We show the local code-level bisimulation for rule \textsc{IFUNaa} in the
compositional semantics and rule \textsc{IFaa} in the original semantics.

Suppose \(R_l(i,\gamma_i,C^{\mathsf{o}})\) and
\[
\gamma_i \to_l^{?(w_1,\ldots,w_n,w_G)} \gamma_i'
\]
is derived by rule \textsc{IFUNaa}. This is the standard function-call case.
Both \textsc{IFUNaa} and \textsc{IFaa} first create a new stack frame, assign
the same scope information and return marker, initialize the formal parameters
in the same way, execute the same function body, and finally remove the top
frame.

To prove the result, we first check that the configurations occurring above the
line still satisfy the local correspondence relation \(R_l\). This holds because
the newly added stack frame and the parameter initialization are constructed in
the same way in both semantics, while the surrounding components remain
related exactly as required by \(R_l\).

We then apply the induction hypothesis to the execution of the function body,
namely to the step above the line
\[
(\sigma_i^{(1)},\, T_1\,id_1=exp_1;\ \cdots\ ;\ T_t\,id_t=exp_t;\ Block)
\]
in the compositional semantics and its counterpart in the original semantics.
The induction hypothesis yields the matching step in the other semantics,
together with the required mempool difference. Since the push/pop operations
and the parameter declarations coincide on both sides, the outer function-call
rules \textsc{IFUNaa} and \textsc{IFaa} reconstruct exactly corresponding final
configurations. Therefore there exists \(C^{\mathsf{o}\prime}\) such that
\[
C^{\mathsf{o}} \to C^{\mathsf{o}\prime},
\qquad
\mathsf{Same}_l(i,C^{\mathsf{o}},C^{\mathsf{o}\prime}),
\qquad
\mathsf{MemDiff}^{\mathsf{o}}(C^{\mathsf{o}},C^{\mathsf{o}\prime},\Delta\Omega),
\]
where
\[
\Delta\Omega=(w_1\cup w_G,\ldots,w_n\cup w_G),
\]
and \(R_l(i,\gamma_i',C^{\mathsf{o}\prime})\) holds.

Conversely, suppose
\[
C^{\mathsf{o}} \to C^{\mathsf{o}\prime}
\]
is derived by rule \textsc{IFaa}, together with
\[
\mathsf{Same}_l(i,C^{\mathsf{o}},C^{\mathsf{o}\prime}),
\qquad
\mathsf{MemDiff}^{\mathsf{o}}(C^{\mathsf{o}},C^{\mathsf{o}\prime},\Delta\Omega).
\]
We again first observe that the configurations above the line satisfy the
corresponding \(R_l\) relation. Applying the induction hypothesis to the body
execution yields a matching local step in the compositional semantics with the
same relay information. Since the surrounding push/pop and parameter-binding
operations coincide, we obtain a component \(\gamma_i'\) such that
\[
\gamma_i \to_l^{?(w_1,\ldots,w_n,w_G)} \gamma_i'
\]
with
\[
w_j=\mathsf{LTx}(\Delta\Omega_j)\quad(1\le j\le n),
\qquad
w_G=\mathsf{GTx}(\Delta\Omega_j)
\]
for any \(j\), and \(R_l(i,\gamma_i',C^{\mathsf{o}\prime})\) follows.

\paragraph{Case \textsc{IFUNgg} (global).}
We show the global code-level bisimulation for rule \textsc{IFUNgg} in the
compositional semantics and rule \textsc{IFgg} in the original semantics.

Suppose \(R_g(\gamma_G,C^{\mathsf{o}})\) and
\[
\gamma_G \to_g^{?(w_1,\ldots,w_n,w_G)} \gamma_G'
\]
is derived by rule \textsc{IFUNgg}. This is the most involved case among all
representative rules, because the proof must carefully maintain the global
correspondence relation \(R_g\) across the intermediate configurations occurring
above the line.

Both \textsc{IFUNgg} and \textsc{IFgg} perform the same sequence of operations:
they create a new stack frame, assign the same scope and return marker,
initialize the formal parameters in the same way, execute the same function
body, and finally remove the top frame. The essential difficulty is that the
compositional semantics uses the single global stack \(M_G\), whereas the
original semantics uses one stack \(M_i^{\mathsf{o}}\) in each engine.

We first verify that the configurations above the line still satisfy the
required \(R_g\)-correspondence.

For the global state \(G\), this is immediate from the rule itself.
Before the body starts executing, the state component \(G\) is not modified by
the stack-extension step, so the \(G\) occurring in \(\sigma_G^{(1)}\) is still
the same as the global state in the corresponding configuration of the original
semantics. By the induction hypothesis applied to the execution of the function
body, the global state in \(\sigma_G^{(2)}\) corresponds to the updated global
state \(G'\) in the original semantics. The final pop step does not modify the
global storage either, so the \(G\) occurring in \(\sigma_G^{(3)}\) is still the
same updated \(G'\).

For the stack component, we again compare the intermediate configurations
above the line. By replacing the occurrences of \(M_G\) in the premises of
\textsc{IFUNgg} with the corresponding stacks \(M_i^{\mathsf{o}}\) of the
original semantics, we see that the stack-extension step constructs exactly the
same new frame on both sides. Hence the stack \(M_G^{(1)}\) in
\(\sigma_G^{(1)}\) corresponds to the common stack \(M_i^{\mathsf{o}(1)}\) in
the original semantics. Next, by the induction hypothesis for the execution of
the function body, the stack \(M_G^{(2)}\) in \(\sigma_G^{(2)}\) corresponds to
the common stack \(M_i^{\mathsf{o}(2)}\) in the original semantics. Finally,
the pop step removes the top frame in the same way on both sides, so
\(M_G^{(3)}\) corresponds to \(M_i^{\mathsf{o}(3)}\). Since the conclusion of
\textsc{IFgg} gives \(\sigma_i^{\mathsf{o}}=\sigma_i^{\mathsf{o}(3)}\), it
follows that the final stack component is again aligned. Thus the intermediate
and final configurations all satisfy the required \(R_g\)-correspondence.

Once these intermediate \(R_g\)-relations are established, the rest of the proof
is standard. We apply the induction hypothesis to the execution of the function
body above the line. This yields a matching step in the original semantics with
the same relay information and the corresponding mempool difference. Since the
surrounding push/pop operations and parameter initialization coincide on both
sides, we obtain a configuration \(C^{\mathsf{o}\prime}\) such that
\[
C^{\mathsf{o}} \to C^{\mathsf{o}\prime},
\qquad
\mathsf{Same}_g(C^{\mathsf{o}},C^{\mathsf{o}\prime}),
\qquad
\mathsf{MemDiff}^{\mathsf{o}}(C^{\mathsf{o}},C^{\mathsf{o}\prime},\Delta\Omega),
\]
where
\[
\Delta\Omega=(w_1\cup w_G,\ldots,w_n\cup w_G),
\]
and \(R_g(\gamma_G',C^{\mathsf{o}\prime})\) holds.

Conversely, suppose
\[
C^{\mathsf{o}} \to C^{\mathsf{o}\prime}
\]
is derived by rule \textsc{IFgg}, together with
\[
\mathsf{Same}_g(C^{\mathsf{o}},C^{\mathsf{o}\prime}),
\qquad
\mathsf{MemDiff}^{\mathsf{o}}(C^{\mathsf{o}},C^{\mathsf{o}\prime},\Delta\Omega).
\]
The proof proceeds symmetrically. We first verify that the configurations above
the line satisfy the required \(R_g\)-correspondence: the global storage remains
aligned before and after the stack operations, and the common engine stacks
\(M_i^{\mathsf{o}(1)}, M_i^{\mathsf{o}(2)}, M_i^{\mathsf{o}(3)}\) in the
original semantics correspond to the single stacks
\(M_G^{(1)}, M_G^{(2)}, M_G^{(3)}\) in the compositional semantics. We then
apply the induction hypothesis to the body execution, obtaining a matching
global step in the compositional semantics with the same relay information.
Since the surrounding push/pop and parameter-binding operations coincide, there
exists \(\gamma_G'\) such that
\[
\gamma_G \to_g^{?(w_1,\ldots,w_n,w_G)} \gamma_G'
\]
with
\[
w_i=\mathsf{LTx}(\Delta\Omega_i)\quad(1\le i\le n),
\qquad
w_G=\mathsf{GTx}(\Delta\Omega_i)
\]
for any \(i\), and \(R_g(\gamma_G',C^{\mathsf{o}\prime})\) follows.

\paragraph{Case \textsc{RELAYal} (local).}
We show the local code-level bisimulation for rule \textsc{RELAYal} in the
compositional semantics and rule \textsc{RELa} in the original semantics.

Suppose \(R_l(i,\gamma_i,C^{\mathsf{o}})\) and
\[
\gamma_i \to_l^{?(w_1,\ldots,w_n,w_G)} \gamma_i'
\]
is derived by rule \textsc{RELAYal}. The two rules perform the same local
state change, namely no change to the active state component, and differ only
in how the emitted relay information is represented. In the compositional
semantics, the emitted label is
\[
w_r=\{(address,j,idf(exp_1,\ldots,exp_t))\}, \qquad
w_\iota=\emptyset \text{ for } \iota\neq r, \qquad
w_G=\emptyset,
\]
while in the original semantics the same relay transaction is added to the
mempool of engine \(r\). Hence the required mempool difference is immediate:
\[
\Delta\Omega_r = w_r, \qquad
\Delta\Omega_\iota = \emptyset \text{ for } \iota\neq r.
\]
Therefore there exists \(C^{\mathsf{o}\prime}\) such that
\[
C^{\mathsf{o}} \to C^{\mathsf{o}\prime},
\qquad
\mathsf{Same}_l(i,C^{\mathsf{o}},C^{\mathsf{o}\prime}),
\qquad
\mathsf{MemDiff}^{\mathsf{o}}(C^{\mathsf{o}},C^{\mathsf{o}\prime},\Delta\Omega),
\]
and \(R_l(i,\gamma_i',C^{\mathsf{o}\prime})\) holds directly.

Conversely, suppose
\[
C^{\mathsf{o}} \to C^{\mathsf{o}\prime}
\]
is derived by rule \textsc{RELa}, together with
\[
\mathsf{Same}_l(i,C^{\mathsf{o}},C^{\mathsf{o}\prime}),
\qquad
\mathsf{MemDiff}^{\mathsf{o}}(C^{\mathsf{o}},C^{\mathsf{o}\prime},\Delta\Omega).
\]
By inspection of the rule, the only effect is the insertion of one address relay
transaction into the appropriate engine mempool. Thus
\[
w_r=\mathsf{LTx}(\Delta\Omega_r), \qquad
w_\iota=\emptyset \text{ for } \iota\neq r, \qquad
w_G=\emptyset,
\]
and there exists \(\gamma_i'\) such that
\[
\gamma_i \to_l^{?(w_1,\ldots,w_n,w_G)} \gamma_i'
\]
by \textsc{RELAYal}. The relation \(R_l(i,\gamma_i',C^{\mathsf{o}\prime})\)
then follows immediately.

\paragraph{Case \textsc{RELAYgl1} (local).}
We show the local code-level bisimulation for rule \textsc{RELAYgl1} in the
compositional semantics and rule \textsc{RELg1} in the original semantics.

Suppose \(R_l(i,\gamma_i,C^{\mathsf{o}})\) and
\[
\gamma_i \to_l^{?(w_1,\ldots,w_n,w_G)} \gamma_i'
\]
is derived by rule \textsc{RELAYgl1}. Again, the step does not change the
active state component; the only observable effect is the generated relay
information. In the compositional semantics, the label contains
\[
w_G=\{(global,idf(exp_1,\ldots,exp_t))\}, \qquad
w_\iota=\emptyset \text{ for all local mempool components } \iota.
\]
In the original semantics, the corresponding global relay transaction is added
to every engine mempool. Therefore the required mempool difference is exactly
\[
\Delta\Omega_\iota = w_G \qquad (1\le \iota \le n).
\]
Hence there exists \(C^{\mathsf{o}\prime}\) such that
\[
C^{\mathsf{o}} \to C^{\mathsf{o}\prime},
\qquad
\mathsf{Same}_l(i,C^{\mathsf{o}},C^{\mathsf{o}\prime}),
\qquad
\mathsf{MemDiff}^{\mathsf{o}}(C^{\mathsf{o}},C^{\mathsf{o}\prime},\Delta\Omega),
\]
and \(R_l(i,\gamma_i',C^{\mathsf{o}\prime})\) holds directly.

Conversely, suppose
\[
C^{\mathsf{o}} \to C^{\mathsf{o}\prime}
\]
is derived by rule \textsc{RELg1}, together with
\[
\mathsf{Same}_l(i,C^{\mathsf{o}},C^{\mathsf{o}\prime}),
\qquad
\mathsf{MemDiff}^{\mathsf{o}}(C^{\mathsf{o}},C^{\mathsf{o}\prime},\Delta\Omega).
\]
By inspection of the rule, the only effect is the insertion of the same global
relay transaction into all engine mempools. Thus
\[
w_\iota=\mathsf{LTx}(\Delta\Omega_\iota)=\emptyset \qquad (1\le \iota \le n),
\]
and
\[
w_G=\mathsf{GTx}(\Delta\Omega_\iota)
\]
for any \(\iota\), which is well defined by the agreement of global transactions
in mempools. Hence there exists \(\gamma_i'\) such that
\[
\gamma_i \to_l^{?(w_1,\ldots,w_n,w_G)} \gamma_i'
\]
by \textsc{RELAYgl1}, and \(R_l(i,\gamma_i',C^{\mathsf{o}\prime})\) follows
immediately.

\paragraph{Remaining cases.}
The remaining code-level rules can be handled similarly.
Rules in Class~B-E that are not shown explicitly are entirely analogous to the representative assignment and function-call cases presented above.
Rules in Class~A are purely structural and follow directly from the induction hypothesis on the corresponding sub-derivations.

\paragraph{Conclusion of the proof.}
Since every code-level rule of the compositional semantics has a corresponding rule in the original semantics and vice versa, and each case preserves the relations \(R_l\) or \(R_g\) together with the required mempool differences, the statements of Theorems~\ref{lem:local-code-bisim} and \ref{lem:global-code-bisim} follow by induction on the derivation of the transition step.

\end{proof}

\begin{proof}[Proof of Theorem~\ref{thm:transaction-bisim}]
The proof reduces transaction-level execution to the execution of the
corresponding component after removing the transaction wrapper.
We prove the two directions separately.

\smallskip
\noindent
\textbf{(i) From the original semantics to the compositional semantics.}
Assume \(R_T(C^{\mathsf{o}},C)\) and \( C^{\mathsf{o}} \to C^{\mathsf{o}\prime} \) by a transaction rule of the original semantics.

\smallskip
\noindent
\emph{Case 1.1: the step is derived by \textsc{IFUNtg}.}
Write the original configuration \(C^{\mathsf{o}}\) explicitly and compute the corresponding compositional configuration \(C\) using \(Enc_T\).
Since the transaction is global, to obtain the desired system step in the
compositional semantics it suffices, by rule \textsc{G}, to prove that the
global component \(\gamma_G\) performs the corresponding code-level step.

Inspecting the premises of the rule, all information above the line coincides in the two semantics after removing the outer transaction wrapper. Therefore the problem reduces directly to the corresponding code-level step of the global component. This follows immediately from the global code-level bisimulation lemma.

\smallskip
\noindent
\emph{Case 1.2: the step is derived by \textsc{IFUNta} or \textsc{IFUNts}.}
Again write the configuration explicitly and compute the corresponding
configuration of the compositional semantics using \(Enc_T\).
Since the transaction is local, by rule \textsc{L} it suffices to show that the corresponding local component \(\gamma_i\) performs the matching code-level step.

After removing the outer transaction wrapper, the premises above the line of the rules coincide in the two semantics. Hence the required local component step follows directly from the local code-level bisimulation lemma.

\smallskip
\noindent
\textbf{(ii) From the compositional semantics to the original semantics.}
Assume \(R_T(C^{\mathsf{o}},C)\) and
\( C \sysstep C'\).

\smallskip
\noindent
\emph{Case 2.1: the step is a system-level global step.}
Write the compositional configuration \(C\) explicitly and compute the
corresponding original configuration \(C^{\mathsf{o}}\) using \(Dec_T\).
By inversion on the system-level rule, the step must have been derived by rule \textsc{G}. Therefore the global component \(\gamma_G\) performs a code-level step.

Inspecting the premises above the line shows that all information coincides
with the corresponding configuration of the original semantics once the
transaction wrapper is removed. The required step in the original semantics
then follows immediately from the global code-level bisimulation lemma.

\smallskip
\noindent
\emph{Case 2.2: the step is a system-level local step.}
Write the configuration explicitly and compute the corresponding original
configuration using \(Dec_T\). By inversion on the system rule, the step must have been derived by rule \textsc{L}, so some local component \(\gamma_i\) performs a code-level step.

After removing the transaction wrapper, the premises above the line coincide with those of the original semantics. The required step of the original semantics therefore follows directly from the local code-level bisimulation lemma.

\smallskip
In all cases the resulting configurations again satisfy the relation \(R_T\).
\end{proof}

\end{document}